\documentclass[11pt]{article}
\usepackage[pdftex]{graphicx,color}
\usepackage{amsmath,amssymb,enumerate,fullpage,epsfig,multirow}
\usepackage{epsfig,epstopdf}
\usepackage{dsfont}
\usepackage{cases}
\usepackage[numbers]{natbib}

\newenvironment{proof}{\noindent\emph{Proof\ }}{\hspace*{\fill}$\Box$\medskip}

\newtheorem{theorem}{Theorem}[section]
\newtheorem{lemma}{Lemma}[section]


\newcommand{\vect}[1]{\ensuremath{\mathbf{#1}}}

\title{Smooth Inequalities and Equilibrium Inefficiency in Scheduling Games}
\author{
Johanne Cohen
\thanks{PRiSM, Universit\'e de Versailles St-Quentin-en-Yvelines, France} 
\and
Christoph D\"urr 
\thanks{LIP6, Universit\'e Pierre et Marie Curie, France}
\and
Nguyen Kim Thang
\thanks{IBISC, Universit\'e Evry Val d'Essonne, France}
}


\begin{document}

\maketitle

\begin{abstract}
  We  study coordination  mechanisms for  Scheduling Games (with
  unrelated machines).   In these
  games, each job  represents a player, who needs  to choose a machine
  for its execution,  and intends to complete earliest  possible. In an
  egalitarian  objective, the  social cost  would be  the  maximal job
  completion  time,  i.e.   the  makespan  of  the  schedule.   In  an
  utilitarian  objective,  the  social   cost  would  be  the  average
  completion time.   Instead of studying  one of those  objectives,  
  we focus on  the  more general class of $\ell_k$-norm (for
  some parameter  $k$) on job  completion times as social  cost.  This
  permits to balance overall quality of service and fairness.  In this
  setting, a coordination mechanism is a fixed policy, which specifies
  how jobs assigned to a  same machine will be scheduled.  This policy
  is known to the players and influences therefore their behavior.

  Our goal is  to design scheduling policies that  always admit a pure Nash
  equilibrium  and  guarantee  a   small  price  of  anarchy  for  the
  $\ell_{k}$-norm  social cost.  We  consider policies  with different
  amount   of  knowledge   about  jobs:   \emph{non-clairvoyant}  (not
  depending on the job processing times), \emph{strongly-local} (where
  the  schedule  of  machine  $i$  depends only  on  processing  times
  for this machine $i$ and jobs $j$ assigned  to $i$) and 
  \emph{local} (where the schedule of machine $i$  depends only on processing times 
  for all machines  $i'$ and jobs $j$ assigned  to $i$).  The analysis
  relies on the  \emph{smooth argument} together with adequate
  inequalities, called  \emph{smooth inequalities}. With  this unified
  framework, we are able to prove the following results.

  First,  we study the inefficiency in $\ell_{k}$-norm social  costs of 
  a strongly-local policy \textsf{SPT} that schedules the jobs 
  non-preemptively in order of increasing processing times
  and a non-clairvoyant policy \textsf{EQUI}
  that schedules the jobs in parallel using time-multiplexing, assigning 
  each job an equal fraction of CPU time. 
  We show  that the  price  of anarchy of
  policy \textsf{SPT}  is $O(k)$.  We  also prove a  lower bound of
  $\Omega(k/\log   k)$    for   all   deterministic,   non-preemptive,
  strongly-local and non-waiting policies (non-waiting policies
  produce schedules without idle times).  These results ensure that
  \textsf{SPT}  is  \emph{close  to   optimal}  with  respect  to  the
  \emph{class}  of $\ell_{k}$-norm social  costs.  Moreover,  we prove
  that the non-clairvoyant policy  \textsf{EQUI}  has  price  of
  anarchy $O(2^{k})$.

  Second, we consider  the makespan ($\ell_{\infty}$-norm) social cost
  by making connection within the $\ell_{k}$-norm functions.  We revisit
  some  local policies and  provide simpler,  unified proofs  from the
  framework's point  of view. With  the highlight of the  approach, we
  derive a local policy  \textsf{Balance}.  This policy guarantees a price of
  anarchy  of $O(\log  m)$, which  makes it  the currently best
   known policy among the anonymous local policies that always
  admit a pure Nash equilibrium.  
\end{abstract}

%
%
\section{Introduction}

With the  development of the Internet,  large-scale systems consisting
of   autonomous  decision-makers  (players)   become  more   and  more
important.  The rational behavior of players who compete for the usage
of shared  resources generally leads  to an unstable  and inefficient
outcome.  This creates a  need for \emph{resource usage policies} that
guarantee stable and near-optimal outcomes.

From a game theoretical point of view, stable outcomes are captured by
the concept of
\emph{Nash equilibria}.  Formally, in a game with $n$ players, each player $j$
chooses a strategy $x_{j}$ from a  set $S_{j}$ and this induces a cost
$c_{j}(\vect{x})$  for  player  $j$  depending all  chosen  strategies
$\vect{x}$.  A strategy profile $\vect{x} = (x_{1},\ldots,x_{n})$ is a
\emph{pure Nash equilibrium} if no  player can decrease its cost by an
unilateral    deviation,    i.e.,    $c_{j}(x'_{j},    x_{-j})    \geq
c_{j}(\vect{x})$ for  every player $j$  and $x'_{j} \in  S_{j}$, where
$x_{-j}$ denotes the strategies selected by players different from $j$.

The \emph{better-response dynamic} is the process of repeatedly 
choosing an arbitrary player that can improve its cost and let it 
take a better strategy while other player strategies remain unchanged.  
It is desirable that in a game the better-response dynamic converges 
to a Nash equilibrium as it is a natural way that
selfish behavior leads the game to a stable outcome.
A \emph{potential game} is a game in which for 
any instance, the better-response dynamic always 
converges~\cite{MondererShapley:Potential-Games}.

A  standard measure  of  inefficiency is  the  \emph{price of  anarchy
  (PoA)}.  Given  a game  with an objective  function and a  notion of
equilibrium  (e.g pure  Nash  equilibrium),  the PoA  of  the game  is
defined as  the ratio between the  largest cost of  an equilibrium and
the  cost  of  an optimal  profile,  which  is not necessarily an
equilibrium.   The  PoA  captures   the  worst-case  paradigm  and  it
guarantees the efficiency of every equilibrium.

The  social cost  of a  game is  an objective  function  measuring the
quality of  strategy profiles.  In  the literature there are  two main
extensively-studied  objective  functions:  (i) the  \emph{utilitarian
  social  cost}  is  the   total  individual  costs;  while  (ii)  the
\emph{egalitarian social  cost} is  the maximum individual  cost.  The
two  objective functions  are included  in a  general class  of social
costs:  the class  of $\ell_{k}$  norms of  the individual  costs, with
utilitarian  and the  egalitarian  social costs  corresponding to  the
cases $k  = 1$  and $k =  \infty$, respectively.   There is a  need to
design policies that  guarantee the efficiency (e.g the  PoA) of games
under some  specific objective function. Moreover, it would be interesting
to come  up with a  policy, that would be  efficient for every
social costs from this class. 

\subsection{Coordination        Mechanisms        in        Scheduling
  Games} \label{sec:coor-def}

In a  scheduling game, there are  $n$ jobs and $m$  unrelated machines. Each job
needs  to  be scheduled  on  exactly  one  machine.  We  consider  the
unrelated  parallel  machine  model,   where  each  machine  could  be
specialized for a different type of jobs. In this general setting, the
processing  time of job  $j$ on  machine $i$  is some  given arbitrary
value $p_{ij}  > 0$.  A  strategy profile $\vect{x} =  (x_{1}, \ldots,
x_{n})$ is an assignment of jobs to machines, where $x_{j}$ denotes the
machine (strategy) of job $j$ in the profile.  The \emph{cost} $c_{j}$
of  a job  $j$  is its  completion  time and  every job  strategically
chooses a machine to minimize the  cost.  In the game, we consider the
social  cost  as the  $\ell_{k}$-norm  of  the  individual costs.   The
\emph{social   cost}   of  profile   $\vect{x}$   is  $C(\vect{x})   =
\left(\sum_{j}  c_{j}^{k}\right)^{1/k}$.  

 
The traditional  $\ell_{1}, \ell_{\infty}$-norms represent  the total completion time  
and the makespan, respectively. Both objectives are natural.
Minimizing  the   total  completion  time  guarantees a quality of service
while minimizing the makespan ensures the fairness of schedule.
Unfortunately,  in   practice  schedules  which   optimize  the  total
completion  time  are not  implemented  due  to  a lack  of  fairness and
vice versa. Implementing a fair schedule is  one of the highest priorities in most
systems~\cite{Tanenbaum:Modern-Operating}.   A  popular and  practical
method  to enforce  the  fairness of  a  schedule is  to optimize  the
$\ell_{k}$-norm of completion times  for some fixed $k$, which usually
is chosen  as $k$ small constant.   By  optimizing the  $\ell_{k}$-norm of
completion time, one balances overall quality of service and fairness, 
which is generally desirable.
So the system takes into account a trade-off between quality of service and 
fairness by optimizing the $\ell_{k}$-norm  of completion 
time~\cite{SilberschatzGalvin:Operating-System,Tanenbaum:Modern-Operating}.

A   \emph{coordination  mechanism}  is   a  set   of  \emph{scheduling
  policies}, one for each machine, that determine how to schedule the
jobs assigned to a machine.
The idea is to connect the individual cost
to the social  cost, in such a way that the  selfishness of the agents
will lead to equilibria with small social cost.  We distinguish
between   \emph{local,   strongly-local}  and   \emph{non-clairvoyant}
policies.  These  policies are classified  in the decreasing  order of
the amount  of information  that ones could  use for  their decisions.
Formally, let $\vect{x} = (x_{1}, \ldots, x_{n})$ be a profile.
\begin{itemize}
\item A  policy is \emph{local} if  the scheduling of  jobs on machine
  $i$ depends only on the processing times of jobs assigned to the machine,
  i.e.,  $\{p_{i'j}: x_{j} = i, 1 \leq i' \leq
  m\}$.
\item A policy  is \emph{strongly-local} if the policy  of machine $i$
  depends only  on the processing times  for this machine  $i$ for all
  jobs assigned to $i$, i.e., $\{p_{ij}: x_{j} = i\}$.
\item A policy is \emph{non-clairvoyant}  if the scheduling of jobs on
  machine $i$ does  not depend on any processing  time.  Such a policy
  actually assigns  CPU time slots  to jobs, and is  informed whenever
  some  job  completes.   The  resulting schedule  then  reflects  the
  processing times of its jobs,  even though the policy is not aware
  of them prior to the job completions.
\end{itemize}
In addition, a policy is \emph{anonymous} if it does not use any
\emph{global} ordering of jobs or any \emph{global} job identities. 
Note that for any deterministic policy, \emph{local} job identities are necessary 
as a machine may need such information in order to break ties
(a job may have different identities on different machines).
Moreover, we call a policy \emph{non-waiting} if  the schedule contains no
idle time between job executions.


Instead  of  specifying the  actual  schedule,  we  rather describe  a
scheduling  policy  as a  function,  mapping  every  job $j$  to  some
completion time $c_j(\vect x)$.  Such a policy is said \emph{feasible}
if for  any profile $\vect{x}$, there  exists a schedule  where job $j$
completes at time  $c_j(\vect x)$. Formally, for any  job $j$, we must
have $c_{j}(\vect{x})  \geq \sum_{j'} p_{ij'}$  where the sum  is take
over  all  jobs  $j'$  with  $x_j=x_{j'}$,  and  $c_{j'}(\vect{x})  \leq
c_{j}(\vect{x})$.  Certainly, any  designed deterministic policy needs
to be feasible.

\subsection{Overview \& Contributions}
Recently,   \citet{Roughgarden:Intrinsic-robustness}   developed  the
\emph{smoothness argument}, a unifying method to show upper bounds the
PoA for  utilitarian games.  This canonical method  is elegant  in its
simplicity and  its power.  Here we  give a brief  description of this
argument.

A cost-minimization game with the total cost objective 
$C(\vect{x}) = \sum_{j} c_{j}(\vect{x})$ is $(\lambda,\mu)$-\emph{smooth} 
if for every profile $\vect{x}$ and $\vect{x}^{*}$,
\begin{equation*}	\label{eq:smooth}
\sum_{j} c_{j}(x^{*}_{j},x_{-j}) \leq \mu \sum_{j} c_{j}(\vect{x}) + \lambda \sum_{j} c_{j}(\vect{x}^{*})
\end{equation*}
The   smooth  argument~\cite{Roughgarden:Intrinsic-robustness}  states
that the  robust price of anarchy  (including the PoA  of pure, mixed,
correlated equilibria, etc) of a cost-minimization game is bounded by
$$
\inf \left\{ \frac{\lambda}{1 - \mu}: \lambda \geq 0, \mu < 1, \text{ the game is } (\lambda,\mu) \text{-smooth} \right\}.
$$

We will make use of this argument  to settle the equilibrium inefficiency
in scheduling games.  We will prove the robust PoA
by applying  the smooth argument  to the game with  $C^{k}(\vect{x}) =
\sum_{j}    c_{j}^{k}(\vect{x})$    where    $C(\vect{x})$   is    the
$\ell_{k}$-norm social  cost of Scheduling Games.  The main difficulty
in applying the smooth argument to Scheduling Games has arisen from 
the fact that jobs on the same machine have different costs, 
which is in contrast to Congestion Games where players incurs
the same cost at the same resource.  The key technique in
this   paper  is   a  system   of  inequalities,   called  \emph{smooth inequalities}, 
that are useful to prove the smoothness of the game. With the inequalities,
we are able to analyze systematically and in unified manner 
the PoA of the game under different policies.

Our contributions are the following:

\begin{enumerate}
\item We  study the  equilibrium inefficiency for  the $\ell_{k}$-norm
  objective  function. We consider a non-clairvoyant policy \textsf{EQUI} 
  that schedules the jobs in parallel using time-multiplexing, assigning 
  each job an equal fraction of CPU time; and a strongly-local policy \textsf{SPT}
  that schedules the jobs non-preemptively in order of increasing processing times
  (with a deterministic tie-breaking rule for each machine)\footnote{Formal 
  definitions of \textsf{EQUI} and \textsf{SPT} are given in Section~\ref{sec:lk}}. 
  We prove  that the  PoA of  the game  under the
  non-clairvoyant  policy  \textsf{EQUI}   is  at  most  $O(2^{k})$.
  Besides, the PoA of  the game under the deterministic strongly-local
  policy \textsf{SPT} is at  most $O(k)$.  Moreover, any deterministic
  non-preemptive, non-waiting  and strongly-local policy has  a PoA at
  least  $\Omega(k/\log k)$, which  is close  to the  PoA of  the game
  under  the  \textsf{SPT}  policy.   Hence, for  any  $\ell_{k}$-norm
  social  cost, \textsf{SPT}  is  close to optimal among
  deterministic non-preemptive, non-waiting, strongly-local  policy.  
  (The  cases  $k=1$  and $k  = \infty$  are  confirmed in  \cite{ColeCorreaGkatzelis:Inner-Product}
  and   \cite{AzarJainMirrokni:Almost-Optimal-Coordination,ImmorlicaLiMirrokni:Coordination-mechanisms},         respectively.)
	%
  If one considers theoretical evidence to classify algorithms for practical use 
  then \textsf{SPT} is a good candidate due to its simplicity and theoretically
  guaranteed performance on any combination of the quality and the fairness of 
  schedules.   
  
\item We study the equilibrium inefficiency for the makespan objective
  function (e.g., $\ell_{\infty}$-norm) for local  policies by making connection
  between $\ell_{k}$-norm functions. First, we
  revisit     policies      \textsf{BCOORD,     CCOORD}     introduced
  in~\cite{Caragiannis:Efficient-coordination-mechanisms}. We  give
  unified and simpler  proofs based on the smooth  arguments.  
  With the highlight of  this approach, we derive a new
  policy \textsf{Balance} (definition is given is Section~\ref{sec:makespan}).   
  The game under that  policy always admits
  Nash  equilibrium  and  induces  the  PoA of  $O(\log  m)$  ---  the
  currently best performance among anonymous local policies 
  that always possess pure Nash equilibria.  

\begin{figure}[ht]
  \centering
  \begin{tabular}{|l|l|l|l|}
    \hline
    \emph{Objective}  & \emph{Policy}  & \emph{Pure  Nash  equilibria} &  \emph{PoA}  \\
    \hline
  $\ell_k$-norm &
    \textsf{EQUI} (non-clairvoyant) & potential game & $O(2^k)$ \\ \cline{2-4}
    & \textsf{SPT} (strongly-local) & potential game & $O(k)$ \\ \cline{1-4}
   $\ell_\infty$-norm & 
     \textsf{Balance} (local) & potential game & $O(\log m)$ \\  \cline{1-4}
  \end{tabular}
  \caption{Main contributions of the paper.}
  \label{fig:contrib}
\end{figure}
	
\end{enumerate}

Our results naturally extend to the case when jobs have weights and the
objective   is  the  weighted   $\ell_k$-norm  of   completion  times,
i.e., $(\sum_j (w_j c_j(\vect x))^k)^{1/k}$.

\subsection{Related results}

The      smooth     argument      has      been     formalized      in
\cite{Roughgarden:Intrinsic-robustness}.   It   has  been  used  to
establish     tight     PoA     of     congestion
games~\cite{Rosenthal:A-Class-of-Games-Possessing}, 
a     fundamental    class     of     games.  
The argument is also applied to prove bounds on the PoA  of  weighted
congestion     games  \cite{BhawalkarGairingRoughgarden:Weighted-Congestion}.  
Subsequently, \citet{RoughgardenSchoppmann:Local-Smoothness}   have   extended   the
argument  to  prove tight  bounds  on  the  PoA of  atomic  splittable
congestion games for a large class of latencies.

Coordination   mechanisms   for   scheduling  games   was   introduced
in~\cite{ChristodoulouKoutsoupiasNanavati:Coordination-Mechanisms}
where  the makespan  ($\ell_{\infty}$-norm) objective  was considered.
For              the             non-clairvoyant             policies,
\citet{CohenDurrThang:Non-clairvoyant-Scheduling}   studied  the  game
under  various  policies and  derived  the  policy \textsf{EQUI}  that
always  admits  a  Nash  equilibrium  and has  an  optimal  PoA.   For
strongly-local                                                policies,
\citet{ImmorlicaLiMirrokni:Coordination-mechanisms}  gave a  survey on
the  existence   and  inefficiency  of  different   policies  such  as
\textsf{SPT}, \textsf{LPT}, \textsf{RANDOM}.  Some tight bounds on the
PoA       under       different       policies       were       given.
\citet{AzarJainMirrokni:Almost-Optimal-Coordination}   initiated   the
study on  local policies. They  designed a non-preemptive  policy with
PoA  of $O(\log  m)$ and  a preemptive  policy that  always  admits an
equilibrium and  guarantees a  PoA of $O(\log^{2}  m)$.  Subsequently,
\citet{Caragiannis:Efficient-coordination-mechanisms}   derived a non-anonymous
local  policy \textsf{ACOORD} and anonymous local policies \textsf{BCOORD}  
and \textsf{CCOORD} with PoA of  $O(\log m)$, $O(\log m/\log\log m)$  
and $O(\log^{2} m)$, respectively where the first and the last ones always admit a Nash equilibrium.
\citet{FleischerSvitkina:Preference-constrained-oriented}   showed   a
lower bound of $\Omega(\log  m)$ for all deterministic non-preemptive,
non-waiting local policies.

Recently,  \citet{ColeCorreaGkatzelis:Inner-Product} studied  the game
with total completion time ($\ell_{1}$-norm)  objective.  They considered strongly-local
policies  with  weighted jobs,  and  derived  a non-preemptive  policy
inspired by the Smith's rule  which has PoA $=4$. This bound is tight for 
deterministic non-preemptive non-waiting strongly-local policies. Moreover, some
preemptive  policies   are  also  designed   with  better  performance
guarantee.

\subsection{Organization} 
In Section~\ref{sec:smooth}, we state some smooth inequalities that will be
used   in    settling   the    PoA   for   different    policies.   In
Section~\ref{sec:lk}, we study the scheduling game with the
$\ell_{k}$-norm social cost.  We define the policies
\textsf{SPT} and \textsf{EQUI}, and prove their inefficiency.  We also
provide an lower bound on the PoA for any deterministic non-preemptive
non-waiting strongly-local policy.  In Section~\ref{sec:makespan}, we consider the
makespan  ($\ell_{\infty}$-norm) social cost  for local  policies.  
We   revisit   the   policies   \textsf{BCOORD}   and
\textsf{CCOORD}~\cite{Caragiannis:Efficient-coordination-mechanisms};
define and analyze the performance of policy \textsf{Balance}.
The proofs of all lemmas and theorems are either presented in the main corp
of the paper or given in the appendix.

\section{Smooth Inequalities}		\label{sec:smooth}

In this section, we show various inequalities that are useful for the analysis. 
Specifically, Lemma~\ref{lem:smooth-SPT} 
and Lemma~\ref{lem:smooth-EQUI} are applied directly to prove the PoA of
policies \textsf{SPT} and \textsf{EQUI}, respectively in Section~\ref{sec:lk}. 
The other lemmas are used to prove Lemma~\ref{lem:smooth-SPT},
Lemma~\ref{lem:smooth-EQUI} and theorems in Section~\ref{sec:makespan}.
 
First, we give a definition of \emph{Lambert $W$ function} that we use 
throughout the paper.  For each $y \in \mathbb{R}^{+}$, $W(y)$ is 
defined to be  solution of  the equation $xe^{x}  = y$.  Note that, 
$xe^{x}$  is increasing with respect to $x$, hence $W(\cdot)$ is increasing. 

\begin{lemma}
  \label{lem:smooth-simple}
Let $k$ be a positive integer.
Let $0< a(k) \leq 1$ be a function on $k$. Then, for any  $x, y > 0$, it holds that
$$
y(x+y)^{k} \leq \frac{k}{k+1}a(k) x^{k+1} + b(k) y^{k+1} 
$$
where $\alpha$ is some constant and
\begin{subnumcases}
 {\label{b} b(k) =} 
	 \Theta\left(\alpha^{k} \cdot \left(\frac{k}{\log ka(k)} \right)^{k-1}\right) & \text{if} 
		$\lim_{k\rightarrow\infty}(k-1)a(k) = \infty$,	\label{b1} \\
	 \Theta\left(\alpha^{k} \cdot k^{k-1}\right)	 & \text{if} $(k-1)a(k)$ \text{are bounded} $\forall k$, \label{b2}\\
	 \Theta\left(\alpha^{k} \cdot \frac{1}{ka(k)^{k}}\right) & \text{if}
		$\lim_{k\rightarrow\infty} (k-1)a(k) = 0$.	\label{b3} 
\end{subnumcases}
\end{lemma}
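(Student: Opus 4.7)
The plan is to combine convexity with a weighted AM--GM inequality to reduce the statement to a one-parameter bound, then optimize that parameter in the three asymptotic regimes of $(k-1)a(k)$. For the first step, convexity of $z\mapsto z^{k}$ gives, for any $\theta\in(0,1)$,
\[
(x+y)^{k} \;\leq\; \frac{x^{k}}{\theta^{k-1}} + \frac{y^{k}}{(1-\theta)^{k-1}},
\]
so, multiplying by $y$, we get $y(x+y)^{k}\leq yx^{k}/\theta^{k-1}+y^{k+1}/(1-\theta)^{k-1}$. For the second step, I would apply a weighted AM--GM (Young-type) inequality to the remaining product $yx^{k}$, with a new free parameter $\gamma>0$:
\[
yx^{k} \;\leq\; \frac{k\gamma}{k+1}\, x^{k+1} + \frac{1}{(k+1)\gamma^{k}}\, y^{k+1}.
\]
Choosing $\gamma = a(k)\,\theta^{k-1}$ makes the $x^{k+1}$-coefficient exactly $\frac{k}{k+1}a(k)$, and the whole argument then yields the inequality with
\[
b(k) \;=\; \frac{1}{(k+1)\, a(k)^{k}\, \theta^{k^{2}-1}} \;+\; \frac{1}{(1-\theta)^{k-1}}.
\]

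The remaining task is to minimize this expression over $\theta\in(0,1)$. Writing $\delta = 1-\theta$ and zeroing the derivative yields the stationarity condition $\delta = a(k)(1-\delta)^{k}$; approximating $(1-\delta)^{k}$ by $e^{-k\delta}$ converts it to $k\delta\, e^{k\delta} = k\, a(k)$, so the optimum is $\delta^{*} \approx W(k\, a(k))/k$ via the Lambert $W$ function. At this $\delta^{*}$ the identity $(1-\delta^{*})^{k}=\delta^{*}/a(k)$ collapses the first summand of $b(k)$ to $(1-\delta^{*})/[(k+1)(\delta^{*})^{k}]$, after which a direct calculation shows that the ratio between the two summands is exactly $(k+1)\delta^{*}/(1-\delta^{*}) \approx W(k\, a(k))$. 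Thus the relative size of the two summands is governed entirely by $W(k\, a(k))$.

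The three subcases follow by tracking which summand dominates. In case~\eqref{b1}, $W(k\, a(k))\to\infty$ and the second summand $1/(\delta^{*})^{k-1}=(k/W(k\, a(k)))^{k-1}\sim (k/\log(k\, a(k)))^{k-1}$ is dominant. In case~\eqref{b2}, $W(k\, a(k))=\Theta(1)$, so $\delta^{*}=\Theta(1/k)$ and both summands are $\Theta(k^{k-1})$. In case~\eqref{b3}, $W(k\, a(k))\to 0$ with $W(k\, a(k))\sim k\, a(k)$, so $\delta^{*}\sim a(k)$ is tiny and the first summand, which evaluates to $\Theta(1/(k\, a(k)^{k}))$, is dominant. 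The main obstacle I expect is to absorb into a clean $\alpha^{k}$ slack all the sub-exponential correction factors that arise from approximations such as $(1-W/k)^{k^{2}-1}\approx e^{-kW}$ (whose second-order term contributes a factor $e^{-W^{2}/2}$) and $W(k\, a(k))\approx\log(k\, a(k))$, verifying that these corrections are of order $\alpha^{k}$ for some absolute constant $\alpha$ uniformly over the three regimes, and handling carefully the transitions at the boundaries between them.
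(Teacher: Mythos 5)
Your argument is correct and reaches the same Lambert-$W$ characterization as the paper, but by a genuinely different decomposition. The paper substitutes $z = x/y$ and directly analyzes $f(z) := \frac{k}{k+1}a(k)z^{k+1} - (1+z)^k + b(k)$: it computes the unique critical point $z_0$ of $f'(z)=0$, sets $b(k) = (1+z_0)^{k-1}(1+z_0/(k+1))$ so that $f(z_0)=0$, and then writes $z_0 = (k-1)/g(k)$ to sandwich $g(k)$ between $W((k-1)a(k))$ and $2W((k-1)a(k)/2)$. This gives the \emph{exact} minimal $b(k)$, whose asymptotics are then read off in the three regimes. You instead chain two textbook inequalities --- the convexity bound $(x+y)^k \leq x^k/\theta^{k-1} + y^k/(1-\theta)^{k-1}$ followed by a Young bound on $yx^{k}$ --- and then optimize the free parameter $\theta$. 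The stationarity condition $\delta = a(k)(1-\delta)^k$ (with $\delta = 1-\theta$) again collapses to a Lambert $W$ expression, and your bookkeeping of which of the two summands of $b(\theta^*)$ dominates, tracked through the quantity $W(ka(k))$, produces the same three regimes. Both computations check out. The trade-off: your route is more modular and stays inside standard convexity/Young machinery, and makes the two ``sources'' of $b(k)$ explicit, but it only upper-bounds the optimal $b(k)$ rather than pinning it down exactly; this is enough, since the lemma is used only to exhibit \emph{some} $b(k)$ of the stated order. The approximation concerns you flag at the end --- replacing $(1-\delta)^{k^2-1}$ by $e^{-\delta(k^2-1)}$ and $W$ by $\log$, and absorbing the residual sub-exponential factors into the $\alpha^k$ slack --- are exactly the same delicate points the paper's own proof has to navigate; since $k\delta = W(ka(k)) = O(\log k)$ in case \eqref{b1}, $\Theta(1)$ in \eqref{b2}, and $o(1)$ in \eqref{b3}, all corrections are indeed $e^{o(k)}$ and hence absorbable.
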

\begin{proof}
Let $f(z) := \frac{k}{k+1}a(k) z^{k+1} - (1 + z)^{k} + b(k)$. 
To show the claim, we equivalently prove that $f(z) \geq 0$ for all $z > 0$. 

We have $f'(z) = ka(k)z^{k} - k(1+z)^{k-1}$. Let $z_{0}$ be the unique
positive root of $f'(z) = 0$.  Function $f$ is decreasing in $(0,z_{0})$ and 
increasing in $(z_{0}, +\infty)$, so $f(z) \geq f(z_{0})$ for all $z > 0$. 
Hence, by choosing 
$$
b(k) = \Big | \frac{k}{k+1}a(k) z_{0}^{k+1} - (1 + z_{0})^{k} \Big |
= (1+z_{0})^{k-1}\Big(1+\frac{z_{0}}{k+1}\Big)
$$
it follows that $f(z) \geq 0 ~\forall z>0$.

We study the positive root $z_{0}$ of equation 
\begin{equation}	\label{eq:smooth-simple-derivative}
a(k) z^{k} - (1+z)^{k-1} = 0
\end{equation}
Note that $f'(1) = a(k) - 2^{k-1} < 0$ since $0< a(k) \leq 1$. 
Thus, $z_{0} > 1$.  For the sake of simplicity, we
define the function $g(k)$ such that $z_{0} = \frac{k-1}{g(k)}$ where
$0 < g(k) < k-1$.  Equation~(\ref{eq:smooth-simple-derivative}) is
equivalent to
\begin{equation*} \label{eq:majoration}
\left(1 + \frac{g(k)}{k-1}\right)^{k-1}g(k) = (k-1)a(k)
\end{equation*}
Note that $e^{w/2} < 1 + w < e ^{w}$ for $w \in (0,1)$.  For $w := \frac{g(k)}{k-1}$,
we obtain the following upper and lower bounds for the term  $(k-1)a(k)$:
 
\begin{equation} \label{eq:LambertW}
e^{g(k)/2}g(k) < (k-1)a(k) < e^{g(k)}g(k)
\end{equation}

By definition of the Lambert $W$ function and
Equation~(\ref{eq:LambertW}), we get that
\begin{equation} \label{eq:encadrementG}
W\left((k-1)a(k)\right) < g(k) < 2W\left(\frac{(k-1)a(k)}{2}\right)
\end{equation}


First, consider  the case where  $\lim_{k\rightarrow\infty}(k-1)a(k) =
\infty$.  The asymptotic sequence for $W(x)$ as $x \to +\infty$ is the
following:    $W(x)=\ln    x-\ln   \ln    x+\frac{\ln\ln    x}{\ln
  x}+O\left(\left(\frac{\ln\ln x}{\ln x}\right)^2\right)$.  So, for large enough $k$, $W((k-1)a(k)) = \Theta(\log((k-1)a(k)))$.  Since
$z_{0} =  \frac{k-1}{g(k)}$, from Equation~(\ref{eq:encadrementG}), we
get $z_{0}  = \Theta\left( \frac{k}{\log (ka(k))}  \right)$.  This
guarantees $f(z_{0}) \geq 0$ and $b(k) =
\Theta\left(\alpha^{k}\cdot\left(\frac{k}{\log                   ka(k)}
  \right)^{k-1}\right)$ for some constant $\alpha$.

Second,  consider  the  case  where  $(k-1)a(k)$ is  bounded  by  some
constants.  So by~(\ref{eq:encadrementG}), we have $g(k) = \Theta(1)$.
Therefore $z_{0} = \Theta(k)$.   Which again implies $f(z_{0}) \geq 0$
and  $b(k)  =  \Theta\left(\alpha^{k}\cdot  k^{k-1}\right)$  for  some
constant $\alpha$.

Third, we consider the case where $\lim_{k\rightarrow\infty}(k-1)a(k) = 0$. 
We focus on the Taylor series $W_0$ of $W$ around 0.
It can be found using the Lagrange inversion and is given by
$$
W_{0}(x) = \sum_{i=1}^{\infty}\frac{(-i)^{i-1}}{i!}x^{i} = x - x^{2} + O(1)x^{3}.
$$
Thus, for $k$ large  enough $g(k) = \Theta((k-1)a(k))$.  Hence, $z_{0}
=  \Theta(1/a(k))$.  Once  again this  implies $f(z_{0})  \geq  0$ and
$b(k)   =  \Theta\left(\alpha^{k}\cdot\frac{1}{ka(k)^{k}}\right)$  for
some constant $\alpha$.
\end{proof}


Note that the case (\ref{b1}) of Lemma~\ref{lem:smooth-simple} could be used to 
settle the tight bound on the PoA of Congestion Games
in which delay functions are polynomials with positive coefficients. 
\cite{SuriTothZhou:Selfish-Load} proved this case for $a(k) = 1$
and $b(k) = \Theta(\frac{1}{k}(k/\log k)^{k})$ in order to
upper bound of the PoA in Selfish Load Balancing Games. 

\begin{lemma}		\label{lem:simple}
It holds that $(k+1)z \geq 1 - (1 - z)^{k+1}$ for all $0 \leq z \leq 1$ and for all $k \geq 0$.
\end{lemma}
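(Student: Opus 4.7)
The plan is to give a one-line proof via Bernoulli's inequality, with a self-contained backup via elementary calculus in case the reader prefers to see the monotonicity argument.

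The clean approach is the following. Bernoulli's inequality states that $(1+x)^r \geq 1 + rx$ whenever $x \geq -1$ and $r \geq 1$. Apply it with $x = -z$ (which satisfies $x \geq -1$ since $z \leq 1$) and $r = k+1 \geq 1$ to obtain
\[
(1-z)^{k+1} \geq 1 - (k+1)z.
\]
Rearranging gives exactly $(k+1)z \geq 1 - (1-z)^{k+1}$, which is what we want. The case $k = 0$ gives equality $z = z$.

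If one wishes to avoid quoting Bernoulli, the claim follows immediately from a direct monotonicity argument. First I would set $h(z) := (k+1)z - 1 + (1-z)^{k+1}$ and check that $h(0) = 0$. Then I would compute $h'(z) = (k+1)\bigl(1 - (1-z)^{k}\bigr)$, observe that $(1-z)^{k} \leq 1$ for $z \in [0,1]$ and $k \geq 0$, so that $h'(z) \geq 0$ on $[0,1]$. Hence $h$ is non-decreasing on $[0,1]$ and $h(z) \geq h(0) = 0$, which rearranges to the desired inequality.

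There is no real obstacle here: the statement is elementary and both proofs occupy a few lines. The only thing to be mindful of is handling the boundary case $k = 0$ (trivial: the inequality becomes $z \geq z$) and the boundary cases $z = 0, 1$ (both give equality or the obvious bound). I would present the Bernoulli proof as the main argument since it is a single sentence.
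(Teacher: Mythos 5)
Your proposal is correct, and your backup calculus argument — defining $h(z) := (k+1)z - 1 + (1-z)^{k+1}$, noting $h(0)=0$ and $h'(z) = (k+1)\bigl(1-(1-z)^k\bigr) \geq 0$ on $[0,1]$ — is exactly the paper's proof. Your primary route via Bernoulli's inequality $(1-z)^{k+1} \geq 1-(k+1)z$ is just a repackaging of the same fact as a named classical result; it is shorter and perfectly valid, but not a genuinely different argument, since Bernoulli for real exponents is itself established by the very monotonicity computation you give as backup.
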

\begin{proof}
Consider $f(z) = (k+1)z - 1 + (1 - z)^{k+1}$ for $0 \leq z \leq 1$. We have 
$f'(z) = (k+1) - (k+1)(1-z)^{k} \geq 0 ~\forall 0 \leq z \leq 1$. So $f(z) \geq f(0) = 0$. Thus,
$(k+1)z \geq 1 - (1 - z)^{k+1}$ for all $0 \leq z \leq 1$.
\end{proof}

In the following, we prove inequalities to bound
the PoA of the scheduling game. Remark that until the end of the section,
we use $i,j$ as the indices. The following is the main lemma to show 
the upper bound $O(k)$ of the PoA under policy \textsf{SPT} in the next section.

 %
%

\begin{lemma}  \label{lem:smooth-SPT} For  any  non-negative sequences
  $(n_{i})_{i=1}^{P}$,  $(m_{i})_{i=1}^{P}$,   and  for  any  positive
  increasing   sequence   $(q_{i})_{i=1}^{P}$,   define  $A_{i,j}   :=
  n_{1}q_{1}  + \ldots  + n_{i-1}q_{i-1}  + j\cdot  q_{i}$  for $1\leq
  i\leq P, 1 \leq j \leq  n_{i}$ and $B_{i,j} := m_{1}q_{1} + \ldots +
  m_{i-1}q_{i-1} + j\cdot  q_{i}$ for $1 \leq i \leq P,  1 \leq j \leq
  m_{i}$.  Then, it holds that
$$
\sum_{i=1}^{P} \sum_{j=1}^{m_{i}} (A_{i,n_{i}} + j\cdot q_{i})^{k} 
\leq \mu_{k} \sum_{i=1}^{P} \sum_{j=1}^{n_{i}} A_{i,j}^{k} 
	+ \lambda_{k} \sum_{i=1}^{P} \sum_{j=1}^{m_{i}} B_{i,j}^{k},
$$
where    $\mu_{k}    =    \frac{k+1}{k+2}$    and    $\lambda_{k}    =
\Theta(\alpha^{k}(k+1)^{k})$  for some  constant  $\alpha$. 
\end{lemma}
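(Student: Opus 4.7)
My plan is to apply Lemma~\ref{lem:smooth-simple} in conjunction with integral/Riemann-sum comparisons that convert between the discrete sums on each side.

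\textbf{Step 1 (Collapse the inner sum).} For each level $i$, since $s \mapsto s^k$ is increasing, the right-Riemann-sum inequality gives $q_i(A_{i,n_i}+jq_i)^k \le \int_{A_{i,n_i}+jq_i}^{A_{i,n_i}+(j+1)q_i} s^k\,ds$. Telescoping yields
\[
\sum_{j=1}^{m_i}(A_{i,n_i}+j q_i)^k \;\le\; \frac{(A_{i,n_i}+(m_i+1)q_i)^{k+1}}{(k+1)\,q_i}.
\]

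\textbf{Step 2 (Smooth-simple).} I use Lemma~\ref{lem:smooth-simple} in the regime of case~(\ref{b2}), choosing $a(k)$ with $(k-1)a(k)$ bounded, so that $b(k)=\Theta(\alpha^k k^{k-1})$. Lemma~\ref{lem:smooth-simple} bounds $y(x+y)^k$, whereas Step~1 has produced $(x+y)^{k+1}$. I split $(x+y)^{k+1} = x(x+y)^k + y(x+y)^k$ and apply the lemma in both orientations (once with $x=A_{i,n_i}$, $y=(m_i+1)q_i$, and once with the roles swapped) to obtain an inequality of the shape $(A_{i,n_i}+(m_i+1)q_i)^{k+1} \le C_1\,A_{i,n_i}^{k+1} + C_2\,((m_i+1)q_i)^{k+1}$ with explicit $C_1,C_2$.

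\textbf{Step 3 (OPT-side conversion).} After dividing by $(k+1)q_i$ and summing over $i$, the OPT contribution is $\Theta(1)\sum_i (m_i+1)^{k+1}q_i^k$. Using $B_{i,j}\ge jq_i$ and $\sum_{j=1}^{m_i}j^k \ge m_i^{k+1}/(k+1)$, this is at most $\Theta(k)\sum_{i,j}B_{i,j}^k$, and combining with the factor $b(k)$ from Step~2 yields $\lambda_k = \Theta(\alpha^k(k+1)^k)$.

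\textbf{Step 4 (NE-side conversion).} The NE contribution is $\frac{k\,a(k)}{k+1}\cdot\frac{1}{k+1}\sum_i A_{i,n_i}^{k+1}/q_i$. I convert back using the per-level integral bound
\[
q_i\sum_{j=1}^{n_i}A_{i,j}^k \;\ge\; \frac{A_{i,n_i}^{k+1}-A_{i-1,n_{i-1}}^{k+1}}{k+1},
\]
combined with an Abel-summation step that uses $1/q_i \le 1/q_{i-1}$: the residual $A_{i-1,n_{i-1}}^{k+1}/q_i$ can be replaced by $A_{i-1,n_{i-1}}^{k+1}/q_{i-1}$ and absorbed into the next bracket. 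Tuning $a(k)$ so that the overall NE coefficient equals $(k+1)/(k+2)$ gives $\mu_k$.

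\textbf{Main obstacle.} Step~4 is the delicate part: a naive application of the per-level bound would accumulate a factor of $P$, since $\sum_i A_{i,n_i}^{k+1}/q_i$ does not telescope directly. Controlling this sum requires carefully using the monotonicity of $(q_i)$ to make the non-telescoping residuals cancel against subsequent terms, and then matching this $\Theta(k)$ loss with the $1/(k+1)^2$ factor inherited from Lemma~\ref{lem:smooth-simple}. Getting exactly $\mu_k = (k+1)/(k+2)$ (rather than a looser $\Theta(1)$ bound) is where the precise choice of $a(k)$ matters.
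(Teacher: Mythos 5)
Your plan departs from the paper's (which does an Abel-summation reorganization over $i$ \emph{before} collapsing any inner sums), and the departure is where the gap opens up. Let me be concrete.

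\textbf{The gap in Step~4 is real and unfixable by the proposed ``absorption.''} You correctly identify that $\sum_i A_{i,n_i}^{k+1}/q_i$ does not telescope, but the Abel-summation fix you sketch does not close it. Carrying it out, the residual $A_{i-1,n_{i-1}}^{k+1}/q_i \le A_{i-1,n_{i-1}}^{k+1}/q_{i-1}$ merely cancels against the $i\!-\!1$ term, so the chain collapses to a bound on the \emph{last} term only: $A_{P,n_P}^{k+1}/q_P \le (k+1)\sum_{i,j}A_{i,j}^k$. It does not give an upper bound on the whole sum $\sum_i A_{i,n_i}^{k+1}/q_i$, which is what your Step~2 produces. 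And that sum can genuinely be $\Omega(P)$ times larger: take $q_i=i$, $n_i=1$, $m_i=0$. Then $A_{i,n_i}=i(i+1)/2$, your Step~1 upper bound is $\Theta\!\bigl(P^{2k+2}/(k\,2^k\,k)\bigr)$, while the target $\mu_k\sum_{i,j}A_{i,j}^k$ is $\Theta\!\bigl(P^{2k+1}/(2^k k)\bigr)$ — off by a factor of $\Theta(P/k)$, and the OPT side is zero. So the chain of inequalities you propose is false as soon as $P$ is large.

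\textbf{The root cause is Step~1.} Collapsing the inner sum \emph{per level} and dropping the lower end of the integral leaves you with a quantity $\frac{1}{k+1}\sum_i(A_{i,n_i}+(m_i+1)q_i)^{k+1}/q_i$ that is strictly positive even when every $m_i=0$ (i.e.\ the LHS of the lemma is $0$). Keeping the subtraction instead gives a difference of $(k+1)$-th powers, but bounding that difference by $(k+1)\,m_iq_i(\cdots)^k$ just undoes Step~1. There is also a secondary issue in Step~2: splitting $(x+y)^{k+1}=x(x+y)^k+y(x+y)^k$ and applying Lemma~\ref{lem:smooth-simple} ``in both orientations'' forces the coefficient of $x^{k+1}$ to be at least $b(k)\gg 1$, so Step~4 would in any case need to win back a factor of order $b(k)$, on top of the factor $P$.

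\textbf{What the paper does differently.} The paper sets $r_i=1/q_i$, multiplies through by $q_i$ so each $A_{t,j}^k$ carries a $q_t$ weight, and performs Abel summation over $i$: $\sum_i r_i X_i=\sum_i(r_i-r_{i+1})\sum_{t\le i}X_t$. Since $(r_i)$ is decreasing, it suffices to prove a \emph{per-prefix} inequality in which the NE side is $\sum_{t\le i}\sum_j q_t A_{t,j}^k$. Now the $q_t$-weights are exactly what makes the telescope clean — via Lemma~\ref{lem:simple}, $(k+1)\sum_{t\le i}\sum_j q_t A_{t,j}^k \ge A_{i,n_i}^{k+1}$ with no leftover $1/q_i$ — and the LHS of the prefix inequality is bounded by $B_{i,m_i}(A_{i,n_i}+B_{i,m_i})^k$, to which case~(\ref{b2}) of Lemma~\ref{lem:smooth-simple} applies once, directly. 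In short: Abel-sum first, telescope second. Your proposal telescopes first and pays a factor of $P$ for it.
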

\begin{proof}
Denote $r_{i} = 1/q_{i}$ for $1 \leq i \leq P$. The inequality is equivalent to 
$$
\sum_{i=1}^{P} r_{i} \sum_{j=1}^{m_{i}} q_{i}(A_{i,n_{i}} + j\cdot q_{i})^{k} 
\leq \mu_{k} \sum_{i=1}^{P} r_{i}  \sum_{j=1}^{n_{i}} q_{i} A^{k}_{i,j} 
  + \lambda_{k} \sum_{i=1}^{P} r_{i}  \sum_{j=1}^{m_{i}} q_{i} B^{k}_{i,j}
$$
For convenience set $r_{P+1} = 0$. This inequality could be written as 
\begin{align*}
\sum_{i=1}^{P} & (r_{i} - r_{i+1}) \left[\sum_{t=1}^{i} \sum_{j=1}^{m_{t}} q_{t}(A_{t,n_{t}} + j \cdot q_{t})^{k} \right] \\
&\leq \sum_{i=1}^{P} (r_{i} - r_{i+1}) 
	\left[\mu_{k} \sum_{t=1}^{i} \sum_{j=1}^{n_{t}} q_{t}A_{t,j}^{k} 
				+ \lambda_{k} \sum_{t=1}^{i} \sum_{j=1}^{m_{t}} q_{t}A_{t,j}^{k} \right]
\end{align*}

As $(r_{i})_{i=1}^{P}$ is decreasing a sequence (so $r_{i} - r_{i+1} \geq 0 ~\forall1 \leq i \leq P$), 
it is sufficient to prove that for all $1 \leq i \leq P$, 
\begin{equation}	\label{ineq:EQUI-bA}
\sum_{t=1}^{i} \sum_{j=1}^{m_{t}} q_{t}(A_{t,n_{t}} + j \cdot q_{t})^{k}
\leq \mu_{k} \sum_{t=1}^{i} \sum_{j=1}^{n_{t}} q_{t}A_{t,j}^{k} 
				+ \lambda_{k} \sum_{t=1}^{i} \sum_{j=1}^{m_{t}} q_{t}A_{t,j}^{k}.
\end{equation}
It remains to show Inequality~(\ref{ineq:EQUI-bA}).

For convenience set $A_{0,j} = 0$ for any $j$.
By Lemma~\ref{lem:simple}, we have
\begin{align*}
(k+1)q_{t}A^{k}_{t,j} &\geq A^{k+1}_{t,j} - (A_{t,j} - q_{t})^{k+1} = A^{k+1}_{t,j} - A^{k+1}_{t,j-1}
	\quad \forall 1 \leq t \leq P, 2 \leq j \leq n_{t} \\
(k+1)q_{t}A^{k}_{t,1} &\geq A^{k+1}_{t,1} - (A_{t,1} - q_{t})^{k+1} = A^{k+1}_{t,1} - A^{k+1}_{t-1,n_{t-1}}
	\quad \forall 1 \leq t \leq P.
\end{align*}
Therefore,
$$
(k+1)\sum_{t=1}^{i} \sum_{j=1}^{n_{t}} q_{t}A_{t,j}^{k} 
\geq \sum_{t=1}^{i} \left[\sum_{j=2}^{n_{t}} \left(A_{t,j}^{k+1} - A_{t,j-1}^{k+1}  \right)
		+ A^{k+1}_{t,1} - A^{k+1}_{t-1,n_{t-1}}\right] 
= A^{k+1}_{i,n_{i}},
$$
since the sums telescope. Similarly,
$(k+1)\sum_{t=1}^{i} \sum_{j=1}^{m_{t}} q_{t}B_{t,j}^{k} 
\geq B^{k+1}_{i,m_{i}}$.
Thus, to prove Inequality (\ref{ineq:EQUI-bA}), it is sufficient to prove that  for all 
$1 \leq i \leq P$,
$$	
\sum_{t=1}^{i} \sum_{j=1}^{m_{t}} q_{t}(A_{t,n_{t}} + j \cdot q_{t})^{k}
\leq \left( \frac{\mu_{k}}{k+1} A^{k+1}_{i,n_{i}} + \frac{\lambda_{k}}{k+1} B^{k+1}_{i,m_{i}} \right)
$$
Besides,  
\[
\sum_{t=1}^{i} \sum_{j=1}^{m_{t}} q_{t}(A_{t,n_{t}} + j \cdot q_{t})^{k} 
\leq \sum_{t=1}^{i} m_{t} q_{t}(A_{t,n_{t}} +  B_{i,m_{i}})^{k} 
\leq  B_{i,m_{i}}(A_{i,n_{i}} + B_{i,m_{i}})^{k}
\]
Hence, we only need to argue that 
\begin{equation}	\label{ineq:smooth-SPT}
B_{i,m_{i}}(A_{i,n_{i}} + B_{i,m_{i}})^{k} 
\leq \left( \frac{\mu_{k}}{k+1} A^{k+1}_{i,n_{i}} + \frac{\lambda_{k}}{k+1} B^{k+1}_{i,m_{i}} \right)
\end{equation}

Choose $\mu_{k} = \frac{k+1}{k+2}$ and apply case~(\ref{b2}) of Lemma~\ref{lem:smooth-simple} 
(now $a(k) = \frac{(k+1)}{k(k+2)}$ and $(k-1)a(k)$ is bounded by a constant), we deduce that: 
for $\lambda_{k} = \Theta(\alpha^{k}(k+1)^{k})$ where $\alpha$ is a constant, 
Inequality~(\ref{ineq:smooth-SPT}) holds. 
\end{proof}


The following is the main lemma to settle the bound $O(2^{k})$ of the PoA under 
policy \textsf{EQUI} (the proof is in the appendix).

\begin{lemma}		\label{lem:smooth-EQUI}
For any non-negative sequences $(n_{i})_{i=1}^{P}$, $(m_{i})_{i=1}^{P}$, and for any positive increasing
sequence $(q_{i})_{i=1}^{P}$, define $A_{i} = n_{1}q_{1} + \ldots + n_{i-1}q_{i-1} + (n_{i} + \ldots + n_{P})q_{i}$
and $B_{i} = m_{1}q_{1} + \ldots + m_{i-1}q_{i-1} + (m_{i} + \ldots + m_{P})q_{i}$ for $1 \leq i \leq P$.
Then, it holds that
$$
\sum_{i=1}^{P}m_{i}(A_{i} + m_{i}q_{i})^{k} 
\leq \mu_{k} \sum_{i=1}^{P} n_{i}A_{i}^{k} + \lambda_{k} \sum_{i=1}^{P} m_{i} B_{i}^{k}
$$
where
$\mu_{k} = \frac{k+1}{k+2}$, $\lambda_{k} = \Theta(\alpha^{k}2^{(k+1)^{2}})$
for some constant $\alpha$.  
\end{lemma}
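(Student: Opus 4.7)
The plan is to follow the template of the proof of Lemma~\ref{lem:smooth-SPT}, with additional care to handle the fact that EQUI completion times do not telescope as cleanly as SPT's. First, I would multiply each term by $r_i q_i = 1$ with $r_i := 1/q_i$ (decreasing since $q_i$ is increasing) and apply the Abel summation formula with $r_{P+1} = 0$. Since $r_j - r_{j+1} \ge 0$, it suffices to prove the cumulative inequality
\[
\sum_{t \le j} q_t m_t (A_t + m_t q_t)^k \le \mu_k \sum_{t \le j} q_t n_t A_t^k + \lambda_k \sum_{t \le j} q_t m_t B_t^k
\]
for every $j \in \{1, \ldots, P\}$. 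I would then upper-bound the LHS by $B_j(A_j + B_j)^k$, using $A_t \le A_j$, $m_t q_t \le (\sum_{s \ge j} m_s) q_j \le B_j$ for $t \le j$, and $\sum_{t \le j} q_t m_t \le B_j$.

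Second, I would lower-bound each RHS sum via a per-individual-job telescoping. Unfolding the $n_t$ jobs of each type $t \le j$ into individual jobs sorted by processing time, applying Lemma~\ref{lem:simple} to each successive cumulative SPT time $D^{(\ell)} := \sum_{\ell' \le \ell} q^{(\ell')}$, and using the key inequality $D^{(\ell)} \le A_{\mathrm{type}(\ell)}$ (since SPT's completion times are dominated by EQUI's, even after restricting to types $\le j$), one obtains $(k+1) \sum_{t \le j} q_t n_t A_t^k \ge \tilde{A}_j^{k+1}$ with $\tilde{A}_j := \sum_{t \le j} n_t q_t$, and analogously $(k+1) \sum_{t \le j} q_t m_t B_t^k \ge \tilde{B}_j^{k+1}$.

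The main obstacle is that, unlike in SPT, $A_j$ strictly exceeds $\tilde{A}_j$ by the parallel-execution term $(\sum_{t > j} n_t) q_j$, and similarly for $B_j$; in the worst case $\tilde{A}_j$ can even vanish while $A_j > 0$. To nonetheless establish $B_j(A_j + B_j)^k \le \tfrac{\mu_k}{k+1} \tilde{A}_j^{k+1} + \tfrac{\lambda_k}{k+1} \tilde{B}_j^{k+1}$, I would invoke Lemma~\ref{lem:smooth-simple} with $x = A_j$, $y = B_j$ and an exponentially small parameter $a(k) = \Theta(2^{-k})$, placing the analysis in case~(\ref{b3}) of that lemma, which yields $b(k) = \Theta(\alpha^k 2^{k^2}/k)$. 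The small $a(k)$ gives enough slack on the $A_j^{k+1}$ coefficient so that, after amortization across the Abel prefixes (the phantom contributions $(\sum_{t > j} n_t) q_j$ at prefix $j$ are recaptured as genuine contributions to $A_t$ at later indices $t > j$), the $\tilde{A}_j \to A_j$ and $\tilde{B}_j \to B_j$ blow-ups are absorbed. This yields $\mu_k = (k+1)/(k+2)$ and $\lambda_k = (k+1) b(k) = \Theta(\alpha^k 2^{(k+1)^2})$, matching the claim. The hard part --- and the source of the $2^{O(k^2)}$ loss relative to SPT --- is rigorously carrying out this amortization, which quantifies the extra multiplicative price that EQUI pays for its lack of ordering information.
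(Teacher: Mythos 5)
Your opening reduction already breaks. After inserting $r_iq_i=1$ with $r_i := 1/q_i$ and applying Abel summation, it is \emph{not} sufficient to prove the prefix inequality
\[
\sum_{t \le j} q_t m_t (A_t + m_t q_t)^k \;\le\; \mu_k \sum_{t \le j} q_t n_t A_t^k + \lambda_k \sum_{t \le j} q_t m_t B_t^k \qquad\text{for every } j,
\]
because that prefix inequality is false. Take $P = 2$, $q_1 = 1$, $q_2 = 2$, $n_1 = 0$, $n_2 = N$, $m_1 = 1$, $m_2 = 0$: then $A_1 = N$, $B_1 = 1$, the $n$-sum on the right vanishes for $j=1$, and the $j=1$ inequality reads $(N+1)^k \le \lambda_k$, which fails for any fixed $\lambda_k$ once $N$ is large. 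The lemma itself still holds for this instance, but only thanks to the term $\mu_k\, n_2 A_2^k = \mu_k N (Nq_2)^k$, which sits at index $2 > j$ and is erased by the truncation. This is exactly the ``phantom'' issue you flag at the end, but you have misplaced it: $A_t$ depends on $n_s$ for $s > t$, so forward Abel prefixes do \emph{not} decouple the way they do for \textsf{SPT}, and no later ``amortization'' can repair an intermediate inequality that is itself false --- the Abel step is only a sufficient condition, and it is one that does not hold here.

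The paper's proof sidesteps this by splitting $A_i$ \emph{before} any Abel step. Write $A_i^{(1)} := n_1q_1 + \cdots + n_iq_i$ (the forward, prefix-only part) and $A_i^{(2)} := n_i + \cdots + n_P$ (the backward, suffix-count part), so that $A_i \le A_i^{(1)} + q_i A_i^{(2)}$. Then $(a+b)^k \le 2^k(a^k+b^k)$ gives
\[
(A_i + m_iq_i)^k \le 2^k\Bigl[\bigl(A_i^{(1)} + m_iq_i\bigr)^k + q_i^k\bigl(A_i^{(2)} + m_i\bigr)^k\Bigr],
\]
with the matching lower bound $\bigl(A_i^{(1)}\bigr)^k + q_i^k\bigl(A_i^{(2)}\bigr)^k \le 2A_i^k$ on the $n$-side. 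This decouples the claim into two independent inequalities, one purely in the forward quantities and one purely in the backward quantities. The forward one is exactly Lemma~\ref{lem:smooth-SPT-modif}; the backward one reduces to it by an Abel summation in the increasing weights $q_i^k$ (hence over suffixes $j\ge i$, not prefixes) followed by reversing indices and setting $q_i \equiv 1$. Your intuition for the source of the $2^{(k+1)^2}$ blow-up is correct: the $2^k$ from the split, once pushed into the effective coefficient $a(k)$ handed to Lemma~\ref{lem:smooth-simple}, lands in case~(\ref{b3}) and unwinds to $\lambda_k = \Theta(\alpha^k 2^{(k+1)^2})$. But that split must be made structurally up front; it is not a slack that can be recovered afterwards by amortizing over Abel prefixes.
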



\section{$\ell_{k}$-norms of Completion Times} 	\label{sec:lk}

We  study coordination mechanisms under two policies:
(1)  the strongly-local  policy  \textsf{SPT} that  schedules jobs  in
increasing  order of  processing  times; and  (2) the  non-clairvoyant
policy  \textsf{EQUI}  that  schedules  the  jobs  in  parallel  using
time-multiplexing, assigning each job an equal fraction of CPU time.

\paragraph{Policy \textsf{SPT}}\label{def:SPT}
Let $\vect{x}$ be  a strategy profile. Let $\prec_{i}$  be an order of
jobs on machine $i$,  where $j'  \prec_{i} j$  iff  $p_{ij'} <
p_{ij}$ or $p_{ij'} = p_{ij}$ and  $j$ is priority over $j'$ (machine $i$
chooses a local preference over jobs based on their local identities to break ties). 
The cost of job $j$ under the \textsf{SPT}~\cite{ImmorlicaLiMirrokni:Coordination-mechanisms}  
policy is
\begin{align*}
c_{j}(\vect{x}) = \sum_{\substack{j': ~ x_{j'} = i\\ j' \preceq j}} p_{ij'}.
\end{align*}

\paragraph{Policy \textsf{EQUI}}\label{def:EQUI}
Let $\vect{x}$ be a strategy profile. The cost of job $j$ under the \textsf{EQUI} 
policy~\cite{CohenDurrThang:Non-clairvoyant-Scheduling} is
\begin{align*}
c_{j}(\vect{x}) =  \sum_{\substack{j': ~ x_{j'} = i \\ p_{ij'} < p_{ij} }} p_{ij'}
			+ \sum_{\substack{j': ~ x_{j'} = i \\ p_{ij'} \geq p_{ij} }} p_{ij}
		= \sum_{j': ~ x_{j'} = i} \min\{p_{ij'}, p_{ij}\}
\end{align*}

Note that the two  policies  \textsf{SPT} and \textsf{EQUI} are feasible. 
Since all $p_{ij}$ could be written  as a multiple of $\epsilon$ (a small
precision) without loss of generality, assume that all jobs processing
times (scaling  by $\epsilon^{-1}$) are integers  and upper-bounded by
$P$.

\paragraph{Relationship between  \textsf{SPT} and \textsf{EQUI}}

\begin{lemma}		\label{lem:coor-sum}
For any $A \geq 0, p > 0$ and $k, N$ integer, it holds that
$$
(k+1)\sum_{t = 1}^{N} (A + tp)^{k} \geq N (A + Np)^{k} 
$$
\end{lemma}

\begin{lemma} \label{lem:EQUI-SPT} Let  $\vect{x}$ be an assignment of
  jobs  to  machines.  Then,  the  \textsf{SPT}  policy minimizes  the
  $\ell_{k}$-norm  of  job  completion  times  with  respect  to  this
  assignment among all feasible policies.  Moreover, the \textsf{EQUI}
  policy induces an objective value at most $(2k+2)^{1/k}$ times higher.
\end{lemma}


\subsection{Upper bounds of the PoA induced by \textsf{SPT} and \textsf{EQUI}}


\begin{theorem}
The PoA of \textsf{SPT} with respect to the $\ell_{k}$-norm of job completion times
is $O(k)$. 
\end{theorem}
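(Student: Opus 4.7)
The plan is to invoke Roughgarden's smoothness framework with the objective $C^{k}(\vect{x}) = \sum_{j} c_{j}^{k}(\vect{x})$, i.e.\ to treat $c_{j}^{k}$ as each player's cost (pure Nash equilibria of this auxiliary game coincide with those of the original one since $x \mapsto x^{k}$ is monotone). I aim to show the game is $(\lambda_{k}, \mu_{k})$-smooth with $\mu_{k} = \frac{k+1}{k+2}$ and $\lambda_{k} = \Theta(\alpha^{k}(k+1)^{k})$, exactly the constants delivered by Lemma~\ref{lem:smooth-SPT}. Since $1/(1-\mu_{k}) = k+2$, this gives $C^{k}(\vect{x}) \leq \Theta(\alpha^{k}(k+1)^{k}(k+2))\,C^{k}(\vect{x}^{*})$, and taking $k$-th roots bounds the $\ell_{k}$-norm PoA by $\Theta(\alpha(k+1)(k+2)^{1/k}) = O(k)$.

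The next step is the deviation bound. For any profile $\vect{x}$ and any job $j$ moving alone to $i = x_{j}^{*}$, under \textsf{SPT} the worst-case tie-breaking yields
\[
c_{j}(x_{j}^{*}, x_{-j}) \;\leq\; \sum_{j' : x_{j'}=i,\, p_{ij'} \leq p_{ij}} p_{ij'} \;+\; p_{ij}.
\]
I would then fix a machine $i$ and group jobs by their distinct processing times $q_{1} < q_{2} < \cdots < q_{P}$ on that machine. Let $n_{t}$ (resp.\ $m_{t}$) be the number of jobs on $i$ with $p_{ij} = q_{t}$ in $\vect{x}$ (resp.\ in $\vect{x}^{*}$), and define $A_{t,s}$, $B_{t,s}$ as in Lemma~\ref{lem:smooth-SPT}. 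The key identifications are
\[
\sum_{j: x_{j}=i} c_{j}^{k}(\vect{x}) = \sum_{t=1}^{P}\sum_{s=1}^{n_{t}} A_{t,s}^{k}, \qquad \sum_{j: x_{j}^{*}=i} c_{j}^{k}(\vect{x}^{*}) = \sum_{t=1}^{P}\sum_{s=1}^{m_{t}} B_{t,s}^{k},
\]
which follow from the definition of \textsf{SPT} (the $s$-th job of size $q_{t}$ in the SPT order has completion time $A_{t,s}$). Moreover, for each of the $m_{t}$ jobs with $x_{j}^{*}=i$ and $p_{ij}=q_{t}$, the deviation bound above gives $c_{j}(x_{j}^{*},x_{-j}) \leq A_{t,n_{t}} + q_{t}$, so
\[
\sum_{j: x_{j}^{*}=i} c_{j}^{k}(x_{j}^{*}, x_{-j}) \;\leq\; \sum_{t=1}^{P} m_{t}(A_{t,n_{t}}+q_{t})^{k} \;\leq\; \sum_{t=1}^{P}\sum_{s=1}^{m_{t}}(A_{t,n_{t}}+s\,q_{t})^{k},
\]
where the second inequality is the trivial $m_{t}(A_{t,n_{t}}+q_{t})^{k} \leq \sum_{s=1}^{m_{t}}(A_{t,n_{t}}+sq_{t})^{k}$.

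Now I would apply Lemma~\ref{lem:smooth-SPT} per machine to the right-hand side and sum over all machines. Combined with the two identifications above, this yields the smoothness inequality
\[
\sum_{j} c_{j}^{k}(x_{j}^{*}, x_{-j}) \;\leq\; \mu_{k}\sum_{j} c_{j}^{k}(\vect{x}) + \lambda_{k}\sum_{j} c_{j}^{k}(\vect{x}^{*}).
\]
Since any pure Nash equilibrium $\vect{x}$ of the original game satisfies $c_{j}^{k}(\vect{x}) \leq c_{j}^{k}(x_{j}^{*}, x_{-j})$ for every $j$, the standard smoothness derivation gives $C^{k}(\vect{x}) \leq \frac{\lambda_{k}}{1-\mu_{k}}\,C^{k}(\vect{x}^{*})$, and the $k$-th root yields the claimed $O(k)$ bound on the $\ell_{k}$-norm PoA.

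The main technical obstacle is already packaged inside Lemma~\ref{lem:smooth-SPT}; the work for the theorem itself is to correctly set up the per-machine grouping by distinct processing times so that the SPT individual costs line up exactly with the quantities $A_{t,s}$ and $B_{t,s}$, and to verify that the worst-case tie-breaking bound on the deviation cost is majorised by the LHS of the lemma. Once those structural identifications are in place, the rest is bookkeeping and a single invocation of the smoothness framework.
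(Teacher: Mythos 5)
Your proposal matches the paper's proof essentially step for step: the same per-machine grouping of jobs by processing time, the same worst-case-tie-breaking bound $c_{j}(x_{j}^{*},x_{-j}) \leq A_{t,n_{t}} + q_{t}$, the same majorisation by $\sum_{s=1}^{m_{t}}(A_{t,n_{t}}+sq_{t})^{k}$, a single invocation of Lemma~\ref{lem:smooth-SPT}, and Roughgarden's smoothness framework applied to $C^{k}$. The only point the paper adds at the end that you omit is an appeal to Lemma~\ref{lem:EQUI-SPT} to identify the SPT-cost of the optimal assignment with the true optimum over all feasible schedules, which closes the loop on what the denominator in the PoA ratio actually is.
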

\begin{proof}
Let $\vect{x}$ and $\vect{x}^{*}$ be two arbitrary profiles. 
We focus on a machine $i$. Let $n_{1}, \ldots, n_{P}$ be the numbers of jobs in $\vect{x}$ which are assigned to
machine $i$ and have processing times $1, \ldots, P$,  respectively. Similarly, 
$m_{1}, \ldots, m_{P}$ are defined for profile $\vect{x}^{*}$.
Note that $n_{a}$ and $m_{a}$ are non-negative   for $1 \leq a \leq P$.
Applying Lemma~\ref{lem:smooth-SPT} for non-negative sequences 
$(n_{a})_{a=1}^{P}$, $(m_{a})_{a=1}^{P}$ and the positive increasing 
sequence $(a)_{a=1}^{P}$, we have:
\begin{align*}
\sum_{a=1}^{P} & \left[\left(\sum_{b=1}^{a} b n_{b} + a \right)^{k} +
		\left(\sum_{b=1}^{a} b n_{b} + 2a \right)^{k} + \ldots + \left(\sum_{b=1}^{a} b n_{b} + m_{a} \cdot a \right)^{k} \right] \\
\leq & ~ \frac{k+1}{k+2} \cdot \sum_{a=1}^{P} \left[\left(\sum_{b=1}^{a-1} b n_{b} + a \right)^{k} +
		\left(\sum_{b=1}^{a-1} b n_{b} + 2a \right)^{k} + \ldots + \left(\sum_{b=1}^{a-1} b n_{b} + n_{a} \cdot a \right)^{k} \right] + \\
	& +  \Theta\left(\alpha^{k}(k+1)^{k}\right) \cdot  \left[\left(\sum_{b=1}^{a-1} b m_{b} + a \right)^{k} +
		\left(\sum_{b=1}^{a-1} b m_{b} + 2a \right)^{k} + \ldots + \left(\sum_{b=1}^{a-1} b m_{b} + m_{a} \cdot a \right)^{k} \right] 
\end{align*}
where $\alpha$ is a constant.

Observe that, by  definition of the cost under  the \textsf{SPT} policy
(see  page~\pageref{def:SPT}), 
the left-hand  side (of  the inequality
above)   is   an   upper   bound   for  $\sum_{j:   x^{*}_{j}   =   i}
c^{k}_{j}(x_{-j}, x^{*}_{j})$, while the right-hand side is exactly $\frac{k+1}{k+2} \cdot \sum_{j:  x_{j} = i} c^{k}_{j}(\vect{x}) +
\Theta\left(\alpha^{k}(k+1)^{k}\right)  \cdot \sum_{j: x^{*}_{j}  = i}
c^{k}_{j}(\vect{x}^{*}) $. Thus,
$$
\sum_{j: x^{*}_{j} = i} c^{k}_{j}(x_{-j}, x^{*}_{j}) 
	\leq \frac{k+1}{k+2} \cdot \sum_{j: x_{j} = i} c^{k}_{j}(\vect{x}) 
		+  \Theta\left(\alpha^{k}(k+1)^{k}\right) \cdot \sum_{j: x^{*}_{j} = i} c^{k}_{j}(\vect{x}^{*})
$$
As the inequality above holds for every machine $i$, summing over all machines we have:
$$
\sum_{j}  c^{k}_{j}(x_{-j}, x^{*}_{j}) 
	\leq \frac{k+1}{k+2} \cdot \sum_{j} c^{k}_{j}(\vect{x}) 
		+  \Theta\left(\alpha^{k}(k+1)^{k}\right) \cdot \sum_{j} c^{k}_{j}(\vect{x}^{*})
$$
By the smooth argument, $C^{k}(\vect{x}) \leq \left(\alpha^{k}(k+1)^{k+1}\right) C^{k}(\vect{x}^{*})$, i.e., 
$C(\vect{x}) \leq O(k) C(\vect{x}^{*})$. Moreover,
by Lemma~\ref{lem:EQUI-SPT}, the optimal schedule for any assignment 
could be done using the \textsf{SPT} policy. Therefore, the PoA is $O(k)$. 
\end{proof}

\begin{theorem}
The PoA of \textsf{EQUI} with respect to the $\ell_{k}$-norm of job completion times
is $O(2^{k})$.
\end{theorem}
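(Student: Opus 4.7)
The plan is to mirror the proof structure used for \textsf{SPT}: fix two profiles $\vect{x}$ and $\vect{x}^{*}$, focus on a single machine $i$, apply the relevant smooth inequality (here Lemma~\ref{lem:smooth-EQUI}), sum over machines, and close with the smoothness argument of~\cite{Roughgarden:Intrinsic-robustness}. Let $n_1,\ldots,n_P$ and $m_1,\ldots,m_P$ denote the number of jobs assigned to machine $i$ with processing time $a\in\{1,\ldots,P\}$ under $\vect{x}$ and $\vect{x}^{*}$, respectively. Taking $q_a := a$ in Lemma~\ref{lem:smooth-EQUI}, the quantities
$A_a = \sum_{b<a} b\, n_b + a\sum_{b\ge a} n_b$ and
$B_a = \sum_{b<a} b\, m_b + a\sum_{b\ge a} m_b$
will serve as the bookkeeping terms.

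First I would translate costs under \textsf{EQUI} into these terms. By definition of \textsf{EQUI}, a job $j$ with $x_j=i$ and $p_{ij}=a$ satisfies $c_j(\vect{x}) = \sum_{j':x_{j'}=i}\min\{p_{ij'},p_{ij}\} = A_a$. Hence $\sum_{j:x_j=i} c_j^k(\vect{x}) = \sum_{a=1}^{P} n_a A_a^{k}$, which matches the first sum on the right-hand side of Lemma~\ref{lem:smooth-EQUI}. Analogously, $\sum_{j:x^{*}_j=i} c_j^k(\vect{x}^{*}) = \sum_{a=1}^{P} m_a B_a^{k}$.

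Second I would bound the deviation costs. If a job $j$ with $p_{ij}=a$ and $x^{*}_j=i$ deviates from $\vect{x}_{-j}$ to machine $i$, then at most one extra job with processing time $a$ is added to whatever \textsf{EQUI} already schedules on $i$, so $c_j(x^{*}_j,\vect{x}_{-j}) \le A_a + a$. Summing over the $m_a$ jobs in $\vect{x}^{*}$ on machine $i$ of processing time $a$ gives
\[
\sum_{j:x^{*}_j=i} c_j^{k}(x^{*}_j,\vect{x}_{-j}) \;\le\; \sum_{a=1}^{P} m_a (A_a+a)^{k} \;\le\; \sum_{a=1}^{P} m_a (A_a + m_a q_a)^{k},
\]
where the last inequality is trivial: the summand vanishes when $m_a=0$ and otherwise $m_a q_a = m_a a \ge a$. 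This is exactly the left-hand side of Lemma~\ref{lem:smooth-EQUI}, so we obtain
\[
\sum_{j:x^{*}_j=i} c_j^{k}(x^{*}_j,\vect{x}_{-j}) \;\le\; \mu_k \sum_{j:x_j=i} c_j^{k}(\vect{x}) + \lambda_k \sum_{j:x^{*}_j=i} c_j^{k}(\vect{x}^{*}),
\]
with $\mu_k=\frac{k+1}{k+2}$ and $\lambda_k=\Theta(\alpha^{k} 2^{(k+1)^2})$.

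Finally I would sum over machines and apply the smooth argument to the cost-minimization game with individual costs $c_j^{k}$, obtaining $C^{k}(\vect{x}) \le \frac{\lambda_k}{1-\mu_k} C^{k}(\vect{x}^{*}) = \Theta\!\bigl((k+2)\alpha^{k} 2^{(k+1)^{2}}\bigr) C^{k}(\vect{x}^{*})$. Taking $k$-th roots yields $C(\vect{x}) \le O(2^{k}) C(\vect{x}^{*})$ since $(k+1)^{2}/k = k+2+1/k$ so $2^{(k+1)^{2}/k} = \Theta(2^{k})$. This compares \textsf{EQUI} at equilibrium to the best \textsf{EQUI} schedule; one last invocation of Lemma~\ref{lem:EQUI-SPT} converts this into a PoA against any feasible policy at the cost of a factor $(2k+2)^{1/k}=O(1)$, preserving the $O(2^{k})$ bound. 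The main difficulty is conceptual rather than technical: recognizing that the \textsf{EQUI} cost structure is captured by the $A_a,B_a$ of Lemma~\ref{lem:smooth-EQUI} (in contrast to the $A_{i,j},B_{i,j}$ of Lemma~\ref{lem:smooth-SPT}) and that the trivial slack $(A_a+a)\le (A_a+m_a q_a)$ is strong enough to feed the lemma without losing the asymptotic rate.
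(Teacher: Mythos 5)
Your proposal is correct and follows essentially the same approach as the paper: translate \textsf{EQUI} costs into the $A_a, B_a$ bookkeeping of Lemma~\ref{lem:smooth-EQUI}, feed the deviation-cost bound into that lemma, sum over machines, and close with the smoothness argument plus Lemma~\ref{lem:EQUI-SPT}. You spell out the deviation-cost step $c_j(x^*_j,\vect{x}_{-j}) \le A_a + a \le A_a + m_a q_a$ more explicitly than the paper does (the paper just states ``we deduce''), but that is the same calculation the paper's proof relies on implicitly, not a different route.
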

\begin{proof}
  Let $\vect{x}$ and $\vect{x}^{*}$ be two arbitrary profiles.  We
  focus on a fixed machine $i$. Let $n_{1}, \ldots, n_{P}$ be the
  numbers of jobs in $\vect{x}$ which are assigned to machine $i$ and
  have processing times $1, \ldots, P$ respectively. Similarly,
  $m_{1}, \ldots, m_{P}$ are defined for profile $\vect{x}^{*}$ and
  machine $i$. Remark that $n_{a}$ and $m_{a}$
are non-negative for $1 \leq a \leq P$.
  By definition of cost under policy \textsf{EQUI} (see
  page \pageref{def:EQUI}), the cost of job $j$ assigned to machine $i$ in profile $\vect{x}$ can also express as
$
  c_{j}(\vect{x}) = \sum_{b=1}^{p_{i,j}}bn_b + p_{i,j}\sum_{b=p_{i,j}+1}^{P}n_b
$.
Applying Lemma~\ref{lem:smooth-EQUI} for non-negative sequences 
$(n_{a})_{a=1}^{P}$, $(m_{a})_{a=1}^{P}$ and the positive increasing 
sequence $(a)_{a=1}^{P}$, we have:
\begin{align*}
\sum_{a=1}^{P} & m_{a}\left(\sum_{b=1}^{a} b n_{b} + a \cdot \sum_{b=a+1}^{P} n_{b}+ a m_{a} \right)^{k} \\
&\leq \frac{k+1}{k+2} \cdot \sum_{a=1}^{P}n_{a}\left(\sum_{b=1}^{a} b n_{b} +  a \sum_{b=a+1}^{P} n_{b} \right)^{k}  
	+  \Theta\left(\alpha^{k}2^{(k+1)^{2}}\right) \cdot \sum_{a=1}^{P}m_{a}\left(\sum_{b=1}^{a} b m_{b} +  a \sum_{b=a+1}^{P} m_{b} \right)^{k} 
\end{align*}
where $\alpha$ is a constant. Therefore, we deduce that
$$
\sum_{j: x^{*}_{j} = i} c^{k}_{j}(x_{-j}, x^{*}_{j}) 
	\leq \frac{k+1}{k+2} \cdot \sum_{j: x_{j} = i} c^{k}_{j}(\vect{x}) 
		+  \Theta\left(\alpha^{k}2^{(k+1)^{2}}\right) \cdot \sum_{j: x^{*}_{j} = i} c^{k}_{j}(\vect{x}^{*})
$$
As the inequality above holds for every machine $i$, summing over all machines we have:
$$
\sum_{j}  c^{k}_{j}(x_{-j}, x^{*}_{j}) 
	\leq \frac{k+1}{k+2} \cdot \sum_{j} c^{k}_{j}(\vect{x}) 
		+   \Theta\left(\alpha^{k}2^{(k+1)^{2}}\right) \cdot \sum_{j} c^{k}_{j}(\vect{x}^{*})
$$
By smooth argument, $C^{k}(\vect{x}) \leq O\left(\alpha^{k}(k+2)2^{(k+1)^{2}}\right) C^{k}(\vect{x}^{*})$, 
i.e., $C(\vect{x}) \leq O(2^{k+1}) C(\vect{x}^{*})$. Moreover,
by Lemma~\ref{lem:EQUI-SPT}, for any assignment profile the \textsf{EQUI} policy 
induces social cost within $(2k+2)^{1/k}$ times the optimal schedule on the assignment 
according to the $\ell_{k}$-norm. Hence, the PoA is $O((2k+2)^{1/k}2^{k+1}) = O(2^{k})$.
\end{proof}


\subsection{Lower bounds of the PoA}

\begin{lemma}		\label{lem:EQUI-lowerbound}
The PoA of \textsf{EQUI} with respect to the $\ell_{k}$-norm of job completion times
is $\Omega(k/\log k)$.
\end{lemma}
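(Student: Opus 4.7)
The plan is to exhibit an explicit family of scheduling instances on which \textsf{EQUI} admits a pure Nash equilibrium whose $\ell_{k}$-norm social cost exceeds the optimum by a factor $\Omega(k/\log k)$. The motivating observation is that under \textsf{EQUI} a machine carrying $x$ identical unit-size jobs assigns each of them completion time exactly $x$; hence the contribution of that machine to $C(\vect{x})^{k}=\sum_{j}c_{j}^{k}$ is $x\cdot x^{k}=x^{k+1}$, which is precisely the per-resource social cost of a congestion game with polynomial latency $\ell(x)=x^{k}$. The lower bound will therefore mirror the classical polynomial-congestion-game lower bound (\`a la Awerbuch--Azar--Epstein), whose PoA in the $k$-th-power objective is $\Theta((k/\log k)^{k+1})$, which, after taking $k$-th roots, yields the announced $\Omega(k/\log k)$.

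Concretely, I would fix $N=\Theta(k/\log k)$ and build an instance with unit-size jobs arranged so that, in a first profile $\vect{x}$, every "current" machine carries $N$ identical jobs while in a second profile $\vect{x}^{*}$ every machine carries only a small constant number of jobs. The processing times on the "alternative" machines would be tuned so that each single-job unilateral deviation is non-strictly improving under the \textsf{EQUI} cost formula $c_{j}(\vect{x})=\sum_{j':x_{j'}=x_{j}}\min\{p_{ij'},p_{ij}\}$, rendering the clustered profile a (weakly) pure Nash equilibrium. A direct computation of both $\ell_{k}$-norms, using the expansion $(k+1)^{1/k}=1+\log(k+1)/k+O((\log k/k)^{2})$ to extract the extra $\log k$ factor in the denominator, should then yield $C(\vect{x})/C(\vect{x}^{*})=\Omega(N)=\Omega(k/\log k)$.

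The main obstacle is the instance design itself: one must block every unilateral deviation while simultaneously allowing a coordinated alternative profile $\vect{x}^{*}$ that is strictly cheaper in $\ell_{k}$-norm. A naive attempt in which each job has a single "private" machine with processing time exactly $N$ fails, since the alternative assignment placing one job per private machine incurs the same per-job cost as the Nash profile and therefore yields PoA $=1$. Overcoming this tension calls for a layered, congestion-game-style arrangement in which each machine plays the role of "current" for some jobs and of "alternative" for others, so that deviations land on already-populated machines; under the $k$-th power in the objective, the polynomial blow-up $x^{k+1}$ of the load on a congested machine then dominates the optimum's cost by the required factor, and taking a $k$-th root produces the claimed $\Omega(k/\log k)$ lower bound.
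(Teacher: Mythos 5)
The high-level plan is in the right neighborhood: the lemma does require an explicit instance with a bad pure Nash equilibrium, and under \textsf{EQUI} the per-machine contribution to $C^k$ does look superficially like the social cost of a congestion game with polynomial latency. But the proposal never produces the construction, and you yourself flag this as ``the main obstacle''; the phrase ``layered, congestion-game-style arrangement'' describes a hope, not an instance, and the construction is the entire content of the lemma. In the paper, the instance is taken from \cite{CohenDurrThang:Non-clairvoyant-Scheduling}: a chain of $m$ machines with groups $J_j$ of $n_j=2(m-1)!/(j-1)!$ jobs restricted to machines $j$ and $j+1$, with carefully calibrated \emph{non-identical} processing times $p_{jj}=2/n_j$ and $p_{j+1,j}=1/n_{j+1}$. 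The Nash profile splits each group evenly, each job in $J_j$ then has cost $(j+1)/2$, while routing $J_j$ to machine $j$ gives cost at most $2$ everywhere; the equilibrium certificate is inherited from the earlier paper and relies on the chain structure so that a deviating job lands among many jobs of comparable size, not on a ``non-strictly improving unit-size deviation'' argument.

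Two more concrete missteps. First, the $\log k$ denominator does \emph{not} come from the expansion of $(k+1)^{1/k}$ --- that quantity tends to $1$ and contributes nothing. In the paper it appears because $C^k(\vect{x})$ is bounded below by $\frac{(m-1)!}{2^{k-1}}\sum_{j\geq 0}j^k/j!$, and Dobi\'nski's formula $eB_k=\sum_{j\geq 0}j^k/j!$ together with the Bell-number asymptotics $B_k=\Theta\bigl((k/\log k)^k\bigr)$ produces $\Omega(k/\log k)$ after taking a $k$-th root. Second, fixing $N=\Theta(k/\log k)$ up front and aiming to show $C(\vect{x})/C(\vect{x}^*)=\Omega(N)$ presupposes the answer: the factor $k/\log k$ must emerge from the Nash condition on the instance itself. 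Indeed, in a flat construction where unit-size jobs pile up $N$ deep on every machine while the alternative machines hold only $O(1)$ unit-size jobs, a deviation is always strictly profitable, so no such equilibrium exists --- which is precisely the tension you identified but did not resolve. To turn the sketch into a proof you would need to (i) exhibit the instance, (ii) verify the Nash condition under the exact \textsf{EQUI} cost $c_j(\vect x)=\sum_{j':x_{j'}=x_j}\min\{p_{ij'},p_{ij}\}$, and (iii) carry out the $\ell_k$-norm summation over the factorially-scaled groups; all three are where the paper does its work.
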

\begin{proof}
The construction is the same as in \cite{CohenDurrThang:Non-clairvoyant-Scheduling}. Let $m$ be an integer. 
Define $n_j := \frac{2 (m-1)!}{(j-1)!}$ for $1 \leq j \leq m$ and $n := \sum_{j=1}^m n_j$. We consider the set of $m$ machines and $m$ groups of jobs $J_1, J_2, \ldots, J_m$. In group $J_j$ $(1 \leq j \leq m-1)$, there are $n_j$ jobs that can be scheduled on machine $j$ or $j+1$. Each job in group $J_j$ $(1 \leq j \leq m-1)$ has processing time $p_{jj} = \frac{(j-1)!}{(m-1)!} = \frac{2}{n_j}$ on machine $j$ and has processing time $p_{j+1,j} = \frac{j!}{2(m-1)!} = \frac{1}{n_{j+1}}$ on machine $j+1$. The last group ($J_m$) which has a single job that can be only scheduled on machine $m$.  This job  has processing time $p_{mm} = 1$ on machine $m$. 

Consider the strategy profile $\vect{x}$ in which half of the jobs in $J_j$ $(1 \leq j \leq m-1)$ are scheduled on machine $j$ and the other half are scheduled on machine $j+1$ (the job in $J_m$ are scheduled on machine $m$). This strategy profile is a strong Nash 
equilibrium~(see \cite{CohenDurrThang:Non-clairvoyant-Scheduling} for the proof). Observe that the cost of jobs in group $J_{j}$ for all $1 \leq j \leq m$ are the same and equals $\frac{n_{j-1}}{2}p_{j,j-1} + \frac{n_j}{2}p_{jj} = \frac{j-1}{2} + 1 = \frac{j+1}{2}$. Hence, the social cost $C(\vect{x})$ satisfies
\begin{align*}
C^{k}(\vect{x}) = \sum_{j=1}^{m} n_{j}c^{k}_{j} = \sum_{j=1}^{m} \frac{2 (m-1)!}{(j-1)!} \cdot \left(\frac{j+1}{2}\right)^{k}
	> \frac{(m-1)!}{2^{k-1}} \sum_{j=0}^{m} \frac{j^{k}}{j!}
\end{align*}

Let $B_{k}$ is the $k^{th}$ Bell number. Remark that $B_{k} = \Theta\left( (k/\log k)^{k}\right)$.
By a property of Bell polynomial~\cite[page 66]{Roman:The-Umbral-Calculus}, 
$eB_{k} = \sum_{j=0}^{\infty} \frac{j^{k}}{j!}$ (Dobi\'nski's formula). 
Hence for $m$ large enough, we have
$$
C^{k}(\vect{x}) > \frac{(m-1)!}{2^{k-1}} B_{k}
$$

Consider a profile $\vect{x}^{*}$ in which jobs in group $J_j$ ($1 \leq j \leq m$) are assigned to machine $j$.
In this profile, every job has cost 2 except the job in $J_{m}$ with cost 1. Thus, the social cost $C(\vect{x}^{*})$
satisfies
$$
C^{k}(\vect{x}^{*}) = 1 + \sum_{j=1}^{m-1} n_{j}2^{k} < 2^{k} \sum_{j=1}^{m} \frac{2(m-1)!}{j!} < e2^{k+1}(m-1)!
$$ 
Therefore, the PoA is at least $\left(\frac{B_{k}}{e \cdot 4^{k}}\right)^{1/k} = \Omega(k/\log k)$.
\end{proof}

\begin{theorem}
The PoA of any deterministic non-preemptive non-waiting strongly-local policy 
is $\Omega(k/\log k )$ with respect to the $\ell_{k}$-norm of job completion times.
\end{theorem}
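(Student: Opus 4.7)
The plan is to reuse the construction underlying Lemma~\ref{lem:EQUI-lowerbound} and adapt the analysis to an arbitrary deterministic non-preemptive non-waiting strongly-local policy $P$. Take the same instance: $m$ machines, $m$ groups $J_{1},\ldots,J_{m}$ with $n_{j}=2(m-1)!/(j-1)!$ jobs each, and processing times $p_{jj}=2/n_{j}$ and $p_{j+1,j}=1/n_{j+1}$.

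First I would exhibit a pure Nash equilibrium $\vect{x}_{P}$ under $P$. Since $P$ is non-preemptive and non-waiting, the total load on each machine depends only on the assignment. Starting from the half-half profile of Lemma~\ref{lem:EQUI-lowerbound} and running best-response dynamics on this finite game, the process halts at some equilibrium $\vect{x}_{P}$. I would then argue that the dynamics preserve a structural invariant: the total load on machine $j$ remains $\Theta(j)$. Intuitively, the geometric decay of $n_{j}$ and the sharp contrast between $p_{jj}$ and $p_{j+1,j}$ prevent dramatic load imbalances across adjacent machines, since dumping too many jobs onto machine $j+1$ would create a load so much larger than machine $j$'s that any $J_{j}$-job deviating back would improve.

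Next I would lower-bound $C_{P}(\vect{x}_{P})^{k}$ via Lemma~\ref{lem:EQUI-SPT}: $C_{P}(\vect{x}_{P})^{k}\geq C_{\textsf{SPT}}(\vect{x}_{P})^{k}$. On machine $j$ under \textsf{SPT}, the $\Omega(n_{j})$ long jobs (of processing time $\approx 2/n_{j}$) complete only after the $\Omega(n_{j-1})$ short jobs (of length $\approx 1/n_{j}$) finish, so every long job has completion time at least $\Omega(j/2)$. This yields $\Omega(n_{j}(j/2)^{k})$ from machine $j$. Substituting $n_{j}=2(m-1)!/(j-1)!$ and summing over $j$ gives $\Omega((m-1)!/2^{k})\sum_{l\geq 0}l^{k}/l!$, which by Dobi\'nski's formula equals $\Theta((m-1)!\,e\,B_{k}/2^{k})$ with $B_{k}=\Theta((k/\log k)^{k})$. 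On the optimal profile $\vect{x}^{*}$ assigning each $J_{j}$ to machine $j$, machine $j$ contains $n_{j}$ identical jobs of length $2/n_{j}$; under any non-waiting non-preemptive policy the completion times are $2i/n_{j}$ for $i=1,\ldots,n_{j}$, so the $k$-th power sum equals $\Theta(2^{k}n_{j}/(k+1))$ regardless of $P$, totalling $\Theta((m-1)!\,2^{k+1}/(k+1))$. Taking the $k$-th root of the ratio yields PoA $\geq\Omega((B_{k}(k+1)/4^{k})^{1/k})=\Omega(k/\log k)$ via $B_{k}^{1/k}=\Theta(k/\log k)$.

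The main obstacle is the equilibrium-existence step: \textsf{EQUI}'s cost coincides with the machine makespan, so the half-half profile is automatically an equilibrium under \textsf{EQUI}, but under a general $P$ a deviator might be placed in a favorable early slot on the target machine and pay less than the load. To address this one may either rescale the construction so that even the deviator's forced processing time $p_{l+1,l}$ alone dominates the current machine's makespan $(l+1)/2$ (eliminating any profitable deviation regardless of slot-order), or invoke a direct fixed-point/better-response argument on the finite game tailored to $P$ that yields some equilibrium whose per-machine load profile stays within a constant factor of the half-half invariant; in either case the Dobi\'nski-based computation above remains valid.
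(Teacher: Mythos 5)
Your proposal correctly identifies the right lower-bound instance and the right asymptotics via Dobi\'nski's formula, but it contains a genuine gap that you yourself flag: exhibiting a bad pure Nash equilibrium under an \emph{arbitrary} deterministic non-preemptive non-waiting strongly-local policy $P$. The two fixes you sketch do not obviously close it. Best-response dynamics need not converge for an arbitrary such $P$ (the theorem statement does not presume $P$ is a potential game), and even if it does converge, you give no argument that the resulting equilibrium retains the $\Theta(j)$-load-per-machine structure that your Bell-number computation relies on; the ``invariant'' claim is left at the level of intuition. The rescaling idea (inflating $p_{l+1,l}$ so that it dominates the target machine's load) is also problematic because it materially changes the instance and therefore the optimal-cost side of the ratio, which you computed for the original construction; you would need to redo both sides and it is not clear the $\Omega(k/\log k)$ gap survives.

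The paper takes a cleaner route that sidesteps the per-policy equilibrium question entirely. It first invokes the reduction technique of \citet{ColeCorreaGkatzelis:Inner-Product}, which shows that a lower-bound instance for \textsf{SPT} can be converted into one for every deterministic non-preemptive non-waiting strongly-local policy (essentially by infinitesimal perturbations of processing times that force any such policy to realize the \textsf{SPT} order). It then exhibits an explicit \textsf{SPT} equilibrium on the same instance by splitting each $J_j$ into two halves $J^1_j,J^2_j$ and choosing opposite tie-breaking priorities on machines $j$ and $j+1$, so that the half-half assignment is stable under \textsf{SPT}; the $\Omega(k/\log k)$ then follows from the same Dobi\'nski computation plus Lemma~\ref{lem:EQUI-SPT}. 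So where you try to reason about equilibria of $P$ directly (and hit the wall you describe), the paper reduces to a single concrete policy for which the equilibrium can be written down explicitly.
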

\begin{proof}
Using the technique described in \cite{ColeCorreaGkatzelis:Inner-Product}, 
it is sufficient to prove that the PoA of \textsf{SPT} is $\Omega(k/\log k)$.
Consider the same construction in Lemma~\ref{lem:EQUI-lowerbound}. The only difference is that 
now we partition each group $J_{j}$ into $J^{1}_{j} \cup J^{2}_{j}$ where $|J^{1}_{j}| = |J^{2}_{j}| = n_{j}/2$.
In machine $j$, jobs in sub-group $J^{1}_{j}$ have higher priority in than the ones in $J^{2}_{j}$.
Inversely, in machine $j+1$,  jobs in sub-group $J^{1}_{j}$ have higher priority in than the ones in $J^{2}_{j}$.

Consider the profile in which jobs in $J^{1}_j$ and $J^{2}_{j}$ $(1 \leq j \leq m-1)$ 
are assigned to machine $j$ and $j+1$ respectively (the job in $J_m$ is assigned to machine $m$).
By the definition of priority and Lemma~\ref{lem:EQUI-lowerbound}, this profile is an equilibrium 
(under \textsf{SPT}). Besides, by Lemma~\ref{lem:EQUI-SPT}, the social cost of this profile under \textsf{SPT}
is within $(2k+2)^{1/k}$-fraction of that induced by \textsf{EQUI}. Notice that the optimal social cost under \textsf{SPT} 
is always upper-bounded by that under \textsf{EQUI}. Therefore, the PoA of \textsf{SPT} is 
$\Omega((2k+2)^{-1/k} \cdot k/\log k) = \Omega(k/\log k)$.
\end{proof}


\section{$\ell_{\infty}$-norms of Completion Times} 	\label{sec:makespan}

We consider local policies for the makespan social cost.
First, we revisit the policies \textsf{BCOORD} and \textsf{CCOORD} 
in \cite{Caragiannis:Efficient-coordination-mechanisms}
by giving simpler proofs in a unified manner that is based on the smooth argument 
and the smooth inequalities. With the highlight of this approach, 
we derive a policy \textsf{Balance} that 
gives the currently best performance among anonymous local policies
which always admits a Nash equilibrium. 

For any profile $\vect{x}$, the social cost $C(\vect{x}) = \max_{j}
c_{j}$.  Let $\vect{x}(i) = \{j: x_{j} = i\}$ be the set of jobs
assigned to machine $i$.  Define $L(\vect{x}(i)) := \sum_{j: x_{j} =i} p_{ij}$, 
$L(\vect{x}) := \max_{i} L(\vect{x}(i))$ for all machines
$1 \leq i \leq m$. Note that in an optimal assignment $\vect{x}^{*}$,
$C(\vect{x}^{*}) = L(\vect{x}^{*})$. For each job $j$, denote $q_{j}
:= \min\{p_{ij}: 1 \leq i \leq m\}$ and define $\rho_{ij} :=
p_{ij}/q_{j}$ for all $i,j$. Moreover
the following lemma
guarantees that the restriction to $m$-efficient assignment is as
efficient as in the case up to a constant.

\begin{lemma}[\cite{Caragiannis:Efficient-coordination-mechanisms}]	\label{m-efficient}
\label{lem:m-efficient}
Let $\vect{y}^{*}$ be an optimal assignment. Then, there exits a $m$-efficient 
assignment $\vect{x}^{*}$ such that $L(\vect{x}^{*}) \leq 2L(\vect{y}^{*})$.
\end{lemma}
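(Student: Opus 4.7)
The plan is to modify the optimal assignment $\vect{y}^*$ into an $m$-efficient one by relocating only the jobs that violate $m$-efficiency, so that most of the load structure of $\vect{y}^*$ is preserved. Call a job $j$ \emph{bad} (with respect to $\vect{y}^*$) if $\rho_{y^*_j, j} = p_{y^*_j, j}/q_j > m$ and \emph{good} otherwise, and let $B$ be the set of bad jobs. Define $\vect{x}^*$ by $x^*_j = y^*_j$ for $j \notin B$, and $x^*_j = i_j$ for $j \in B$, where $i_j$ is any machine achieving the minimum, $p_{i_j, j} = q_j$. First I would verify $m$-efficiency: for good jobs it is immediate from the definition; for bad jobs, $\rho_{x^*_j, j} = p_{i_j, j}/q_j = 1 \leq m$.

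Next I would bound $L(\vect{x}^*)$ machine by machine. On any machine $i$, split the load into a contribution from good jobs and one from relocated bad jobs. The good-job contribution is at most $L(\vect{y}^*)$, since those jobs form a subset of the jobs $\vect{y}^*$ originally assigned to $i$. The relocated-bad-jobs contribution on $i$ equals $\sum_{j \in B,\ i_j = i} q_j$, which is bounded above by $\sum_{j \in B} q_j$. It therefore suffices to show
$$
\sum_{j \in B} q_j \leq L(\vect{y}^*).
$$
For each $j \in B$ the definition gives $q_j < p_{y^*_j, j}/m$, so summing and using $\sum_j p_{y^*_j, j} = \sum_i L(\vect{y}^*(i)) \leq m \, L(\vect{y}^*)$ yields the desired inequality. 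Combining the two contributions gives $L(\vect{x}^*) \leq 2 L(\vect{y}^*)$ on every machine, which is the claim.

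I expect no serious obstacle: the argument is a direct swap followed by a volume count. The only point to double check is that sending all bad jobs to their respective fastest machines does not create a worse concentration than $L(\vect{y}^*)$ of \emph{added} work on a single machine; this is exactly what the key inequality $\sum_{j\in B} q_j \leq L(\vect{y}^*)$ controls, and that inequality is powered by the very fact that makes a job bad (namely $q_j$ is much smaller than $p_{y^*_j, j}$, by a factor $>m$, which is precisely the slack needed to absorb pileups of up to $m$ bad jobs on one machine).
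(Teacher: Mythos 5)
The paper states this lemma as a citation from Caragiannis's paper and does not provide its own proof, so there is nothing internal to compare against. Your argument is correct and matches the standard proof of this fact: reassign exactly the jobs with $\rho_{y^*_j,j}>m$ to a machine attaining $q_j$, observe that each resulting machine carries at most its original load $L(\vect{y}^*(i)) \le L(\vect{y}^*)$ from retained jobs plus at most $\sum_{j\in B} q_j$ from relocated ones, and bound the latter via $\sum_{j\in B} q_j < \tfrac{1}{m}\sum_j p_{y^*_j,j} \le L(\vect{y}^*)$.
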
 


\subsection{Policy \textsf{BCOORD}, Revisited}
Let $k$ be a positive integer.
Under policy \textsf{BCOORD}~\cite{Caragiannis:Efficient-coordination-mechanisms}, 
in profile $\vect{x}$ in which job $j$ chooses machine $i$,
the completion time $c_{j}$ of $j$ equals $\rho_{ij}^{1/k} L(\vect{x}(i))$ if 
$\rho_{ij} \leq m$ and equals $\infty$ otherwise. 
As $\rho_{ij} \geq 1$, $c_{j} \geq L(\vect{x}(i))$ for all jobs $j$ 
assigned to machine $i$. So, the schedule of such jobs is feasible.
Note that the game under the policy does not always possess a 
Nash equilibrium \cite{Caragiannis:Efficient-coordination-mechanisms}.

\begin{lemma}
Let $\vect{x}$ and $\vect{x}^{*}$ be  an equilibrium and an arbitrary $m$-efficient profile, respectively. Then, it holds that
$\sum_{i} \sum_{j: x_{j} = i} q_{j} c^{k}_{j}(\vect{x}) 
	\leq O\left(k \alpha^{k} \left(\frac{k}{\log k}\right)^{k-1}\right) \sum_{i} \sum_{j: x^{*}_{j} = i} q_{j}c^{k}_{j}(\vect{x}^{*})$
where $\alpha$ is a constant. 
\end{lemma}

\begin{theorem}[\cite{Caragiannis:Efficient-coordination-mechanisms}]
The PoA of the game under policy $\textsf{BCOORD}$ is
$O\left(\frac{\log m}{\log\log m}\right)$ by choosing $k = \log m$.
\end{theorem}


\subsection{Policy \textsf{CCOORD}, Revisited}
For any integer $k$ and any non-empty set $A= \{a_{1}, \ldots, a_{n}\}$ of non-negative
reals. The function $\Psi_{k}$ is defined as the following
$$
\Psi_{k}(A) = k! \sum_{1 \leq d_{1} \leq \ldots \leq d_{k} \leq n} \prod_{t=1}^{k} a_{d_{t}}
$$

By an abuse of notation, we define   $L(A)= \sum_{i=1}^{n}  a_{i}$. 
Note that $\Psi_{1}(A) = L(A)$.
Under policy \textsf{CCOORD}~\cite{Caragiannis:Efficient-coordination-mechanisms}, 
in profile $\vect{x}$ in which job $j$ chooses machine $i$,
the completion time $c_{j}$ of $j$ equals $\left(\rho_{ij} \Psi_k(\vect{x}(i)) \right)^{1/k}$
if $\rho_{ij} \leq m$ and equals $\infty$ otherwise.
The game under policy \textsf{CCOORD} always admits a Nash 
equilibrium~\cite{Caragiannis:Efficient-coordination-mechanisms}.

\begin{lemma}		\label{lem:ccoord}
Let $\vect{x}$ be a Nash equilibrium. Then, for any $m$-efficient profile $\vect{x}^{*}$, it holds that
$\sum_{i}\sum_{j: x_{j} = i} q_{j}c^{k}_{j}(\vect{x}) 
	\leq  2^{k+1}(k+1)^{k+2} \sum_{i}\sum_{j: x^{*}_{j} = i} q_{j} c^{k}_{j}(\vect{x}^{*})$ 
\end{lemma}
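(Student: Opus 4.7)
The plan is to apply the smoothness paradigm to the Nash equilibrium $\vect{x}$. For each job $j$ with $x_j = i$, the equilibrium condition gives $c_j^k(\vect{x}) \leq c_j^k(x_{-j}, x^*_j) = \rho_{x^*_j,\,j}\, \Psi_k(\vect{x}(x^*_j) \cup \{j\})$; the $m$-efficiency of $\vect{x}^*$ ensures $\rho_{x^*_j,\,j} \leq m$, so this deviation cost is finite. Multiplying by $q_j$, using $q_j \rho_{ij} = p_{ij}$, and summing over $j$ gives
\[
\sum_j q_j c_j^k(\vect{x}) \;\leq\; \sum_i \sum_{j:\,x^*_j = i} p_{ij}\, \Psi_k(\vect{x}(i) \cup \{j\}).
\]
By the \textsf{CCOORD} cost formula, the left-hand side rewrites as $\sum_i L(\vect{x}(i))\, \Psi_k(\vect{x}(i))$, which is the left-hand side of the target inequality.

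To handle the right-hand side, I fix a machine $i$ and write $A := \vect{x}(i)$, $B := \vect{x}^*(i)$. By monotonicity of $\Psi_k$ in its multiset argument, $\Psi_k(A \cup \{b\}) \leq \Psi_k(A \cup B)$ for every $b \in B$, so the contribution of machine $i$ is at most $L(B)\,\Psi_k(A \cup B)$. The core of the proof is then a per-machine smooth inequality of the form
\[
L(B)\,\Psi_k(A \cup B) \;\leq\; \mu\, L(A)\, \Psi_k(A) + \lambda\, L(B)\, \Psi_k(B),
\]
with $\mu < 1$ and $\lambda/(1-\mu) \leq 2^{k+1}(k+1)^{k+2}$. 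Summing over all machines and invoking the smoothness argument then yields the claim.

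To establish the per-machine inequality, I would expand $\Psi_k(A \cup B) = \sum_{s=0}^k \binom{k}{s}\, \Psi_{k-s}(A)\, \Psi_s(B)$ (the binomial identity for $h_k = \Psi_k/k!$ applied to disjoint multiset unions), absorb the outer $L(B)$ via the Newton-type identity $L(B)\, \Psi_s(B) \leq \Psi_{s+1}(B)$, and bound each remaining factor by the Schur-convexity estimate $\Psi_r(X) \leq r!\, L(X)^r$. After algebraic simplification this reduces to a bound of the form $k!(k+1)\, L(B)\,(L(A)+L(B))^k \leq \mu'\, L(A)^{k+1} + \lambda'\, L(B)^{k+1}$, which is exactly the template of Lemma~\ref{lem:smooth-simple}. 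Choosing $a(k)$ so that case~(\ref{b2}) applies (so that $\lambda'$ behaves like $\Theta(\alpha^k (k+1)^{k})$) and then using the reverse inequality $\Psi_k(X) \geq L(X)^k$ to convert the $L^{k+1}$ quantities back into $L\,\Psi_k$ form yields the claimed constant $2^{k+1}(k+1)^{k+2}$.

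The main obstacle is controlling the prefactors without introducing super-exponential blow-up. Naively applying the Schur upper bound $\Psi_r(X) \leq r!\, L(X)^r$ at every index $r$ would leave a $(k!)^k$-type factor on one side, ruining the target. The key cancellation is between the binomial $\binom{k}{s}$ and the $(k-s)!$, $(s+1)!$ factors arising from the Schur bounds: together they collapse to an overall prefactor that is polynomial, rather than factorial, in $k$. Once this cancellation has been tracked through, the remaining step is a direct application of Lemma~\ref{lem:smooth-simple} with the right choice of $a(k)$, which delivers the coefficient $2^{k+1}(k+1)^{k+2}$.
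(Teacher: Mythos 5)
Your opening steps are sound: the reduction via the equilibrium condition and the \textsf{CCOORD} cost formula, the monotonicity $\Psi_k(A\cup\{b\})\leq\Psi_k(A\cup B)$ for $b\in B$, the product identity $\Psi_k(A\cup B)=\sum_{s=0}^k\binom{k}{s}\Psi_{k-s}(A)\Psi_s(B)$ (the complete-homogeneous-symmetric-function identity applied to a disjoint union), and the lift $L(B)\Psi_s(B)\leq\Psi_{s+1}(B)$ are all valid. The argument fails at the constant accounting. The ``cancellation'' you invoke gives $\binom{k}{s}(k-s)!(s+1)!=k!(s+1)$, which is \emph{not} polynomial in $k$: after summing over $s$ and using $s+1\leq(k+1)\binom{k}{s}$ you get $L(B)\Psi_k(A\cup B)\leq k!(k+1)L(B)(L(A)+L(B))^k$, with a genuine $k!$ in front. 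To turn this into a smooth inequality against $L(A)^{k+1}$ and $L(B)^{k+1}$ with $L(A)^{k+1}$-coefficient $\mu'<1$, you must choose $a(k)$ in Lemma~\ref{lem:smooth-simple} so that $k!\,k\,a(k)<1$, i.e.\ $a(k)=O\left(1/(k\,k!)\right)$. That forces case~(\ref{b3}), not case~(\ref{b2}) as you claim: then $b(k)=\Theta\left(\alpha^k/(k\,a(k)^k)\right)=\Theta\left(\alpha^k k^{k-1}(k!)^k\right)$ and $\lambda'=(k+1)k!\,b(k)=\Theta\left(\alpha^k(k+1)k^{k-1}(k!)^{k+1}\right)$, of order $(k!)^{k+1}$ --- astronomically larger than the target $2^{k+1}(k+1)^{k+2}$. (Choosing $a(k)=\Theta(1/k)$ so that case~(\ref{b2}) applied would give $\mu'=k!\,k\,a(k)\gg1$, which destroys smoothness entirely.) After taking $k$th roots at $k=\log m$ this no longer yields $O(\log^2 m)$.

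The paper sidesteps this factorial loss by never flattening all the $\Psi$-values to powers of $L$ in one stroke. It instead proves, by induction on $k$, the single-degree per-machine inequality $\sum_{j:x^*_j=i}p_{ij}\Psi_k(\vect{x}(i)\cup\{j\})\leq\frac{\mu_k}{k}\Psi_{k+1}(\vect{x}(i))+\frac{\lambda_k}{k}\Psi_{k+1}(\vect{x}^*(i))$, peeling off exactly one degree via $\Psi_k(A\cup\{b\})=\Psi_k(A)+kb\,\Psi_{k-1}(A\cup\{b\})$ (Lemma~\ref{lem:psi}(iii)), and controlling each step with the log-concavity fact $\Psi_k(A)^{k+1}\leq\Psi_{k+1}(A)^k$ (Lemma~\ref{lem:psi}(ii)) plus the scalar inequality~(\ref{ineq:a-b}); the recursions $\mu_k=k/(k+1)$ and $\lambda_k=2^{k+1}k^{k+1}$ are chosen so the induction closes with $\lambda_k/(1-\mu_k)\leq2^{k+1}(k+1)^{k+2}$. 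Because the $\Psi\leftrightarrow L$ conversion is applied locally (once per step, in a controlled position), the accumulated constant stays of order $k^{k}$ rather than $(k!)^{k}$. To make your expansion route work you would need to replace the uniform bound $\Psi_r\leq r!L^r$ at every $r$ by sharper per-term estimates exploiting the log-concavity of $(\Psi_r)_r$, so as to cancel the global $k!$ before invoking Lemma~\ref{lem:smooth-simple}; as written, the loss is fatal.
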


\begin{theorem}[\cite{Caragiannis:Efficient-coordination-mechanisms}]
The PoA of the game under policy $\textsf{CCOORD}$ is
$O(\log^{2}m)$ by choosing $k = \log m$.
\end{theorem}


\subsection{Policy \textsf{Balance}}

Let $\vect{x}$ be a strategy profile. Let $\prec_{i}$ be a total order
on the  jobs assigned to machine  $i$.  Formally, $j  \prec_{i} j'$ if
$p_{ij}  < p_{ij'}$, or  $p_{ij}=p_{ij'}$ and  $j$ is priority over $j'$ (machine $i$
chooses a local preference over jobs based on their local identities to break ties).
Note that the policy does not need a global job identities 
(there is no communication cost between machines about job identities)
and a job may have different priority on different machines.
The policy is clearly anonymous.

The cost $c_{j}$ of job $j$ assigned to machine $i$ is defined as follows.
$$
c^{k}_{j}(\vect{x}) = 
\begin{cases}
	\frac{1}{q_{j}}\biggl[\Big(p_{ij} + \displaystyle\sum_{\substack{j': j' \prec_{i} j \\ x_{j'} = i}} p_{ij'}\Big)^{k+1} - 
		\Big(\displaystyle \sum_{\substack{j': j' \prec_{i} j \\ x_{j'} = i}} p_{ij'}\Big)^{k+1} \biggl] &\text{ if } \rho_{ij} \leq m,\\
	\infty &\text{ otherwise.}
\end{cases}
$$ 

Observe that the cost $c_{j}(\vect{x})$ of job $j$ satisfies 
\begin{align*}
c^{k}_{j}(\vect{x}) 
&\geq 
\frac{1}{q_{j}}\left[\Big(p_{ij} + \displaystyle\sum_{j': j' \prec_{i} j , \ x_{j'} = i} p_{ij'}\Big)^{k+1} - 
		\Big(\displaystyle \sum_{j': j' \prec_{i} j , \ x_{j'} = i} p_{ij'}\Big)^{k+1} \right] \\
&\geq
\frac{p_{ij}}{q_{j}}\Big(p_{ij} + \sum_{j': j' \prec_{i} j , \ x_{j'} = i} p_{ij'}\Big)^{k}  
\geq \Big(p_{ij}  + \sum_{j': j' \prec_{i} j , \ x_{j'} = i}p_{ij'}\Big)^{k}
\end{align*}
since $p_{ij}/q_{j} \geq 1$.   
As that holds for every job $j$
assigned to machine $i$, policy \textsf{Balance} is feasible.

Remark that even there is some similarity in the definition of \textsf{Balance} and policy
\textsf{ACCORD} \cite{Caragiannis:Efficient-coordination-mechanisms}, 
the latter is not anonymous. \textsf{ACCORD} uses a global
job ordering in its definition and it makes use this order to prove the 
existence and inefficiency of Nash equilibria whereas \textsf{Balance} uses only
local job identities in case of tie break (that is unavoidable for any policy).  

\begin{lemma}
The best-response dynamic under the \textsf{Balance} policy converges to a Nash equilibrium.
\end{lemma}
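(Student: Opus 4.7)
The plan is to exhibit an ordinal potential $\Phi$ whose value strictly decreases under every cost-improving unilateral deviation. Combined with the finiteness of the strategy space (after scaling processing times to positive integers, as in Section~\ref{sec:lk}), this forces the best-response dynamic to terminate in finitely many steps at a profile where no player has an improving move, i.e., a pure Nash equilibrium. The starting observation is a telescoping identity that falls out of the \textsf{Balance} cost. Fix a machine $i$ and enumerate the jobs assigned to $i$ in $\prec_i$-order as $j_1, j_2, \ldots, j_{n_i}$, setting $P_r = \sum_{s=1}^{r} p_{i j_s}$, so that $P_0 = 0$ and $P_{n_i} = L(\vect{x}(i))$. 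Directly from the cost formula, $q_{j_r} c^k_{j_r}(\vect{x}) = P_r^{k+1} - P_{r-1}^{k+1}$, and summing over $r$ telescopes to $\sum_{j : x_j = i} q_j c^k_j(\vect{x}) = L(\vect{x}(i))^{k+1}$. This naturally suggests the candidate
\[
\Phi(\vect{x}) \;=\; \sum_{i=1}^{m} L(\vect{x}(i))^{k+1}.
\]

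Next, I would analyze how $\Phi$ and the deviator's cost change under a unilateral move of job $j$ from machine $i$ to machine $i'$. Writing $L_i = L(\vect{x}(i))$ and $L_{i'} = L(\vect{x}(i'))$, and letting $S_j$, $S'_j$ denote the in-order prefix sums of processing times on $i$ and $i'$ up to $j$, one has
\[
\Delta\Phi \;=\; \bigl[(L_i - p_{ij})^{k+1} - L_i^{k+1}\bigr] \;+\; \bigl[(L_{i'} + p_{i'j})^{k+1} - L_{i'}^{k+1}\bigr],
\]
while $q_j \Delta c^k_j = [(S'_j + p_{i'j})^{k+1} - (S'_j)^{k+1}] - [(S_j + p_{ij})^{k+1} - S_j^{k+1}]$. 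The target of this step is to show that $\Delta c^k_j < 0$ implies $\Delta\Phi < 0$, using monotonicity of $g_p(x) = (x+p)^{k+1} - x^{k+1}$ in $x$ (a convexity consequence exploited several times in Section~\ref{sec:smooth}) together with the cap $\rho_{ij} \leq m$ that restricts the deviator's processing time to at most $m \cdot q_j$.

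The main obstacle lies precisely in this comparison: the monotonicity of $g_p$ bounds the $i$-contribution and the $i'$-contribution of $\Delta\Phi$ against those of $q_j \Delta c^k_j$ in \emph{opposite} directions, so a naive sign argument on the sum does not close. My fallback is to strengthen $\Phi$ into a lexicographic object built from the prefix-power sequences $(P_r^{k+1})_{r}$ on each machine, which records where $j$ sits inside the $\prec_i$-order on its host machine rather than only the total load, or alternatively to multiply each load contribution by a $q_j$-weighted factor that exploits the fact that $\prec_i$ sorts jobs on $i$ by $p_{i\cdot}$ up to priority tie-breaking. Once \emph{any} ordinal potential is established on the finite strategy space and shown to take finitely many values (using the integrality of scaled processing times), strict monotonicity along any best-response sequence forces termination, necessarily at a pure Nash equilibrium.
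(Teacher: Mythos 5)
Your opening analysis is correct in two ways that matter: the telescoping identity $\sum_{j:x_j=i} q_j c_j^k(\vect{x}) = L(\vect{x}(i))^{k+1}$ indeed holds and naturally suggests $\Phi(\vect{x}) = \sum_i L(\vect{x}(i))^{k+1}$ as an additive candidate, and you are also right that the naive sign argument does not close. In fact it genuinely fails, not merely ``does not obviously work'': take $k=1$, machine $i$ holding only $j$ with $p_{ij}=2$, machine $i'$ holding one job of processing time $1.6$, and $p_{i'j}=1$ with suitable $q_j$. Then $j$'s deviation to $i'$ drops $q_j c_j^k$ from $4$ to $1$, yet $\Phi$ rises from $4+2.56$ to $0+6.76$. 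So no tweak of constants rescues the additive load potential; the problem is structural, exactly for the reason you identify (monotonicity of $(x+p)^{k+1}-x^{k+1}$ bounds the two machine contributions against the deviator's cost change in opposite directions).

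Where the proposal falls short is that the fallbacks remain gestures rather than arguments, and neither one, as sketched, is the missing ingredient. Your first fallback proposes a lexicographic object built from \emph{per-machine} prefix-power sequences $(P_r^{k+1})_r$; but a single improving move changes these sequences on two machines in opposite directions, and it is not clear what global lexicographic quantity you mean nor why it would be monotone under improving moves. Your second fallback ($q_j$-weighted load factor) is also not worked out. More importantly, you missed a cheap and decisive reduction that the paper uses first: since $q_j$ is a \emph{fixed per-job scalar}, it never affects which machine $j$ prefers among the admissible ones ($\rho_{ij}\le m$). One can therefore replace $c_j^k$ by $(c'_j)^k := q_j c_j^k = \bigl(p_{ij}+\sum_{j'\prec_i j} p_{ij'}\bigr)^{k+1} - \bigl(\sum_{j'\prec_i j} p_{ij'}\bigr)^{k+1}$ without changing best responses, and then observe that $c'_j$ is a monotone function of the \textsf{SPT} completion time $p_{ij}+\sum_{j'\prec_i j}p_{ij'}$. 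After this, the paper runs the standard \textsf{SPT}-style lexicographic argument, but on a \emph{global} object: for each threshold $u$ and each rank $t$ it counts jobs with cost below $u$, or equal to $u$ and local rank at most $t$, and shows this counting vector strictly increases lexicographically under any improving move. The key observations making that work are (i) the only jobs whose cost can increase are those on the destination machine that come \emph{after} $j$ in $\prec$, and their old cost is already at least $j$'s new cost; and (ii) jobs on the source machine whose cost decreases still stay above $j$'s old cost. Neither of your sketches supplies the analog of (i) and (ii), so as written the proof has a genuine gap. The diagnosis of why the additive potential fails is correct and is the hard part of seeing the right road; but the lexicographic argument itself, and in particular the elimination of $q_j$ that turns the policy into an \textsf{SPT}-like game, are missing.
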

\begin{proof}
By the definition of the policy, any job $j$ will choose a machine $i$ such that 
$\rho_{ij} \leq m$. Moreover, since $q_{j}$ is fixed for each job $j$, the behavior of 
jobs is similar to that in the game in which the set of strategy of a player $j$
is the same as in the former except for machines $i$ with $\rho_{ij} > m$. 
Moreover, in the new game, player $j$ in profile $\vect{x}$ has cost $c'_{j}(\vect{x})$ 
such that
$$
\Big(c'_{j}(\vect{x})\Big)^{k} = \Big(p_{ij} + \sum_{j' \prec_{i} j} p_{ij'}\Big)^{k+1} - 
		\Big(\sum_{j' \prec_{i} j} p_{ij'}\Big)^{k+1}
$$
Hence, it is sufficient to prove that the better-response dynamic in the new game always converges.
The argument is the same as the one to prove the existence of Nash equilibrium for 
policy \textsf{SPT}~\cite{ImmorlicaLiMirrokni:Coordination-mechanisms}. 
Here we present a proof based on a geometrical approach.

First, define $\texttt{pos}_{i}(j) := 1 +  |\{j': j' \prec_{i} j, 1\leq j' \neq j \leq n\}|$ which represents 
the priority of job $j$ on machine $i$. For a
value $u \in \mathbb{R}^{+}$ and a job index $1 \leq t \leq n$, we associate to every profile $\vect x$
the quantity
\[
    |\vect x|_{u,t} := |  \{ j: c'_j(\vect x)<u \text{ or }  c'_j(\vect x)=u, \texttt{pos}_{x_{j}}(j)\leq t \}|.
\]
We use  it to  define a partial  order $\prec$ on  profiles.  Formally
$\vect x \prec \vect y$ if for the lexicographically smallest pair
$(u,t)$ such that $|\vect x|_{u,t} \neq |\vect y|_{u,t}$ we have
$|\vect x|_{u,t} < |\vect y|_{u,t}$.
\begin{figure}[htbp]
    \centering{
      \input{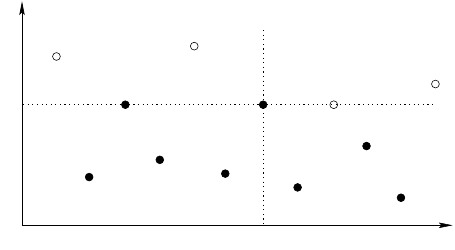_t}
     }
    \caption{An geometrical  illustration of $|\vect  x|_{u,t}$, every
      dot is a $(j,c_j(\vect x))$ pair, colored black if counted in $|\vect  x|_{u,t}$.}
    \label{fig:balance}
\end{figure}

We show that the profile strictly increases according to this order, whenever a
job changes to another machine while decreasing its cost. Let $j$ be
such a job changing from  machine $a$ in profile $\vect{x}$ to machine
$b$,   resulting   in   a    profile   $\vect{y}$.    We   know   that
$c'_{j}(\vect{y})  < c'_{j}(\vect{x})$.   Remark that  only  jobs $j'$
with $x_{j'}=b$ might have the cost in $\vect y$ larger than that in $\vect x$ (by
definition of  the cost  $c'$).  Moreover, such job $j'$
with $x_{j'}=b$ and $j'$ has a different costs in $\vect{x}$ and $\vect{y}$, 
it must be $j \prec_{b} j'$,  which also implies  $c'_{j'}(\vect x)\geq c'_{j}(\vect y)$.   
In the  same spirit, some jobs $j'$  with $x_{j'}=a$  might decrease their  cost, 
but  not below $c'_j(\vect x)$. 

Consider $u=c'_j(\vect y)$  and $t= \texttt{pos}_{b}(j)$. 
We have that $|\vect x|_{u',t'} = |\vect y|_{u',t'}$
for all $u' < u$ and all $t'$. If job $j$ is the only job with processing time 
$p_{bj}$ among the ones $\{j': x_{j'} = b\}$, then 
$|\vect y|_{u,t} = |\vect x|_{u,t} + 1$. Otherwise,
$|\vect y|_{u,t'} = |\vect x|_{u,t'}$ for $t' < t$ and  
$|\vect y|_{u,t} = |\vect x|_{u,t} + 1$.

Therefore $(u,t)$ is the first  lexicographical  pair  
where $|\vect x|_{u,t} \neq |\vect y|_{u,t}$ and 
$|\vect y|_{u,t} > |\vect x|_{u,t}$.
Hence,  since the  set of  strategy profiles  is finite,  the better-response
dynamic must converge to a pure Nash equilibrium.  This completes the proof.
\end{proof}

Remark that the game under \textsf{Balance} convergences fast to Nash equilibria 
in the best-response dynamic (the argument is the same as 
\cite[Theorem 12]{ImmorlicaLiMirrokni:Coordination-mechanisms}).

\begin{lemma}		\label{lem:balance-smooth}
Let $\vect{x}$ and $\vect{x}^{*}$ be an equilibrium and an $m$-efficient arbitrary profile, respectively.
Then, $\sum_{i=1}^{m} L^{k+1}(\vect{x}(i)) \leq O(\alpha^{k} k^{k+1})\sum_{i=1}^{m} L^{k+1}(\vect{x}^{*}(i))$
where $\alpha$ is some constant.
\end{lemma}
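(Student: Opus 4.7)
The plan is to apply the smoothness argument, with Lemma~\ref{lem:smooth-simple} providing the key inequality. The crucial observation is that the definition of \textsf{Balance} produces a telescoping identity: for any profile $\vect{y}$ and any machine $i$, if the jobs assigned to $i$ are ordered $j_1 \prec_i \cdots \prec_i j_r$ and we set $P_s := \sum_{t=1}^{s} p_{ij_t}$ with $P_0=0$, then the definition gives $q_{j_s} c_{j_s}^{k}(\vect{y}) = P_s^{k+1} - P_{s-1}^{k+1}$, so summing over $s$ yields the clean identity
\[
\sum_{j:\, y_j = i} q_j\, c_j^{k}(\vect{y}) \;=\; L^{k+1}(\vect{y}(i)).
\]
This is what will ultimately produce $L^{k+1}$ on the left-hand side of the target bound.

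Next I would invoke the Nash condition $c_j^{k}(\vect{x}) \leq c_j^{k}(x_{-j}, x_j^{*})$, which is well-defined: since $\vect{x}^{*}$ is $m$-efficient, $\rho_{x_j^{*} j} \leq m$, and so the deviation of $j$ to $x_j^{*}$ has finite cost. Writing $i^{*} = x_j^{*}$ and $S_j := \sum_{j' \prec_{i^{*}} j,\, x_{j'}=i^{*}} p_{i^{*} j'} \leq L(\vect{x}(i^{*}))$, the deviation cost satisfies $q_j c_j^{k}(x_{-j}, x_j^{*}) = (S_j + p_{i^{*} j})^{k+1} - S_j^{k+1}$. Since the map $z \mapsto (z + p_{i^{*} j})^{k+1} - z^{k+1}$ is monotone increasing in $z \geq 0$, I may replace $S_j$ by $L(\vect{x}(i^{*}))$, and then the mean value theorem yields
\[
q_j\, c_j^{k}(x_{-j}, x_j^{*}) \;\leq\; (k+1)\, p_{i^{*} j}\, \bigl(L(\vect{x}(i^{*})) + p_{i^{*} j}\bigr)^{k}.
\]

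Summing the Nash inequality weighted by $q_j$, grouping the right-hand side by $i = x_j^{*}$, and using both $p_{ij} \leq L(\vect{x}^{*}(i))$ and $\sum_{j:\, x_j^{*} = i} p_{ij} = L(\vect{x}^{*}(i))$, the telescoping identity on the left reduces the problem to
\[
\sum_{i} L^{k+1}(\vect{x}(i)) \;\leq\; (k+1)\sum_{i} y_i\, (x_i + y_i)^{k},
\]
with $x_i := L(\vect{x}(i))$ and $y_i := L(\vect{x}^{*}(i))$. I then apply Lemma~\ref{lem:smooth-simple} with the choice $a(k) = 1/(k+1)$: since $(k-1)a(k) \leq 1$ is bounded, case~(\ref{b2}) gives $b(k) = \Theta(\alpha^{k} k^{k-1})$, so that $y_i(x_i+y_i)^{k} \leq \tfrac{k}{(k+1)^{2}} x_i^{k+1} + b(k)\, y_i^{k+1}$. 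The $x_i^{k+1}$-term then comes out with coefficient $k/(k+1) < 1$ after the outer $(k+1)$, which can be absorbed to the left, leaving
\[
\sum_{i} L^{k+1}(\vect{x}(i)) \;\leq\; (k+1)^{2}\, b(k) \sum_{i} L^{k+1}(\vect{x}^{*}(i)) \;=\; O(\alpha^{k} k^{k+1}) \sum_{i} L^{k+1}(\vect{x}^{*}(i)).
\]

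The main obstacle will be keeping the local orderings $\prec_{i^{*}}$ and the $q_j$-weights straight in the deviation-cost bound; in particular, the monotonicity step that upgrades $S_j$ to $L(\vect{x}(i^{*}))$ must be justified carefully. Once that is done, the telescoping identity and Lemma~\ref{lem:smooth-simple} do the rest, and the specific choice $a(k) = 1/(k+1)$ is precisely what lets case~(\ref{b2}) apply and simultaneously absorb the extra factor of $(k+1)$ coming from the mean-value step.
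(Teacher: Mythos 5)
Your proof is correct and follows essentially the same route as the paper: the telescoping identity $\sum_{j:\,x_j=i} q_j c_j^k(\vect{x}) = L^{k+1}(\vect{x}(i))$, the Nash condition combined with monotonicity and Lemma~\ref{lem:simple} (your mean-value step is equivalent) to bound the deviation cost by $(k+1)p_{i^*j}(p_{i^*j}+L(\vect{x}(i^*)))^k$, grouping by $i = x_j^*$, and then Lemma~\ref{lem:smooth-simple} case~(\ref{b2}) with exactly the choice $a(k)=1/(k+1)$ that the paper uses (implicitly, via the coefficient $k/(k+1)^2$). The absorption giving the extra $(k+1)^2$ factor and hence $O(\alpha^k k^{k+1})$ also matches the paper.
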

\begin{proof}
We focus on an arbitrary job $j$. Denote $i = x_{j}$ and $i^{*} = x^{*}_{j}$. As $\vect{x}$ is an equilibrium, we have
$c^{k}_{j}(\vect{x}) \leq c^{k}_{j}(x_{-j},x^{*}_{j})$, i.e, 
\begin{align}	\label{ineq:balance}
&\Big(p_{ij} + \sum_{\substack{j': j' \prec_{i} j \\ x_{j'} = i}} p_{ij'}\Big)^{k+1} 
	- \Big(\sum_{\substack{j': j' \prec_{i} j \\ x_{j'} = i}} p_{ij'}\Big)^{k+1}
\leq \Big(p_{i^{*}j} + \sum_{\substack{j': j' \prec_{i^{*}} j \\ x_{j'} = i^{*}}} p_{i^{*}j'}\Big)^{k+1} 
		- \Big(\sum_{\substack{j': j' \prec_{i^{*}} j \\ x_{j'} = i^{*}}} p_{i^{*}j'}\Big)^{k+1} \notag \\
& \qquad \leq \Big( p_{i^{*}j} + L(\vect{x}(i^{*}))\Big)^{k+1} -  \Big(L(\vect{x}(i^{*}))\Big)^{k+1} 
	\leq (k+1)p_{i^{*}j} \Big( p_{i^{*}j} + L(\vect{x}(i^{*}))\Big)^{k}
\end{align}
where the second inequality is due to the fact that  $(z+a)^{k+1} - z^{k+1}$ is an increasing in $z$ 
(for $a > 0$) and $\sum_{\substack{j': j' \prec_{i} j \\ x_{j'} = i^{*}}} p_{i^{*}j'} \leq L(\vect{x}(i^{*}))$;
the third inequality is due to Lemma~\ref{lem:simple} (by dividing both sides by 
$(p_{i^{*}j} + L(\vect{x}(i^{*})))^{k+1}$ and applying $z = \frac{p_{i^{*}j}}{p_{i^{*}j} + L(\vect{x}(i^{*}))}$
in the statement of Lemma~\ref{lem:simple}). Therefore,
\begin{align*}
\sum_{i=1}^{m} & L^{k+1}(\vect{x}(i)) = \sum_{i=1}^{m}\sum_{j:x_{j} = i} q_{j}c^{k}_{j}(\vect{x})
	\leq \sum_{i=1}^{m}\sum_{j:x_{j} = i} q_{j}c^{k}_{j}(x_{-j},x^{*}_{j}) \\
&\leq \sum_{i=1}^{m}\sum_{j:x^{*}_{j} = i} (k+1)p_{ij} \Big( p_{ij} + L(\vect{x}(i))\Big)^{k+1}
	\leq (k+1)\sum_{i=1}^{m} L(\vect{x}^{*}(i))\Big( L(\vect{x}(i)) + L(\vect{x}^{*}(i))\Big)^{k} \\
&\leq (k+1)\sum_{i=1}^{m} \frac{k}{(k+1)^{2}} L^{k+1}(\vect{x}(i)) 
		+ O\left(\alpha^{k}k^{k-1}\right) L^{k+1}(\vect{x}^{*}(i))
\end{align*}
where the first inequality is because $\vect{x}$ is an equilibrium; 
the second inequality is due to the sum of Inequality~(\ref{ineq:balance}) taken over all jobs $j$;
and the fourth inequality is due to case (\ref{b2}) of Lemma~\ref{lem:smooth-simple}. 
Arranging the terms, the lemma follows.
\end{proof}

\begin{theorem}	\label{thm:balance}
The PoA of the game under policy $\textsf{Balance}$ is at most
$O(\log m)$ by choosing $k = \log m$. 
\end{theorem}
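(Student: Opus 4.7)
The plan is to combine Lemma~\ref{lem:balance-smooth} with a standard power-mean argument to pass from the $\ell_{k+1}$-norm of machine loads to the $\ell_\infty$-norm (makespan), and then use Lemma~\ref{m-efficient} to reduce from $m$-efficient profiles to arbitrary optimal profiles. The parameter $k$ will be tuned so that the factor $m^{1/(k+1)}$ becomes a constant while the factor $O(k)$ remains $O(\log m)$.

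First, let $\vect{x}$ be a pure Nash equilibrium of the game under \textsf{Balance} (which exists by the preceding lemma on convergence), and let $\vect{y}^{*}$ be an optimal assignment. By Lemma~\ref{m-efficient}, there is an $m$-efficient profile $\vect{x}^{*}$ with $L(\vect{x}^{*}) \leq 2 L(\vect{y}^{*})$. Applying Lemma~\ref{lem:balance-smooth} to $\vect{x}$ and $\vect{x}^{*}$ gives
$$
\sum_{i=1}^{m} L^{k+1}(\vect{x}(i)) \leq O\bigl(\alpha^{k} k^{k+1}\bigr)\sum_{i=1}^{m} L^{k+1}(\vect{x}^{*}(i)).
$$

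Second, I would observe that the social cost of $\vect{x}$ under the makespan objective is exactly $L(\vect{x}) = \max_{i} L(\vect{x}(i))$, so
$$
L^{k+1}(\vect{x}) \leq \sum_{i=1}^{m} L^{k+1}(\vect{x}(i)),
$$
while on the right-hand side $\sum_{i} L^{k+1}(\vect{x}^{*}(i)) \leq m \cdot L^{k+1}(\vect{x}^{*})$. Chaining these yields
$$
L(\vect{x}) \leq \bigl(O(\alpha^{k} k^{k+1})\bigr)^{1/(k+1)} \cdot m^{1/(k+1)} \cdot L(\vect{x}^{*}) = O\bigl(k\bigr)\cdot m^{1/(k+1)} \cdot L(\vect{x}^{*}).
$$

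Finally, setting $k = \log m$ makes $m^{1/(k+1)} \leq m^{1/\log m} = e = O(1)$, so the prefactor is $O(\log m)$. Combining with $L(\vect{x}^{*}) \leq 2 L(\vect{y}^{*})$ from Lemma~\ref{m-efficient} delivers $L(\vect{x}) \leq O(\log m)\, L(\vect{y}^{*})$. Since the makespan of the optimal assignment is exactly $L(\vect{y}^{*})$, this is the desired bound on the PoA. The only nontrivial step is the Lemma~\ref{lem:balance-smooth} itself; given that lemma, the remaining argument is essentially a power-mean trick, so I do not anticipate a real obstacle here beyond choosing $k$ correctly to balance the two sources of loss.
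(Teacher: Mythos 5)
Your proof has a genuine gap in the step where you claim that ``the social cost of $\vect{x}$ under the makespan objective is exactly $L(\vect{x}) = \max_{i} L(\vect{x}(i))$.'' This is false for the equilibrium profile under \textsf{Balance}. The social cost is $C(\vect{x}) = \max_j c_j(\vect{x})$, where $c_j$ is the completion time defined by the policy, and for a job $j$ on machine $i$ the policy deliberately inflates $c_j$ above the load $L(\vect{x}(i))$: indeed
$$
c^{k}_{j}(\vect{x}) = \frac{1}{q_j}\Bigl[\Bigl(p_{ij}+\textstyle\sum_{j'\prec_i j}p_{ij'}\Bigr)^{k+1}-\Bigl(\textstyle\sum_{j'\prec_i j}p_{ij'}\Bigr)^{k+1}\Bigr],
$$
which, because of the $1/q_j = \rho_{ij}/p_{ij}$ factor, is strictly larger than $L^k(\vect{x}(i))$ whenever $\rho_{ij}>1$ for the last job on machine $i$. (The inequalities immediately after the policy's definition in the paper show $c_j(\vect{x}) \geq L(\vect{x}(i))$ for feasibility; equality is not typical.) You therefore bound $L(\vect{x})$ but not $C(\vect{x})$, and the PoA compares $C(\vect{x})$ against the optimal makespan $L(\vect{y}^{*})$.

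The missing ingredient is the observation, made in the paper's proof, that the equilibrium $\vect{x}$ is $m$-efficient (a job never picks a machine with $\rho_{ij}>m$, since that gives infinite cost), and that Lemma~\ref{lem:simple} gives the bound $c_j(\vect{x}) \leq \bigl((k+1)\rho_{ij}\bigr)^{1/k}L(\vect{x}(i)) \leq \bigl((k+1)m\bigr)^{1/k}L(\vect{x})$. The factor $((k+1)m)^{1/k}$ is $O(1)$ at $k=\log m$, so it does not affect the final $O(\log m)$ bound, but your proof cannot conclude $C(\vect{x}) \leq O(\log m)L(\vect{y}^{*})$ without this bridge between $C(\vect{x})$ and $L(\vect{x})$. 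The rest of your argument (Lemma~\ref{lem:balance-smooth}, the power-mean step $L^{k+1}(\vect{x}) \leq \sum_i L^{k+1}(\vect{x}(i)) \leq m L^{k+1}(\vect{x})$, and Lemma~\ref{m-efficient}) matches the paper and is fine.
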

\begin{proof}
Let $\vect{y}^{*}$ be an optimal assignment and $\vect{x}^{*}$ be an $m$-efficient 
assignment with property of Lemma~\ref{lem:m-efficient}. Let $\vect{x}$ be an 
equilibrium. Remark that $\vect{x}$ is a $m$-efficient assignment since
every job can always get a bounded cost.  
Consider a job $j$ assigned to machine $i$ in profile $\vect{x}$.
As $\vect{x}$ is a $m$-efficient assignment, by the definition of the policy \textsf{Balance} 
\begin{align*}
c^{k}_{j}(\vect{x}) &= \frac{1}{q_{j}}\biggl[\Big(p_{ij} + \displaystyle\sum_{\substack{j': j' \prec_{i} j \\ x_{j'} = i}} p_{ij'}\Big)^{k+1} - 
		\Big(\displaystyle \sum_{\substack{j': j' \prec_{i} j \\ x_{j'} = i}} p_{ij'}\Big)^{k+1} \biggl] \\
		& \leq \frac{1}{q_{j}}\biggl[\Big(L(\vect{x}(i))\Big)^{k+1} - 
		\Big( L(\vect{x}(i)) - p_{ij} \Big)^{k+1} \biggl]
			\leq (k+1)\rho_{ij}L^{k}(\vect{x}(i))
\end{align*}
where the first inequality is because function $(a+x)^{k+1} - x^{k+1}$ is increasing; and  
the last inequality is due to Lemma~\ref{lem:simple} (by dividing both sides by 
$L^{k+1}(\vect{x}(i))$ and applying $z = \frac{p_{ij}}{L(\vect{x}(i))}$
in the statement of Lemma~\ref{lem:simple}). 
Moreover, by Lemma~\ref{lem:balance-smooth}, we have
$$
L^{k+1}(\vect{x}) \leq \sum_{i=1}^{m}L^{k+1}(\vect{x}(i)) 
	\leq O(\alpha^{k} k^{k+1})\sum_{i=1}^{m} L^{k+1}(\vect{x}^{*}(i))
	\leq O(\alpha^{k} k^{k+1} m) L^{k+1}(\vect{x}^{*})
$$
for some constant $\alpha$. Therefore, 
\begin{align*}
C(\vect{x}) &= \max_{j} c_{j}(\vect{x}) \leq \max_{i,j} \Big((k+1)\rho_{ij}\Big)^{1/k} L(\vect{x}(i)) 
	\leq \Big((k+1)m\Big)^{1/k}L(\vect{x}) \\
&\leq O\left(\Big(k^{k+2} m^{2}\Big)^{1/k}\right) L(\vect{x}^{*}) 
	\leq O\left(\Big(k^{k+2} m^{2}\Big)^{1/k}\right) L(\vect{y}^{*})
\end{align*}
where the last inequality is due to Lemma~\ref{m-efficient}. 
Choosing $k = \log m$, the theorem follows. 
\end{proof}
%
%
\bibliographystyle{plainnat}
\bibliography{gametheory}

\begin{thebibliography}{16}
\providecommand{\natexlab}[1]{#1}
\providecommand{\url}[1]{\texttt{#1}}
\expandafter\ifx\csname urlstyle\endcsname\relax
  \providecommand{\doi}[1]{doi: #1}\else
  \providecommand{\doi}{doi: \begingroup \urlstyle{rm}\Url}\fi

\bibitem[Azar et~al.(2008)Azar, Jain, and
  Mirrokni]{AzarJainMirrokni:Almost-Optimal-Coordination}
Yossi Azar, Kamal Jain, and Vahab~S. Mirrokni.
\newblock ({A}lmost) optimal coordination mechanisms for unrelated machine
  scheduling.
\newblock In \emph{Proceedings of the 19th Annual ACM-SIAM Symposium on
  Discrete Algorithms (SODA)}, pages 323--332, 2008.

\bibitem[Bhawalkar et~al.(2010)Bhawalkar, Gairing, and
  Roughgarden]{BhawalkarGairingRoughgarden:Weighted-Congestion}
Kshipra Bhawalkar, Martin Gairing, and Tim Roughgarden.
\newblock Weighted congestion games: Price of anarchy, universal worst-case
  examples, and tightness.
\newblock In \emph{Proceedings of the 18th Annual European Symposium on
  Algorithms (ESA)}, pages 17--28, 2010.

\bibitem[Caragiannis(2009)]{Caragiannis:Efficient-coordination-mechanisms}
Ioannis Caragiannis.
\newblock Efficient coordination mechanisms for unrelated machine scheduling.
\newblock In \emph{Proceedings of the 20th Annual ACM-SIAM Symposium on
  Discrete Algorithms (SODA)}, pages 815--824, 2009.

\bibitem[Christodoulou et~al.(2004)Christodoulou, Koutsoupias, and
  Nanavati]{ChristodoulouKoutsoupiasNanavati:Coordination-Mechanisms}
George Christodoulou, Elias Koutsoupias, and Akash Nanavati.
\newblock Coordination mechanisms.
\newblock In \emph{Proceedings of the 31st International Colloquium on
  Automata, Languages and Programming (ICALP)}, pages 345--357, 2004.

\bibitem[Cohen et~al.(2011)Cohen, D\"urr, and
  Thang]{CohenDurrThang:Non-clairvoyant-Scheduling}
Johanne Cohen, Christoph D\"urr, and Nguyen~Kim Thang.
\newblock Non-clairvoyant scheduling games.
\newblock \emph{Theory of Computing Systems}, 49\penalty0 (1):\penalty0 3--23,
  2011.

\bibitem[Cole et~al.(2011)Cole, Correa, Gkatzelis, Mirrokni, and
  Olver]{ColeCorreaGkatzelis:Inner-Product}
Richard Cole, Jos\'e~R. Correa, Vasilis Gkatzelis, Vahab~S. Mirrokni, and Neil
  Olver.
\newblock Inner product spaces for minsum coordination mechanisms.
\newblock In \emph{Proceedings of the 43st Annual ACM Symposium on Theory of
  Computing (STOC)}, pages 539--548, 2011.

\bibitem[Fleischer and
  Svitkina(2010)]{FleischerSvitkina:Preference-constrained-oriented}
Lisa Fleischer and Zoya Svitkina.
\newblock Preference-constrained oriented matching.
\newblock In \emph{Workshop on Analytic Algorithmics and Combinatorics
  (ANALCO)}, pages 56--65, 2010.

\bibitem[Immorlica et~al.(2009)Immorlica, Li, Mirrokni, and
  Schulz]{ImmorlicaLiMirrokni:Coordination-mechanisms}
Nicole Immorlica, Li~(Erran) Li, Vahab~S. Mirrokni, and Andreas~S. Schulz.
\newblock Coordination mechanisms for selfish scheduling.
\newblock \emph{Theor. Comput. Sci.}, 410\penalty0 (17):\penalty0 1589--1598,
  2009.

\bibitem[Monderer and Shapley(1996)]{MondererShapley:Potential-Games}
Dov Monderer and Lloyd~S. Shapley.
\newblock Potential games.
\newblock \emph{Games and Economic Behavior}, 14:\penalty0 124--143, 1996.

\bibitem[Roman(1984)]{Roman:The-Umbral-Calculus}
Steve Roman.
\newblock \emph{The Umbral Calculus}.
\newblock New York: Academic Press, 1984.

\bibitem[Rosenthal(1973)]{Rosenthal:A-Class-of-Games-Possessing}
Robert~W. Rosenthal.
\newblock A class of games possessing pure-strategy {Nash} equilibria.
\newblock \emph{International Journal of Game Theory}, 2:\penalty0 65--67,
  1973.

\bibitem[Roughgarden(2009)]{Roughgarden:Intrinsic-robustness}
Tim Roughgarden.
\newblock Intrinsic robustness of the price of anarchy.
\newblock In \emph{Proceedings of the 41st Annual ACM Symposium on Theory of
  Computing (STOC)}, pages 513--522, 2009.

\bibitem[Roughgarden and
  Schoppmann(2011)]{RoughgardenSchoppmann:Local-Smoothness}
Tim Roughgarden and Florian Schoppmann.
\newblock Local smoothness and the price of anarchy in atomic splittable
  congestion games.
\newblock In \emph{Proceedings of the 22nd Annual ACM-SIAM Symposium on
  Discrete Algorithms (SODA)}, 2011.

\bibitem[Silberschatz and Galvin(1994)]{SilberschatzGalvin:Operating-System}
Abraham Silberschatz and Peter Galvin.
\newblock \emph{Operating System Concepts}.
\newblock Addison-Wesley, 1994.

\bibitem[Suri et~al.(2007)Suri, T{\'o}th, and Zhou]{SuriTothZhou:Selfish-Load}
Subhash Suri, Csaba~D. T{\'o}th, and Yunhong Zhou.
\newblock Selfish load balancing and atomic congestion games.
\newblock \emph{Algorithmica}, 47\penalty0 (1):\penalty0 79--96, 2007.

\bibitem[Tanenbaum(2007)]{Tanenbaum:Modern-Operating}
Andrew~S. Tanenbaum.
\newblock \emph{Modern Operating Systems}.
\newblock Prentice Hall Press, 2007.

\end{thebibliography}

\newpage    

\section*{APPENDIX}
\setcounter{section}{1}

\section{Smooth inequalities}

Before proving Lemma~\ref{lem:smooth-EQUI}, we need to show the 
following lemma. The technique is similar to the proof of 
Lemma~\ref{lem:smooth-SPT}.

\setcounter{theorem}{4}
%
%
\begin{lemma}    \label{lem:smooth-SPT-modif}   
 For  any   non-negative
  sequences  $(n_{i})_{i=1}^{P}$,  $(m_{i})_{i=1}^{P}$,  and  for  any
  positive increasing  sequence $(q_{i})_{i=1}^{P}$, define  $A_{i} :=
  n_{1}q_{1}  + \ldots +  n_{i-1}q_{i-1} +  n_{i}q_{i}$ and  $B_{i} :=
  m_{1}q_{1} +  \ldots +  m_{i-1}q_{i-1} + m_{i}q_{i}$  for $1  \leq i
  \leq P$. Then, it holds that
$$
\sum_{i=1}^{P}m_{i}(A_{i}     +    m_{i}q_{i})^{k}     \leq    \mu_{k}
\sum_{i=1}^{P}  n_{i}A_{i}^{k}   +  \lambda_{k}  \sum_{i=1}^{P}  m_{i}
B_{i}^{k}
$$
where     $\mu_{k}      =     \frac{k+1}{k+2}$,     $\lambda_{k}     =
\Theta(\alpha^{k}(k+1)^{k})$  for  some  constant $\alpha$.   
\end{lemma}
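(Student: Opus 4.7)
The plan is to mirror the argument used for Lemma~\ref{lem:smooth-SPT} almost line by line: the present statement is essentially a ``collapsed'' version in which the inner index $j$ is absent, so everything should go through in a simpler form. Specifically, I would introduce $r_i := 1/q_i$ (a non-increasing sequence, because $(q_i)$ is positive and increasing), rewrite each term of the form $m_i(A_i + m_i q_i)^k$ as $r_i \cdot m_i q_i \cdot (A_i + m_i q_i)^k$ and similarly for the other two sums, and then apply the Abel summation trick with the convention $r_{P+1}=0$. Since $r_i - r_{i+1} \geq 0$, this reduces the global inequality to proving, for every $1 \leq i \leq P$,
\[
\sum_{t=1}^i m_t q_t (A_t + m_t q_t)^k \;\leq\; \mu_k \sum_{t=1}^i n_t q_t A_t^k + \lambda_k \sum_{t=1}^i m_t q_t B_t^k.
\]

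Next, I would extract $A_i^{k+1}$ and $B_i^{k+1}$ telescopically from the right-hand side. Applying Lemma~\ref{lem:simple} with $z = n_t q_t / A_t$ (and $A_0 := 0$) gives $(k+1) n_t q_t A_t^k \geq A_t^{k+1} - A_{t-1}^{k+1}$; summing telescopes to $(k+1) \sum_{t=1}^i n_t q_t A_t^k \geq A_i^{k+1}$, and the same argument yields $(k+1) \sum_{t=1}^i m_t q_t B_t^k \geq B_i^{k+1}$. On the left-hand side, I would use the trivial bound $A_t + m_t q_t \leq A_i + B_i$ (valid because $A_t \leq A_i$ and $m_t q_t$ is a single summand of $B_i = \sum_{t=1}^i m_t q_t$), so that
\[
\sum_{t=1}^i m_t q_t (A_t + m_t q_t)^k \;\leq\; B_i (A_i + B_i)^k.
\]

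With those reductions the task becomes showing the single inequality
\[
B_i (A_i + B_i)^k \;\leq\; \tfrac{\mu_k}{k+1} A_i^{k+1} + \tfrac{\lambda_k}{k+1} B_i^{k+1},
\]
which is exactly inequality~(\ref{ineq:smooth-SPT}) already treated in the proof of Lemma~\ref{lem:smooth-SPT}. Choosing $\mu_k = (k+1)/(k+2)$ makes the coefficient in Lemma~\ref{lem:smooth-simple} equal to $a(k) = (k+1)/(k(k+2))$, for which $(k-1)a(k)$ is bounded, so case~(\ref{b2}) of that lemma delivers $\lambda_k = \Theta(\alpha^k (k+1)^k)$ for some constant $\alpha$, as required.

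There is no real obstacle to overcome: the proof is purely routine once one recognises that collapsing the inner index does not break any of the steps. The only thing to double-check carefully is the bookkeeping at the boundary of the Abel summation and the telescoping (in particular that the convention $A_0 = 0$ is consistent with the definition $A_1 = n_1 q_1$), and the verification that Lemma~\ref{lem:smooth-simple} is being invoked in the same regime as before, which it is since $\mu_k$ is chosen identically.
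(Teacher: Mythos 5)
Your proposal matches the paper's proof essentially line by line: the same substitution $r_i = 1/q_i$ with Abel summation, the same telescoping via Lemma~\ref{lem:simple}, the same bound $\sum_{t\leq i} m_t q_t (A_t + m_t q_t)^k \leq B_i(A_i+B_i)^k$, and the same final invocation of case~(\ref{b2}) of Lemma~\ref{lem:smooth-simple} with $\mu_k = (k+1)/(k+2)$. The argument is correct.
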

\begin{proof}
  Denote  $r_{j}  =  1/q_{j}$,  $a_{j}  =  n_{j}q_{j}$  and  $b_{j}  =
  m_{j}q_{j}$ for $1 \leq j \leq P$.  So $A_{i} = \sum_{j=1}^{i}a_{j},
  B_{i} = \sum_{j=1}^{i}b_{j}$.  The inequality is equivalent to
$$
\sum_{i=1}^{P} r_{i} b_{i}(A_{i} + b_{i})^{k} 
\leq \mu_{k} \sum_{i=1}^{P} r_{i}a_{i} A^{k}_{i} 
  + \lambda_{k} \sum_{i=1}^{P} r_{i}b_{i} B^{k}_{i}
$$
For convenience we set $r_{P+1} = 0$. The inequality could be written as 
$$
\sum_{i=1}^{P} (r_{i} - r_{i+1}) \sum_{j=1}^{i} b_{j}(A_{j} + b_{j})^{k}
\leq \sum_{i=1}^{P} (r_{i} - r_{i+1}) 
	\left[\mu_{k} \sum_{j=1}^{i} a_{j} A^{k}_{j} + \lambda_{k} \sum_{j=1}^{i} b_{j} B^{k}_{j} \right]
$$

As $(r_{i})_{i=1}^{P}$  is a decreasing sequence (e.g.\  $r_{i} - r_{i+1}
\geq 0 ~\forall1  \leq i \leq P$), it is sufficient  to prove that for
all $1 \leq i \leq P$,
\begin{equation}	\label{ineq:EQUI-modif-bA}
\sum_{j=1}^{i} b_{j}(A_{j} + b_{j})^{k} 
\leq \left[ \mu_{k}\sum_{j=1}^{i} a_{j} A^{k}_{j} + \lambda_{k} \sum_{j=1}^{i} b_{j} B^{k}_{j} \right]
\end{equation}
It remains to show Inequality~(\ref{ineq:EQUI-modif-bA}).

By Lemma~\ref{lem:simple}, we have 
$(k+1)a_{j}A^{k}_{j} \geq A^{k+1}_{j} - (A_{j} - a_{j})^{k+1}$ 
(since  dividing both sides by $A^{k+1}_{j}$, we obtain the inequality 
in Lemma~\ref{lem:simple} for $z = a_{j}/A_{j}$).
Therefore,
\begin{align*}
(k+1)\sum_{j=1}^{i} a_{j} A^{k}_{j} &\geq \sum_{j=1}^{i} \left[A^{k+1}_{j} - (A_{j} - a_{j})^{k+1}\right] 
	= A^{k+1}_{i} - (A_{1} - a_{1})^{k+1} 
	= A^{k+1}_{i}
\end{align*}
since $A_{j} = A_{j+1} - a_{j+1}$ for all $1 \leq j \leq i-1$ and $A_{1} = a_{1}$. Similarly,
$(k+1)\sum_{j=1}^{i} a_{j} B^{k}_{j} \geq B^{k+1}_{i}$.
Thus, to prove inequality (\ref{ineq:EQUI-modif-bA}), it is sufficient to prove that for all 
$1 \leq i \leq P$,
$$	
\sum_{j=1}^{i} b_{j}(A_{j}+b_{j})^{k} 
\leq \left( \frac{\mu_{k}}{k+1} A^{k+1}_{i} + \frac{\lambda_{k}}{k+1} B^{k+1}_{i} \right)
$$
Observe that
 $$
 \sum_{j=1}^{i} b_{j}(A_{j}+b_{j})^{k} \leq \sum_{j=1}^{i} b_{j} (A_{i}+B_{i})^{k}  \leq B_{i} (A_{i}+B_{i})^{k}
 $$ 
Hence, we only need to argue that 
\begin{equation}	\label{ineq:smooth-SPT-modif}
B_{i} (A_{i}+B_{i})^{k} \leq \left( \frac{\mu_{k}}{k+1} A^{k+1}_{i} + \frac{\lambda_{k}}{k+1} B^{k+1}_{i} \right)
\end{equation}

Choose $\mu_{k} = \frac{k+1}{k+2}$ and apply case~(\ref{b2}) of Lemma~\ref{lem:smooth-simple} 
(now  $a(k) = \frac{(k+1)}{k(k+2)}$  and $(k-1)a(k)$  is bounded  by a
constant),      we     deduce      that     by      $\lambda_{k}     =
\Theta(\alpha^{k}(k+1)^{k})$ for a constant $\alpha$, the
Inequality~(\ref{ineq:smooth-SPT-modif})  holds.   
\end{proof}

%
\newcommand{\AONE}{A^{(1)}}%
\newcommand{\ATWO}{A^{(2)}}%
\newcommand{\BONE}{B^{(1)}}%
\newcommand{\BTWO}{B^{(2)}}%
%

\setcounter{theorem}{3}

\begin{lemma}		
For any non-negative sequences $(n_{i})_{i=1}^{P}$, $(m_{i})_{i=1}^{P}$, and for any positive increasing
sequence $(q_{i})_{i=1}^{P}$, define $A_{i} = n_{1}q_{1} + \ldots + n_{i-1}q_{i-1} + (n_{i} + \ldots + n_{P})q_{i}$
and $B_{i} = m_{1}q_{1} + \ldots + m_{i-1}q_{i-1} + (m_{i} + \ldots + m_{P})q_{i}$ for $1 \leq i \leq P$.
Then, it holds that
$$
\sum_{i=1}^{P}m_{i}(A_{i} + m_{i}q_{i})^{k} 
\leq \mu_{k} \sum_{i=1}^{P} n_{i}A_{i}^{k} + \lambda_{k} \sum_{i=1}^{P} m_{i} B_{i}^{k}
$$
where
$\mu_{k} = \frac{k+1}{k+2}$, $\lambda_{k} = \Theta(\alpha^{k}2^{(k+1)^{2}})$
for some constant $\alpha$.  
\end{lemma}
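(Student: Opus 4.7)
The plan is to adapt the Abel-summation-plus-smooth-inequality template of Lemma~\ref{lem:smooth-SPT-modif} to the EQUI-type $A_i, B_i$, which now contain an additional ``tail'' term $q_i N_i, q_i M_i$ (with $N_i := \sum_{j \geq i} n_j$ and $M_i := \sum_{j \geq i} m_j$). Setting $r_i := 1/q_i$ (decreasing), $a_i := n_i q_i$, $b_i := m_i q_i$, the target inequality becomes $\sum_i r_i b_i (A_i + b_i)^k \leq \mu_k \sum_i r_i a_i A_i^k + \lambda_k \sum_i r_i b_i B_i^k$. A key structural observation is that $A_i, B_i$ are non-decreasing in $i$, since (with $q_0 = 0$, $A_0 = B_0 = 0$) one checks $A_i - A_{i-1} = (q_i - q_{i-1}) N_i \geq 0$ and similarly for $B$. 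Abel's summation with the decreasing weights $(r_i)$ reduces the problem, just as in Lemma~\ref{lem:smooth-SPT-modif}, to showing for every $i$ the partial-sum inequality
\[
\sum_{t \leq i} b_t (A_t + b_t)^k \;\leq\; \mu_k \sum_{t \leq i} a_t A_t^k \;+\; \lambda_k \sum_{t \leq i} b_t B_t^k.
\]

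Using the monotonicity of $A_t, B_t$ together with $b_t \leq B_t$, the left-hand side is bounded above by $B_i(A_i + B_i)^k$. With a suitable choice of $a(k)$ in Lemma~\ref{lem:smooth-simple} this yields $B_i(A_i + B_i)^k \leq \frac{\mu_k}{k+1} A_i^{k+1} + \frac{\lambda_k}{k+1} B_i^{k+1}$, so it suffices to establish the two telescoping-type lower bounds $A_i^{k+1} \leq (k+1) C \sum_{t \leq i} a_t A_t^k$ and $B_i^{k+1} \leq (k+1) C \sum_{t \leq i} b_t B_t^k$ for a common factor $C$.

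The main obstacle is this telescoping. Unlike Lemma~\ref{lem:smooth-SPT-modif}, where $A_t - A_{t-1} = a_t$, here $A_t - A_{t-1} = (q_t - q_{t-1}) N_t$, so a direct application of Lemma~\ref{lem:simple} only delivers $(k+1) \sum_{t \leq i}(q_t - q_{t-1}) N_t A_t^k \geq A_i^{k+1}$ rather than the desired form. Splitting $N_t = n_t + N_{t+1}$, the $n_t$ piece is immediately at most $\sum_{t \leq i} a_t A_t^k$ because $(q_t - q_{t-1}) n_t \leq q_t n_t = a_t$. The hard ``tail'' piece $\sum_{t \leq i}(q_t - q_{t-1}) N_{t+1} A_t^k$ is handled by iteratively re-expanding $N_{t+1} = n_{t+1} + N_{t+2}$ and re-assigning mass to later indices via the monotonicity $A_t \leq A_s, q_t \leq q_s$ for $t \leq s$, each iteration costing a factor of order $2^{k+1}$ from $(x+y)^k \leq 2^{k-1}(x^k+y^k)$-style splits. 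After roughly $k+1$ levels the residual is negligible, producing $C = \Theta(2^{(k+1)^2})$. The symmetric argument for $B_i^{k+1}$ goes through identically, and combining everything yields $\mu_k = (k+1)/(k+2)$ and $\lambda_k = \Theta(\alpha^k 2^{(k+1)^2})$ as stated.
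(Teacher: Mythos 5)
Your strategy diverges from the paper's at the second step, and unfortunately the divergence introduces a genuine gap: the claimed ``telescoping-type lower bound'' $A_i^{k+1} \leq (k+1)\,C\sum_{t\leq i}a_t A_t^k$ is simply false, and no constant $C$ can repair it. Take $n_1=\dots=n_i=0$ and $n_{i+1}=1$ (with $i<P$). Then $A_i = q_i>0$, so $A_i^{k+1}=q_i^{k+1}>0$, but $a_t = n_t q_t = 0$ for every $t\leq i$, so the right-hand side vanishes. The source of the trouble is structural: in the EQUI quantity $A_i$ the tail count $N_i=n_i+\dots+n_P$ feeds mass from indices $t'>i$ into $A_i$, yet your Abel-reduced partial sum $\sum_{t\leq i}a_t A_t^k$ only sees $a_t$ for $t\leq i$. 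Your proposed ``iterative re-expansion'' of $N_{t+1}=n_{t+1}+N_{t+2}$ does not cure this, because each re-assignment step pushes mass to indices $t+1, t+2,\dots$ which lie \emph{outside} the range $\{1,\dots,i\}$ of the partial sum you are trying to charge against. (Contrast this with Lemma~\ref{lem:smooth-SPT-modif}, where $A_t-A_{t-1}=a_t$ exactly, so the telescoping via Lemma~\ref{lem:simple} charges only against indices $\leq i$.) Thus the chain ``LHS $\leq B_i(A_i+B_i)^k\leq \frac{\mu_k}{k+1}A_i^{k+1}+\frac{\lambda_k}{k+1}B_i^{k+1}\leq$ RHS'' collapses at the last arrow.

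The paper sidesteps this by \emph{not} trying to telescope $A_i$ directly. Instead it writes $A_i \leq \AONE_i + q_i\ATWO_i$ with $\AONE_i=n_1q_1+\dots+n_iq_i$ (the prefix) and $\ATWO_i=n_i+\dots+n_P$ (the raw tail count), pays a factor $2^k$ to split $(A_i+m_iq_i)^k \leq 2^k[(\AONE_i+m_iq_i)^k + q_i^k(\ATWO_i+m_i)^k]$, and then proves two separate smooth inequalities --- one for the $\AONE$-part, which is exactly Lemma~\ref{lem:smooth-SPT-modif} (its differences $\AONE_t-\AONE_{t-1}$ do equal $a_t$, so telescoping works), and one for the $\ATWO$-part, which is reduced to the first via Abel summation in $q_i^k$ and a reversal of indices $i\mapsto P+1-i$. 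The large bound $\lambda_k=\Theta(\alpha^k 2^{(k+1)^2})$ then comes from applying case~(\ref{b3}) of Lemma~\ref{lem:smooth-simple} with the small $a(k)$ forced by the $2^k$ loss, not from an iteration depth. If you want to salvage a direct Abel approach, you would need a partial-sum inequality whose right-hand side also sees the tail counts $N_t$, not just the pointwise $a_t$; as stated, the intermediate step is unrecoverable.
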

\begin{proof}
Let $\AONE_{i} = n_{1}q_{1} + \ldots + n_{i-1}q_{i-1} + n_{i}q_{i}$, 
$\ATWO_{i} = (n_{i} + \ldots + n_{P})$
and $\BONE_{i} = m_{1}q_{1} + \ldots + m_{i-1}q_{i-1} + m_{i}q_{i}$, 
$\BTWO_{i} = (m_{i} + \ldots + m_{P})$ for $1 \leq i \leq P$.
So, by definition we have
$\AONE_{i} \leq A_{i}$, $\ATWO_{i} \leq A_{i}$ and $A_i \leq \AONE_{i}+\ATWO_{i}q_i$ for $1 \leq i \leq N$.
Similarly,  $\BONE_{i} \leq B_{i}$, $\BTWO_{i} \leq B_{i}$ and $B_i \leq \BONE_{i}+\BTWO_{i}q_i$ for $1 \leq i \leq P$.
By convention, let $\AONE_{0}=\BONE_{0}=0$.

Thus, we have, for all $1 \leq i \leq P$,
$$
(A_{i}+m_{i}q_{i})^{k} \leq \left[\left(\AONE_{i} + m_{i}q_{i}\right) + q_{i}\left(\ATWO_{i}+m_{i}\right) \right]^{k} %
\leq 2^{k}\left[\left(\AONE_{i} + m_{i}q_{i}\right)^{k} + q_{i}^{k}\left(\ATWO_{i} + m_{i}\right)^{k}\right]
$$
and  
$$
\left(\AONE_{i}\right)^{k} + q_{i}^{k} \left(\ATWO_{i}\right)^{k} \leq 2 A_{i}^{k},
\qquad \left(\BONE_{i}\right)^{k} + q_{i}^{k} \left(\BTWO_{i}\right)^{k} \leq 2 B_{i}^{k}.
$$ 
Therefore, to prove the lemma, it is sufficient to argue that 
\begin{equation}	\label{ineq:smooth-EQUI-bilinear1}
2^{k}\sum_{i=1}^{P}m_{i} \left(\AONE_{i} + m_{i}q_{i}\right)^{k} 
\leq  \frac{\mu_{k}}{2} \sum_{i=1}^{P} n_{i} \left(\AONE_{i}\right)^{k} 
	+ \frac{\lambda_{k}}{2} \sum_{i=1}^{P} m_{i} \left(\BONE_{i}\right)^{k}
\end{equation}
and
\begin{equation}		\label{ineq:smooth-EQUI-bilinear2}
2^{k} \sum_{i=1}^{P}q_{i}^{k}m_{i}\left(\ATWO_{i}+ m_{i}\right)^{k}
\leq  \frac{\mu_{k}}{2}  \sum_{i=1}^{P} q_{i}^{k}n_{i}\left(\ATWO_{i}\right)^{k} 
		+\frac{\lambda_{k}}{2} \sum_{i=1}^{P} q_{i}^{k}m_{i}  \left(\BTWO_{i}\right)^{k} 
\end{equation}

Inequality~(\ref{ineq:smooth-EQUI-bilinear1}) holds for $\mu_{k} = \frac{k+1}{k+2}$ and
$\lambda_{k} = \Theta(\alpha^{k} 2^{(k+1)^{2}})$ for some constant $\alpha$
(the proof is similar to that of Lemma~\ref{lem:smooth-SPT-modif} but in the end,
the case (\ref{b3}) of Lemma~\ref{lem:smooth-simple} is used to derive the value
of $\lambda_{k}$).

Consider       inequality~(\ref{ineq:smooth-EQUI-bilinear2}).      For
convenience set $q_{0} = 0$.
The inequality could be rewritten as
\begin{align*}
2^{k}\sum_{i=1}^{P} & (q_{i}^{k} - q_{i-1}^{k}) \sum_{j = i}^{P} m_{j}\left(\ATWO_{j} + m_{j}\right)^{k} \\
&\leq \frac{\mu_{k}}{2} \sum_{i=1}^{P} (q_{i}^{k} - q_{i-1}^{k}) \sum_{j = i}^{P} n_{j}\left(\ATWO_{j}\right)^{k} %
	+ \frac{\lambda_{k}}{2} \sum_{i=1}^{P}(q_{i}^{k} - q_{i-1}^{k}) \sum_{j = i}^{P} m_{j}\left(\BTWO_{j}\right)^{k} 
\end{align*}
As the sequence $(q_{i})_{i=1}^{P}$ is increasing, it is sufficient to prove that 
\begin{equation}	\label{ineq:smooth-EQUI-tail}
2^{k} \sum_{j = i}^{P} m_{j} \left(\ATWO_{j} + m_{j}\right)^{k} 
\leq \frac{\mu_{k}}{2} \sum_{j = i}^{P} n_{j}\left(\ATWO_{j}\right)^{k} 
		+ \frac{\lambda_{k}}{2} \sum_{j = i}^{P} m_{j}\left(\BTWO_{j}\right)^{k}
\end{equation}

The inequality above could be considered as a corollary of Inequality~(\ref{ineq:smooth-EQUI-bilinear1})
by rewriting the indices in backward ($i \mapsto P+1-i$) and considering the sequence $q_{i} = 1$ 
for $1 \leq i \leq P$. Precisely, fix an index $i$ and  
applying Inequality~(\ref{ineq:smooth-EQUI-bilinear1}) for the sequence
$q_{j} = 1$ for $1 \leq j \leq P+1-i$ and two sequences $(n'_{j})_{j=1}^{P+1-i}$, and $(m'_{j})_{j=1}^{P+1-i}$ 
defined as $n'_{j}= n_{ P+1-j}$ and $m'_{i}= m_{ P+1-j}$ for all  $1 \leq j \leq P+1-i$. 

$$
2^{k}\sum_{j=1}^{P+1-i}m'_{j} \left(\left(\sum_{t=1}^{j} n'_{t}\right)+ m'_{j}\right)^{k} 
\leq \frac{\mu_{k}}{2} \sum_{j=1}^{P+1-i} n'_{j} \left(\sum_{t=1}^{j} n'_{t}\right)^{k}
+ \frac{\lambda_{k}}{2} \sum_{j=1}^{P} m'_{j} \left(\sum_{t=1}^{j} m'_{t}\right)^{k}
$$

Replacing $n'_{j}$ by $n_{j}$ and $m'_{j}$ by $m_{j}$ for $1 \leq j \leq P+1-i$, we get 
Inequality~(\ref{ineq:smooth-EQUI-tail}). 
\end{proof}


\section{$\ell_{k}$-norms of Completion Times} 

\begin{lemma}		\label{lem:coor-sum}
For any $A \geq 0, p > 0$ and $k, N$ integer, it holds that
$$
(k+1)\sum_{t = 1}^{N} (A + tp)^{k} \geq N (A + Np)^{k} 
$$
\end{lemma}
\begin{proof}
First, for all $0\leq h \leq k$ we have 
$$
\sum_{t=1}^{N} t^{h} = N^{h+1}\sum_{t=1}^{N} \left(\frac{t}{N}\right)^{h} \frac{1}{N}
\geq N^{h+1} \int_{0}^{1} x^{h}dx = \frac{N^{h+1}}{h+1}
$$
where the inequality is because the function $x^{h}$ is increasing.
Thus, $(k+1)\sum_{t=1}^{N} t^{h} \geq N^{h+1}$
for all $0\leq h \leq k$. Therefore,
 \begin{equation*}	\label{ineq:Bernoulli}
 \binom{k}{h} A^{k-h} p^{h} (k+1)\sum_{t=1}^{N} t^{h} \geq  \binom{k}{h} N^{h+1}A^{k-h}p^{h}	\quad \forall 0 \leq h \leq k
 \end{equation*}
Summing the inequalities over $0 \leq h \leq k$, we obtain    
$$ 
(k+1)\sum_{t = 1}^{N} (A + tp)^{k}
= \sum_{h=0}^{k}\binom{k}{h} A^{k-h} p^{h} (k+1)\sum_{t=1}^{N} t^{h} 
\geq \sum_{h=0}^{k} \binom{k}{h} N^{h+1}A^{k-h}p^{h}
= N (A + Np)^{k} 
$$
\end{proof}

\begin{lemma} \label{lem:EQUI-SPT} Let  $\vect{x}$ be an assignment of
  jobs  to  machines.  Then,  the  \textsf{SPT}  policy minimizes  the
  $\ell_{k}$-norm  of  job  completion  times  with  respect  to  this
  assignment among all feasible policies.  Moreover, the \textsf{EQUI}
  policy induces an objective value at most $(2k+2)^{1/k}$ times higher.
\end{lemma}
\begin{proof}
  Consider a machine $i$ and let $N$ be the number of jobs assigned to
  $i$  by the  profile  $\vect{x}$.   These $N$  jobs  are renamed  in
  increasing  order of processing  times, and  since we  fixed machine
  $i$, for convenience we drop  index $i$ in the processing times.  So
  we denote the $N$ processing  times as $p_{1} \leq p_{2} \leq \ldots
  \leq p_{N}$.  In any schedule of those jobs, there exist distinct
  jobs with completion times at least $p_{1}, p_{1}  + p_{2}, \ldots,
  p_{1}  +  \ldots  +  p_{h}$.   Hence,  the  $\ell_{k}$-norm  on  the
  completion  times of such  jobs is  at least  $\left( \sum_{j=1}^{h}
    (p_{1} + \ldots +  p_{j})^{k} \right)^{1/k}$, which is attained by
  the \textsf{SPT} policy.

These jobs are partitioned into different classes where jobs in the same class have the same processing time, i.e., 
there are $n_{1}, \ldots, n_{h}$ jobs with processing times $q_{1}, \ldots, q_{h}$. 
In the following, we will prove by induction on the number of classes (i.e parameter $h$) that the objective value induced by \textsf{EQUI} policy is 
within $(2k+2)^{1/k}$ times the one induced by \textsf{SPT}, or  that is, 
\begin{align}	\label{ineq:EQUI-SPT}
(2k+2) \sum_{j=1}^{h} &\sum_{t=1}^{N}(n_{0}q_{0} + \ldots +  n_{j-1}q_{j-1} + t \cdot q_{j})^{k} \notag \\
&\geq 2 \sum_{j=1}^{h} n_{j}\Big(n_{0}q_{0} + n_{1}q_{1} + \ldots + n_{j-1}q_{j-1} 
						+ (n_{j} + \ldots + n_{h})q_{j}\Big)^{k} 
\end{align}
where for convenience we denote $n_{0} = 0$ and $q_0 = 0$.

Consider the basis case where all jobs have the same processing time ($h = 1$).  
When $h=1$, Inequality (\ref{ineq:EQUI-SPT}) is equivalent to
$$
2(k+1)q_{1}^{k}\sum_{t=1}^{n_{1}} t^{k} \geq n_{1}^{k+1}q_{1}^{k}
$$ 
which is straightforward by Lemma~\ref{lem:coor-sum}. 

Now, assume that Inequality (\ref{ineq:EQUI-SPT}) holds for $h$ classes of jobs. 
We will prove that this statement also holds for $(h+1)$ classes. 

Define function $q(z) := q_{h} + (q_{h+1} - q_{h})z$ for $0 \leq z \leq 1$. 
Then, all jobs with processing time $q_{h+1}$ could be seen as having processing time $q(1)$.
Define $g(z) := 2(k+1) \sum_{t=1}^{n_{h+1}}(A + t \cdot q(z))^{k} -
n_{h+1}(A+ n_{h+1}q(z))^{k} $ where $A = \sum_{j=0}^{n_{h}}
n_{j}q_{j}$. Consider function $f(z)$ as follows: 
\begin{align*}
f(z)=g(z)+2(k+1) & \sum_{j=1}^{h} \sum_{t=1}^{n_{j}}(n_{0}q_{0} + \ldots + n_{j-1}q_{j-1} + t \cdot q_{j})^{k} \\
	& - \sum_{j=1}^{h} n_{j}\Big(n_{0}q_{0} + n_{1}q_{1} + \ldots + n_{j-1}q_{j-1} 
						+ (n_{j} + \ldots + n_{h+1})q_{j}\Big)^{k}
\end{align*}
Inequality (\ref{ineq:EQUI-SPT}) is equivalent to prove that $f(1) \geq 0$. 
By the induction hypothesis, we have $f(0) \geq 0$ since for $z = 0$ there are exactly 
$h$ classes in the inequality (\ref{ineq:EQUI-SPT}).

Consider the derivative of $f(z)$.
$$
f'(z) = g'(z) = k(q_{h+1} - q_{h})\left[ 2(k+1)\sum_{t=1}^{n_{h+1}} t(A + t q(z))^{k-1} - n_{h+1}^{2}(A + n_{h+1}q(z))^{k-1}\right]
$$
By Chebyshev sum inequality on two increasing sequences $(t)_{1}^{n_{h}}$ and 
$\left((A + t q(z))^{k-1}\right)_{t=1}^{n_{h}}$, we have:
\begin{align*}
2(k+1) &\sum_{t=1}^{n_{h+1}} t(A + t q(z))^{k-1} 
	\geq \frac{2(k+1)}{n_{h+1}}\left(\sum_{t=1}^{n_{h+1}} t\right) 
			\cdot \left( \sum_{t=1}^{n_{h+1}} \left(A + t q(z)\right)^{k-1} \right) \\
	&>  n_{h+1} (k+1) \sum_{t=1}^{n_{h+1}} \left(A + t q(z)\right)^{k-1} 
	>  n_{h+1}^{2}(A + n_{h+1}q(z))^{k-1}
\end{align*}
where the last inequality is due to Lemma~\ref{lem:coor-sum}. Thus, $f'(z) = g'(z) > 0$.
Hence, $f(1) \geq f(0) \geq 0$. 
\end{proof}


\section{$\ell_{\infty}$-norms of Completion Times} 	\label{sec:makespan}


\subsection{Policy \textsf{BCOORD}, Revisited}

\setcounter{theorem}{1}

\begin{lemma}
Let $\vect{x}$ and $\vect{x}^{*}$ be  an equilibrium and an arbitrary $m$-efficient profile, respectively. Then, it holds that
$\sum_{i} \sum_{j: x_{j} = i} q_{j} c^{k}_{j}(\vect{x}) 
	\leq O\left(k \alpha^{k} \left(\frac{k}{\log k}\right)^{k-1}\right) \sum_{i} \sum_{j: x^{*}_{j} = i} q_{j}c^{k}_{j}(\vect{x}^{*})$
where $\alpha$ is a constant. 
\end{lemma}
\begin{proof}
Let $j$ be an arbitrary job. 
Since $\vect{x}$ is an equilibrium, $c_{j}(\vect{x}) \leq c_{j}(x_{-j},x^{*}_{j})$, 
thus $q_{j}c^{k}_{j}(\vect{x}) \leq q_{j}c^{k}_{j}(x_{-j},x^{*}_{j})$ for all jobs $j$.
Using the smooth argument, in order to prove the lemma, it is sufficient to argue that
\begin{equation}	\label{ineq:BCOORD}
\sum_{i}\sum_{j: x^{*}_{j} = i} q_{j} c^{k}_{j}(x_{-j},x^{*}_{j}) 
\leq \frac{k}{k+1}\sum_{i}\sum_{j: x_{j} = i} q_{j} c^{k}_{j}(\vect{x})
	+ O\left(\alpha^{k}\left(\frac{k}{\log k}\right)^{k-1}\right)\sum_{i}\sum_{j: x^{*}_{j} = i} q_{j} c^{k}_{j}(\vect{x}^{*})
\end{equation}
for some constant $\alpha$.

Applying Lemma~\ref{lem:smooth-simple}(case (\ref{b1})), for any machine $i$ we have 
$$
\quad L(\vect{x}^{*}(i)) \left( L(\vect{x}(i)) + L(\vect{x}^{*}(i)) \right)^{k}
\leq \frac{k}{k+1}L^{k+1}(\vect{x}(i))
	+ O\left(\alpha^{k} \left(\frac{k}{\log k}\right)^{k-1}\right)L^{k+1}(\vect{x}^{*}(i))
$$
for some constant $\alpha$. Moreover, by definition of \textsf{BCOORD}, 
$\sum_{j: x^{*}_{j} = i} q_{j} c^{k}_{j}(x_{-j},x^{*}_{j})$ is upper bounded by the right-hand side
of the inequality above. Therefore, Inequality~(\ref{ineq:BCOORD}) follows.
\end{proof}

\begin{theorem}[\cite{Caragiannis:Efficient-coordination-mechanisms}]
The PoA of the game under policy $\textsf{BCOORD}$ is
$O\left(\frac{\log m}{\log\log m}\right)$ by choosing $k = \log m$.
\end{theorem}
\begin{proof}
Let $\vect{y}^{*}$ be an optimal assignment and $\vect{x}^{*}$ be an $m$-efficient 
assignment with property of Lemma~\ref{lem:m-efficient}. Then, for any Nash
equilibrium $\vect{x}$, we have
\begin{align*}
L^{k+1}(\vect{x}) \leq \sum_{i}\sum_{j: x_{j} = i} q_{j} c^{k}_{j}(\vect{x}) 
&\leq O\left(k\alpha^{k}\left(\frac{k}{\log k}\right)^{k-1}\right) \sum_{i}\sum_{j: x^{*}_{j} = i} q_{j} c^{k}_{j}(\vect{x}^{*}) \\
&\leq O\left(k\alpha^{k}\left(\frac{k}{\log k}\right)^{k-1}\right) \cdot m L^{k+1}(\vect{x}^{*}) 
\end{align*}
Therefore, $C(\vect{x}) \leq m^{1/k}L(\vect{x}) \leq O\left(m^{2/k}\cdot \frac{k}{\log k}\right) L(\vect{x}^{*})
\leq O\left(m^{2/k}\cdot \frac{k}{\log k}\right) L(\vect{y}^{*})$.
Choosing $k = \log m$, the theorem follows.
\end{proof}


\subsection{Policy \textsf{CCOORD}, Revisited}

\setcounter{theorem}{8}

\begin{lemma}[\cite{Caragiannis:Efficient-coordination-mechanisms}]	\label{lem:psi}
For any integer $k \geq 1$, any finite set of non-negative reals $A$ and 
any real $b$, the following hold
\begin{enumerate}[(i)]
	\item $L(A)^{k} \leq \Psi_{k}(A) \leq k!L(A)^{k}$
	\item $\Psi_{k}(A)^{k+1} \leq \Psi_{k+1}(A)^{k}$
	\item $\Psi_{k}(A \cup \{b\}) = \Psi_{k}(A) + kb\Psi_{k-1}(A \cup \{b\})$
	\item $\Psi_{k}(A) \leq kL(A)\Psi_{k-1}(A)$
\end{enumerate}
\end{lemma}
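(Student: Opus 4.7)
The plan is to handle the four parts in the order (iii), (i), (iv), (ii), using throughout the identification $\Psi_k(A) = k! \, h_k(A)$, where $h_k(A) = \sum_{1 \le d_1 \le \cdots \le d_k \le n} \prod_t a_{d_t}$ is the complete homogeneous symmetric polynomial of degree $k$. This recasts everything in terms of well-studied quantities and lets me reuse standard combinatorial identities together with a short probabilistic trick for the delicate part.

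For (iii), I will classify each weakly increasing tuple counted in $h_k(A \cup \{b\})$ by whether it uses $b$: tuples avoiding $b$ contribute $h_k(A)$, while those containing $b$ must have largest entry equal to $b$, so removing this last entry leaves an arbitrary weakly increasing $(k-1)$-tuple in $A \cup \{b\}$ and contributes $b \cdot h_{k-1}(A \cup \{b\})$. Multiplying by $k!$ and writing $k! = k \cdot (k-1)!$ gives the stated recursion exactly. For (i), I expand both $L(A)^k = (\sum_i a_i)^k$ and $\Psi_k(A)$ as sums over multisets of size $k$: a multiset with multiplicities $(m_1,\ldots,m_n)$ contributes $\binom{k}{m_1,\ldots,m_n}\prod_i a_i^{m_i}$ to $L(A)^k$ but exactly $k! \prod_i a_i^{m_i}$ to $\Psi_k(A)$, and since $1 \le \binom{k}{m_1,\ldots,m_n} \le k!$ both inequalities follow term by term. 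For (iv), I induct on $|A|$ using (iii): the base $|A|=1$ is a direct computation yielding equality, and for the step I bound $\Psi_k(A) \le k L(A) \Psi_{k-1}(A)$ by the hypothesis, use the obvious monotonicity $\Psi_{k-1}(A) \le \Psi_{k-1}(A \cup \{b\})$ (all coefficients are non-negative), and combine with the extra term $k b \, \Psi_{k-1}(A \cup \{b\})$ from (iii) to get $k(L(A)+b)\,\Psi_{k-1}(A \cup \{b\}) = k L(A \cup \{b\})\,\Psi_{k-1}(A \cup \{b\})$.

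Part (ii) is the main obstacle, and I will dispatch it with a probabilistic interpretation rather than direct manipulation. Let $E_1,\ldots,E_n$ be independent $\mathrm{Exp}(1)$ variables and set $X = \sum_i a_i E_i \ge 0$. Since $\mathbb{E}[E_i^{m}] = m!$, a multinomial expansion gives
\[
\mathbb{E}[X^k] = \sum_{m_1+\cdots+m_n=k} \binom{k}{m_1,\ldots,m_n} \Big(\prod_i a_i^{m_i}\Big)\Big(\prod_i m_i!\Big) = k! \sum_{m_1+\cdots+m_n=k} \prod_i a_i^{m_i} = k! \, h_k(A) = \Psi_k(A).
\]
The required inequality is then $(\mathbb{E}[X^k])^{k+1} \le (\mathbb{E}[X^{k+1}])^k$, i.e.\ Lyapunov's inequality, which follows in one line from H\"older applied to $X^k = X^k \cdot 1$ with conjugate exponents $(k+1)/k$ and $k+1$. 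The hard step is really locating this moment interpretation; a purely algebraic route would presumably need a careful Cauchy--Schwarz argument on the nested sums defining $h_k^{k+1}$ versus $h_{k+1}^k$, which I expect to be substantially messier than the probabilistic path.
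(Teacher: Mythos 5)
The paper does not prove this lemma: it is imported verbatim (with citation) from Caragiannis, so there is no in-paper proof to compare against. Judged on its own, your proof is correct and complete, and I checked each part.

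Parts (iii), (i), and (iv) are handled by elementary and correct combinatorial arguments. In (iii) the phrasing ``largest entry equal to $b$'' should really be ``the entry with the largest \emph{index} is $b$'' once $b$ is labelled last --- the element $b$ need not be numerically largest --- but your subsequent sentence about removing the last entry shows you meant exactly this, and the decomposition $h_k(A\cup\{b\}) = h_k(A) + b\,h_{k-1}(A\cup\{b\})$ (then multiply by $k!$ and split $k!=k\cdot(k-1)!$) is right. In (i) the term-by-term comparison of multinomial coefficients $1\le \binom{k}{m_1,\ldots,m_n}\le k!$ against the uniform coefficient $k!$ in $\Psi_k$ is clean. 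For (iv) the induction on $|A|$ via (iii), equality at $|A|=1$, plus monotonicity of $\Psi_{k-1}$ under adding a non-negative element, gives exactly $k\bigl(L(A)+b\bigr)\Psi_{k-1}(A\cup\{b\}) = kL(A\cup\{b\})\Psi_{k-1}(A\cup\{b\})$. (An even more direct route, not needing induction at all, is to split off $d_k$ in the defining sum: $\sum_{d_1\le\cdots\le d_k}\prod_t a_{d_t} = \sum_{d_1\le\cdots\le d_{k-1}}\prod_t a_{d_t}\sum_{d_k\ge d_{k-1}} a_{d_k} \le L(A)\,h_{k-1}(A)$, but your way is also fine.)

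Part (ii) is the genuinely clever step and your moment interpretation is valid. The identity $\mathbb{E}[X^k]=\Psi_k(A)$ for $X=\sum_i a_i E_i$ with independent $\mathrm{Exp}(1)$ variables is correct because $\binom{k}{m_1,\ldots,m_n}\prod_i m_i! = k!$ for every composition, collapsing the multinomial expansion to $k!\,h_k(A)$. Then non-negativity of $X$ (here the hypothesis $a_i\ge 0$ is essential) lets you invoke Lyapunov/H\"older to get $\bigl(\mathbb{E}[X^k]\bigr)^{k+1}\le\bigl(\mathbb{E}[X^{k+1}]\bigr)^k$. This is tidier than the usual route one sees for such $\Psi_k$-monotonicity, which typically runs a direct H\"older or power-mean computation on the nested sums (it is really the same inequality in disguise, since your $X$ is just making the underlying measure explicit), so your packaging is both correct and arguably cleaner. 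No gaps.
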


In the following, we use also an inequality whose the proof is similar to the one in 
Lemma~\ref{lem:smooth-simple}. For all real positive numbers $a,b$, it holds that
\begin{equation}	\label{ineq:a-b}
ba^{k} \leq \frac{1}{2(k+1)}a^{k+1} + \frac{2^{k}k^{k}}{k+1}b^{k+1}
\end{equation}

\setcounter{theorem}{3}

\begin{lemma}		
Let $\vect{x}$ be a Nash equilibrium. Then, for any $m$-efficient profile $\vect{x}^{*}$, it holds that
$\sum_{i}\sum_{j: x_{j} = i} q_{j}c^{k}_{j}(\vect{x}) 
	\leq  2^{k+1}(k+1)^{k+2} \sum_{i}\sum_{j: x^{*}_{j} = i} q_{j} c^{k}_{j}(\vect{x}^{*})$ 
\end{lemma}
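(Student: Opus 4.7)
The plan is to derive this smoothness inequality via the Nash equilibrium condition, combined with the symmetric-function identities of Lemma~\ref{lem:psi} and a scaled Young's inequality. First I would invoke the Nash condition: for every job $j$, $q_j c_j^k(\vect{x}) \leq q_j c_j^k(x_{-j}, x^*_j) = p_{x^*_j,j}\,\Psi_k(\vect{x}(x^*_j) \cup \{p_{x^*_j,j}\})$. Summing over $j$ and grouping by the optimal machine $i$, the problem reduces to bounding $\sum_{j:x^*_j=i} p_{ij}\,\Psi_k(\vect{x}(i)\cup\{p_{ij}\})$ for each machine $i$.

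Second, I would iterate Lemma~\ref{lem:psi}(iii) to obtain the algebraic expansion $p_{ij}\,\Psi_k(\vect{x}(i)\cup\{p_{ij}\}) = \sum_{r=0}^{k} \frac{k!}{(k-r)!}\, p_{ij}^{r+1}\,\Psi_{k-r}(\vect{x}(i))$. Summing over $j$ with $x^*_j = i$ and invoking Newton's identity $r\,h_r = \sum_{\ell=1}^{r} p_\ell\, h_{r-\ell}$ (which, since all quantities are nonnegative, yields $\sum_{j:x^*_j=i} p_{ij}^{r+1} \leq \Psi_{r+1}(\vect{x}^*(i))/r!$), the reindexing $s=k-r$ produces the clean mixed-product bound
\[
\sum_{j:x^*_j=i} p_{ij}\,\Psi_k(\vect{x}(i)\cup\{p_{ij}\}) \leq \sum_{s=0}^{k} \binom{k}{s}\,\Psi_s(\vect{x}(i))\,\Psi_{k+1-s}(\vect{x}^*(i)).
\]

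Third, for each $s$ I would apply Young's inequality with a scaling parameter $\gamma_s > 0$. Using Lemma~\ref{lem:psi}(ii) in the iterated form $\Psi_r(C)^{(k+1)/r} \leq \Psi_{k+1}(C)$, this yields $\Psi_s(A)\,\Psi_{k+1-s}(B) \leq \frac{s}{k+1}\,\gamma_s\,\Psi_{k+1}(A) + \frac{k+1-s}{k+1}\,\gamma_s^{-s/(k+1-s)}\,\Psi_{k+1}(B)$. A Lagrangian analysis suggests the uniform shape $\gamma_s = \beta^{(k+1-s)/(k+1)}$ for a single scalar $\beta$; writing $t=\beta^{1/(k+1)}$, the closed-form identity $\sum_s s\binom{k}{s}t^{k+1-s} = t\,k\,(t+1)^{k-1}$ lets me pick $t = \Theta(1/k)$ so that the total coefficient of $\Psi_{k+1}(\vect{x}(i))$ equals $\frac{1}{2(k+1)}$ while that of $\Psi_{k+1}(\vect{x}^*(i))$ stays of order $2^{k}\,k^{k-1}$.

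Finally, I would close the loop using Lemma~\ref{lem:psi}(iv): $\Psi_{k+1}(\vect{x}(i)) \leq (k+1)\,L(\vect{x}(i))\,\Psi_k(\vect{x}(i))$, so summing over $i$ converts the $\Psi_{k+1}(\vect{x}(i))$ contribution into $\frac{1}{2}\sum_j q_j c_j^k(\vect{x})$, which is absorbed on the left-hand side. The same identity applied to $\vect{x}^*$ gives $\sum_i \Psi_{k+1}(\vect{x}^*(i)) \leq (k+1)\sum_j q_j c_j^k(\vect{x}^*)$, and multiplying the surviving $(k+1)$ factor against the $\Psi_{k+1}(\vect{x}^*(i))$ coefficient produces the advertised constant $2^{k+1}(k+1)^{k+2}$. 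The main obstacle is the binomial/Lagrangian optimization in step three: the coefficient of $\Psi_{k+1}(\vect{x}^*(i))$ is exponentially sensitive to $\gamma_s$, so $\beta$ must be tuned precisely --- too small blows up that coefficient well beyond $(k+1)^{k+1}$, whereas too large prevents the absorption in the final step.
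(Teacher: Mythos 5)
Your approach is genuinely different from the paper's: instead of the paper's recursive definition of $(\mu_k,\lambda_k)$ (one step of Lemma~\ref{lem:psi}(iii), the induction hypothesis, and a single application of the scaled Young inequality~(\ref{ineq:a-b})), you fully unroll Lemma~\ref{lem:psi}(iii) into the explicit identity $p_{ij}\Psi_k(\vect{x}(i)\cup\{j\}) = \sum_{r=0}^{k}\frac{k!}{(k-r)!}p_{ij}^{r+1}\Psi_{k-r}(\vect{x}(i))$, use Newton's identity to get the clean binomial form, and then apply a separately-scaled Young inequality to each of the $k+1$ terms. That is a legitimate and arguably more transparent route, since the closed-form $t\,k(1+t)^{k-1}$ makes the dependence on the tuning parameter visible in one line rather than buried in a recurrence.

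There is however a genuine quantitative gap in Step~3. Setting the coefficient of $\Psi_{k+1}(\vect{x}(i))$ to $\frac{1}{2(k+1)}$ means solving $tk(1+t)^{k-1}=\tfrac12$, which for large $k$ forces $t\sim W(1/2)/k$ with $W(1/2)\approx 0.35$. The coefficient of $\Psi_{k+1}(\vect{x}^{*}(i))$ is then
\[
  (1+1/t)^{k-1}\Bigl(1+\tfrac{1}{(k+1)t}\Bigr)
  \;=\;\Theta\!\bigl((1/t)^{k-1}\bigr)
  \;=\;\Theta\!\bigl((1/W(1/2))^{\,k}\,k^{\,k-1}\bigr),
\]
and $1/W(1/2)\approx 2.86$, not $2$. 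After absorbing the $\tfrac12$ on the left and multiplying by the $(k+1)$ from Lemma~\ref{lem:psi}(iv), the final constant is $\Theta(2.86^{\,k}\,k^{\,k})$, which \emph{exceeds} the advertised $2^{k+1}(k+1)^{k+2}=\Theta(2^{k}k^{\,k+2})$ once $k$ is moderately large (around $k\geq 18$). The obstacle you flagged at the end is therefore real and, as stated, breaks the proof of the lemma. The fix is to push $\mu_k$ closer to $1$ rather than fixing it at $\tfrac12$: take the $\Psi_{k+1}(\vect{x}(i))$ coefficient to be $\frac{k}{(k+1)^2}$ (i.e.\ $\mu_k=k/(k+1)$, as in the paper's recurrence), which gives $t\sim W(1)/k$ with $W(1)\approx 0.567$ and base $1/W(1)\approx 1.76<2$. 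The resulting bound is $O(1.77^{k}k^{\,k+1})$, which sits comfortably inside $2^{k+1}(k+1)^{k+2}$ for all $k\geq 1$, and the rest of your argument (Newton's identity, the binomial identity, the closure via Lemma~\ref{lem:psi}(iv)) goes through unchanged.
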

\begin{proof}
We will define recursively two sequences $(\mu_{k})_{k \geq 1}$ and $(\lambda_{k})_{k \geq 1}$
such that $\frac{\lambda_{k}}{1-\mu_{k}} \leq 2^{k+1}(k+1)^{k+2}$ and the following inequalities hold for 
every integer $k$ and every machine $i$
\begin{align}	\label{ineq:ccoord}
\sum_{j: x^{*}_{j} = i} p_{ij} \Psi_{k}(\vect{x}(i) \cup \{j\}) 
\leq \frac{\mu_{k}}{k} \Psi_{k+1}(\vect{x}(i)) 
	+ \frac{\lambda_{k}}{k} \Psi_{k+1}(\vect{x}^{*}(i)) 
\end{align}
If Inequality~(\ref{ineq:ccoord}) holds, then we deduce that
\begin{align*}
\sum_{j: x^{*}_{j} = i}& q_{j}c^{k}_{j}(x_{-j},x^{*}_{j}) 
= \sum_{j: x^{*}_{j} = i} p_{ij} \Psi_{k}(\vect{x}(i) \cup \{j\}) 
\leq \frac{\mu_{k}}{k} \Psi_{k+1}(\vect{x}(i)) 
	+ \frac{\lambda_{k}}{k} \Psi_{k+1}(\vect{x}^{*}(i)) \\
&\leq \mu_{k} L(\vect{x}(i)) \Psi_{k}(\vect{x}(i)) 
	+ \lambda_{k} L(\vect{x}^{*}(i)) \Psi_{k}(\vect{x}^{*}(i)) 
= \mu_{k}\sum_{j: x_{j} = i} q_{j} c^{k}_{j}(\vect{x})
	+ \lambda_{k}\sum_{j: x^{*}_{j} = i} q_{j} c^{k}_{j}(\vect{x}^{*})
\end{align*}
where the second inequality is due to Lemma~\ref{lem:psi}(iv).
Using the smooth argument, the lemma follows. In the following, we recursively define
the sequences $(\mu_{k})_{k \geq 1}$ and $(\lambda_{k})_{k \geq 1}$ with 
the required properties.

For the base case where $k=1$, Inequality~(\ref{ineq:ccoord}) becomes
\begin{align*} \label
\sum_{j: x^{*}_{j} = i} p_{ij} L(\vect{x}(i) \cup \{j\}) 
\leq \mu_{1} \Psi_{2}(\vect{x}(i)) 
	+ \lambda_{1} \Psi_{2}(\vect{x}^{*}(i)) 
\end{align*}
The left-handside is upper-bounded by $L(\vect{x}(i))L(\vect{x}^{*}(i)) + L^{2}(\vect{x}^{*}(i))$. Since
$\left(\frac{L(x(i))}{2} - L(x^*(i)\right)^2>0$, by computation we have $L(\vect{x}(i))L(\vect{x}^{*}(i)) + L^{2}(\vect{x}^{*}(i)) \leq 1/4L^{2}(\vect{x}(i))+ 2L^{2}(\vect{x}^{*}(i))$.  So, by applying  Lemma~\ref{lem:psi}(i)  we obtain

 $$ \sum_{j: x^{*}_{j} = i} p_{ij} \Psi_{k}(\vect{x}(i) \cup \{j\})  \leq 1/4L^{2}(\vect{x}(i))+ 2L^{2}(\vect{x}^{*}(i)) \leq 1/4\Psi_{2}(\vect{x}(i)) + 2\Psi_{2}(\vect{x}^{*}(i))  $$

Choosing $\mu_{1} = 1/4$ and $\lambda_{1} = 2$, Inequality \eqref{ineq:ccoord}  follows.

Suppose the two sequences have been defined until $k-1$. We have
\begin{align*}
\sum_{j: x^{*}_{j} = i} & p_{ij} \Psi_{k}(\vect{x}(i) \cup \{j\}) 
= \sum_{j: x^{*}_{j} = i} p_{ij}\Psi_{k}(\vect{x}(i)) + p_{ij}kp_{ij}\Psi_{k-1}(\vect{x}(i) \cup \{j\}) \\
&\leq L(\vect{x}^{*}(i)) \Psi_{k}(\vect{x}(i)) + kL(\vect{x}^{*}(i)) \sum_{j: x^{*}_{j} = i} p_{ij}\Psi_{k-1}(\vect{x}(i) \cup \{j\}) \\
&\leq L(\vect{x}^{*}(i)) \Psi_{k}(\vect{x}(i)) 
		+ \frac{k}{k-1}L(\vect{x}^{*}(i))\left[\mu_{k-1}\Psi_{k}(\vect{x}(i)) + \lambda_{k-1}\Psi_{k}(\vect{x}^{*}(i)) \right] \\
&\leq \left(1 + \frac{k\mu_{k-1}}{k-1}\right)L(\vect{x}^{*}(i))\Psi_{k}(\vect{x}(i))
		+ \frac{k\lambda_{k-1}}{k-1}L(\vect{x}^{*}(i))\Psi_{k}(\vect{x}^{*}(i)) \\
&\leq \frac{k\mu_{k-1} + k -1}{k-1} \left[\frac{1}{2(k+1)}\Psi_{k}(\vect{x}(i))^{\frac{k+1}{k}} + \frac{2^{k}k^{k}}{k+1}L^{k+1}(\vect{x}^{*}(i))\right]
		+ \frac{k\lambda_{k-1}}{k-1}L(\vect{x}^{*}(i))\Psi_{k}(\vect{x}^{*}(i))	\\
&\leq \frac{k\mu_{k-1} + k -1}{2(k-1)(k+1)} \Psi_{k+1}(\vect{x}(i))
		+ \left(\frac{k\lambda_{k-1}}{k-1} + \frac{(k\mu_{k-1} + k -1)2^{k}k^{k}}{(k+1)(k-1)}\right) \Psi_{k+1}(\vect{x}^{*}(i))		
\end{align*}
where the equality is due to Lemma~\ref{lem:psi}(iii);
the second inequality is due to the induction hypothesis; the fourth one is by applying
inequality~(\ref{ineq:a-b}) for $L(\vect{x}^{*}(i))$ and $\Psi_{k}(\vect{x}(i))^{1/k}$; and the last inequality is 
due to parts (i) and (ii) of Lemma~\ref{lem:psi} (for any set $A$ of non-negative reals, 
$\Psi_{k+1}(A) \geq \Psi_{k}(A) \cdot \Psi_{k}(A)^{1/k} \geq \Psi_{k}(A)L(A)$).

Choosing the sequences $\mu_{k} = k/(k+1)$ and $\lambda_{k} = 2^{k+1}k^{k+1}$ for $k \geq 2$.
The sequences satisfy:
\begin{align*}
	\frac{\mu_{k}}{k} &=  \frac{k\mu_{k-1} + k -1}{2(k-1)(k+1)} \\ 
	\frac{\lambda_{k}}{k} &\geq \frac{k\lambda_{k-1}}{k-1} + \frac{(k\mu_{k-1} + k -1)2^{k}k^{k}}{(k+1)(k-1)} \\
	\frac{\lambda_{k}}{1-\mu_{k}} &\leq 2^{k+1}(k+1)^{k+2}
\end{align*} 
Hence, the lemma follows. 
\end{proof}

\begin{theorem}[\cite{Caragiannis:Efficient-coordination-mechanisms}]
The PoA of the game under policy $\textsf{CCOORD}$ is
$O(\log^{2}m)$ by choosing $k = \log m$.
\end{theorem}
\begin{proof}
Let $\vect{y}^{*}$ be an optimal assignment and $\vect{x}^{*}$ be an $m$-efficient 
assignment with property of Lemma~\ref{lem:m-efficient}. Then, for any Nash
equilibrium $\vect{x}$, we have

\begin{align*}
L(\vect{x}) \max_{i}   \Psi_{k}(\vect{x}(i))   & 
\leq \sum_{i} L(\vect{x}(i))\Psi_{k}(\vect{x}(i))
= \sum_{i}\sum_{j: x_{j} = i} q_{j} c^{k}_{j}(\vect{x}) \\
 &   \leq 2^{k+1}(k+1)^{k+2}\sum_{i}\sum_{j: x^{*}_{j} = i} q_{j} c^{k}_{j}(\vect{x}^{*}) = 2^{k+1}(k+1)^{k+2}\sum_{i} L(\vect{x}^{*}(i))\Psi_{k}(\vect{x}^{*}(i)) \\
 &  
\leq 2^{k+1}(k+1)^{k+2} k! \sum_{i} L(\vect{x}^{*}(i))^{k+1}   \\ &  \leq 2^{k+1}(k+1)^{k+2}k! m L^{k+1}(\vect{x}^{*}) 
\leq 4^{k+1}(k+1)^{k+2}k! m L^{k+1}(\vect{y}^{*}) 
\end{align*}
where the second and the third inequalities are due to Lemma~\ref{lem:ccoord}
and Lemma~\ref{lem:psi}(i). Therefore, 
$\max_{i}\Psi_{k}(\vect{x}(i)) \leq 4^{k+1}(k+1)^{k+2}k! m L^{k+1}(\vect{y}^{*})$
since $L(\vect{x}) \geq L(\vect{y}^{*})$.

Hence, $C(\vect{x}) \leq m^{1/k}\max_{i}\Psi(\vect{x}(i))^{1/k} 
 \leq O\left((m^{2}k!)^{1/k} (k+1)^{(k+2)/k}\right) L(\vect{x}^{*})$.
Choosing $k = \log m - 1$, the theorem follows.
\end{proof}

\end{document}